\definecolor{corlinks}{RGB}{100,0,100}
\definecolor{cormenu}{RGB}{100,0,100}
\definecolor{corurl}{RGB}{100,0,100}
\declaretheoremstyle[
spaceabove=\topsep, spacebelow=\topsep,
headfont=\normalfont\bfseries,
notefont=\bfseries, notebraces={}{},
bodyfont=\normalfont\itshape,
postheadspace=0.5em,
name={\ignorespaces},
numbered=no,
headpunct=.]
{mystyle}
\newtheorem{theorem}{Theorem}[section]
\newtheorem{problem}[theorem]{Problem}
\newtheorem{lemma}[theorem]{Lemma}
\newtheorem{corollary}[theorem]{Corollary}
\newtheorem{claim}[theorem]{Claim}
\theoremstyle{definition}
\newtheorem{definition}[theorem]{Definition}
\newenvironment{claimproof}
{\proof}
{\endproof}
\DeclareMathOperator{\io}{\mathrm{i.o.}\text{-}}
\newcommand{\MCSP}{\mathrm{MCSP}}
\newcommand{\calA}{\mathcal{A}}
\newcommand{\calC}{\mathcal{C}}
\newcommand{\calD}{\mathcal{D}}
\newcommand{\calI}{\mathcal{I}}
\newcommand{\calP}{\mathcal{P}}
\newcommand{\N}{\mathbb{N}}
\renewcommand{\epsilon}{\varepsilon}
\newcommand{\dt}{\mathrm{dt}}%
\renewcommand{\poly}{\mathrm{poly}}
\def\colorful{1}
\definecolor{color1}{RGB}{46,134,193}
\newcommand{\gpt}[1]{{\gptnote{#1}}}
\newcommand{\eps}{\varepsilon}
\newclass{\QuasiP}{QuasiP}
\newclass{\BPSUBEXP}{BPSUBEXP}
\newclass{\PostBPE}{PostBPE}
\newclass{\PostBPP}{PostBPP}
\newclass{\PostBPTIME}{PostBPTIME}
\newclass{\EXPH}{EXPH}
\newclass{\luCKT}{lu\text{-}CKT}
\newclass{\AME}{AM\text{-}E}
\newclass{\MATIME}{MATIME}
\newclass{\FMA}{FMA}
\newclass{\ZPEXP}{ZPEXP}
\newclass{\Socratic}{Socratic}
\newclass{\PTIME}{PTIME}
\newcommand{\MissingString}{\textnormal{\textsc{Missing-String}}\xspace}
\newcommand{\XorMissingString}{\textnormal{\textsc{XOR-Missing-String}}\xspace}
\newcommand{\pr}{\mathrm{pr}\text{-}}
\renewcommand{\MAE}{\MA_\E}
\newcommand{\RobMAE}{\mathrm{Rob}_{h(n)}\text{-}\MA_\E}
\newcommand{\defn}[1]{\emph{\boldmath\textbf{#1}}}
\xpatchcmd\thmt@restatable{%
\csname #2\@xa\endcsname\ifx\@nx#1\@nx\else[{#1}]\fi
}{%
\ifthmt@thisistheone
\csname #2\@xa\endcsname\ifx\@nx#1\@nx\else[{#1}]\fi
\else
\csname #2\@xa\endcsname[{Restated}]
\fi}{}{}
\def\Anonymity{0}
\title{New Algebrization Barriers to Circuit Lower Bounds \\via Communication Complexity of Missing-String}
\author{
    Lijie Chen\footnote{University of California, Berkeley. \texttt{\href{mailto:lijiechen@berkeley.edu}{lijiechen@berkeley.edu}}.}
    \and
    Yang Hu\footnote{Institute for Interdisciplinary Information Sciences, Tsinghua University. \texttt{\href{mailto:y-hu22@mails.tsinghua.edu.cn}{y-hu22@mails.tsinghua.edu.cn}}.}
    \and
    Hanlin Ren\footnote{Institute for Advanced Study. \texttt{\href{mailto:h4n1in.r3n@gmail.com}{h4n1in.r3n@gmail.com}}.}
}
\date{\today}
\begin{document}

\pagenumbering{gobble}

\maketitle

\begin{abstract}

    The \emph{algebrization barrier}, proposed by Aaronson and Wigderson (STOC '08, ToCT '09), captures the limitations of many complexity-theoretic techniques based on arithmetization. Notably, several circuit lower bounds that overcome the relativization barrier (Buhrman--Fortnow--Thierauf, CCC '98; Vinodchandran, TCS '05; Santhanam, STOC '07, SICOMP '09) remain subject to the algebrization barrier. 

    In this work, we establish several new algebrization barriers to circuit lower bounds by studying the communication complexity of the following problem, called $\textsc{XOR-Missing-String}$: For $m < 2^{n/2}$, Alice gets a list of $m$ strings $x_1, \dots, x_m\in\{0, 1\}^n$, Bob gets a list of $m$ strings $y_1, \dots, y_m\in\{0, 1\}^n$, and the goal is to output a string $s\in\{0, 1\}^n$ that is not equal to $x_i\oplus y_j$ for any $i, j\in [m]$. 
    \begin{enumerate}
        \item We construct an oracle $A_1$ and its multilinear extension $\widetilde{A_1}$ such that ${\sf PostBPE}^{\widetilde{A_1}}$ has linear-size $A_1$-oracle circuits on infinitely many input lengths. That is, proving ${\sf PostBPE}\not\subseteq \text{i.o.-}{\sf SIZE}[O(n)]$ requires non-algebrizing techniques. This barrier follows from a ${\sf PostBPP}$ communication lower bound for $\textsc{XOR-Missing-String}$. This is in contrast to the well-known algebrizing lower bound ${\sf MA_E}\, (\subseteq {\sf PostBPE}) \not\subseteq \P/_\poly$. %

        \item We construct an oracle $A_2$ and its multilinear extension $\widetilde{A_2}$ such that ${\sf BPE}^{\widetilde{A_2}}$ has linear-size $A_2$-oracle circuits on all input lengths. Previously, a similar barrier was demonstrated by Aaronson and Wigderson, but in their result, $\widetilde{A_2}$ is only a \emph{multiquadratic} extension of $A_2$. Our results show that communication complexity is more useful than previously thought for proving algebrization barriers, as Aaronson and Wigderson wrote that communication-based barriers were ``more contrived''. This serves as an example of how $\textsc{XOR-Missing-String}$ forms new connections between communication lower bounds and algebrization barriers.

        \item Finally, we study algebrization barriers to circuit lower bounds for $\sf MA_E$. Buhrman, Fortnow, and Thierauf proved a \emph{sub-half-exponential} circuit lower bound for $\sf MA_E$ via algebrizing techniques. Toward understanding whether the half-exponential bound can be improved, we define a natural subclass of $\sf MA_E$ that includes their hard $\sf MA_E$ language, and prove the following result: For every \emph{super-half-exponential} function $h(n)$, we construct an oracle $A_3$ and its multilinear extension $\widetilde{A_3}$ such that this natural subclass of ${\sf MA}_{\sf E}^{\widetilde{A_3}}$ has $h(n)$-size $A_3$-oracle circuits on all input lengths. This suggests that half-exponential might be the correct barrier for ${\sf MA_E}$ circuit lower bounds w.r.t.~algebrizing techniques. 
    \end{enumerate}
\end{abstract}

\newpage

\tableofcontents

\newpage

\newpage

\pagenumbering{arabic}

\section{Introduction}\label{sec: intro}

Proving unconditional circuit lower bounds is one of the major challenges in theoretical computer science, with the holy grail of proving $\NP\not\subseteq \P/_\poly$. \gpt{Notation: Prefer $\P/\poly$ (no underscore). Also consider introducing \EXPH and using it when referring to the exponential-time hierarchy.} Unfortunately, our progress toward this goal is barely satisfactory, as it is even open to prove a super-polynomial size lower bound for huge, exponential-time classes such as $\NEXP$. Only for even larger complexity classes such as $\Sigma_2\EXP$~\cite{DBLP:journals/iandc/Kannan82}, which are in (or beyond) the second level of exponential-time hierarchy, are super-polynomial lower bounds known. %

Our lack of progress in proving circuit lower bounds is partially explained by a series of \emph{barrier results} such as relativization~\cite{DBLP:journals/siamcomp/BakerGS75}, natural proofs~\cite{RazborovR97}, and algebrization~\cite{DBLP:journals/toct/AaronsonW09}. Among these barriers, relativization and algebrization are particularly relevant to lower bounds against \emph{unrestricted circuits} for \emph{large} complexity classes. For example, Wilson~\cite{DBLP:journals/jcss/Wilson85} constructed an oracle world where $\P^\NP$ has linear-size circuits, which explains our inability to prove fixed-polynomial size lower bounds for $\P^\NP$.

\paragraph{Missing-String: a duality-based approach to relativizing circuit lower bounds.} Previous relativization barriers for circuit lower bounds are proved in an \emph{ad hoc} fashion, which involves carefully analyzing the interaction between the oracle and the machines to be diagonalized~\cite{DBLP:journals/jcss/Wilson85, BuhrmanFT98,Aaronson06}. A recent paper by Vyas and Williams~\cite{VyasWilliams23} introduced the $\MissingString$ problem as a systematic approach to such relativization barriers:

\begin{problem}[$\MissingString(n, m)$]
    Let $n, m$ be such that $m < 2^n$. Given (query access to) a list of length-$n$ strings $x_1, x_2, \dots, x_m$, output a length-$n$ string that is not in this list.
\end{problem}

The query complexity of $\MissingString$ captures relativizing circuit lower bounds in the following sense: relativization barriers to proving $\calC\text{-}\EXP\not\subseteq\P/_\poly$ are essentially $\calC^\dt$ lower bounds for $\MissingString$. (Here, $\calC\text{-}\EXP$ is the exponential-time version of $\calC$ and $\calC^\dt$ means the decision-tree version of $\calC$.) This connection was made explicit in~\cite{VyasWilliams23}, who demonstrated an equivalence between relativization barriers to exponential lower bounds for $\Sigma_2\E$ and the non-existence of small depth-$3$ circuits for $\MissingString$.\footnote{$\E = \DTIME[2^{O(n)}]$ denotes \emph{single-exponential} time and $\EXP = \DTIME[2^{n^{O(1)}}]$ denotes \emph{exponential time}; classes such as $\Sigma_2\E$ and $\Sigma_2\EXP$ are defined analogously. Exponential time and single-exponential time are basically interchangeable in the context of super-polynomial lower bounds by a padding argument.}

Intriguingly, $\MissingString$ connects \emph{circuit lower bounds} with \emph{circuit upper bounds} and provides a \emph{duality}-based approach to both: Decision tree upper bounds for $\MissingString$ imply relativizing circuit lower bounds, and decision tree lower bounds for $\MissingString$ imply relativized worlds with circuit upper bounds (i.e., relativization barrier to circuit lower bounds). Besides providing a clean and systematic method for relativization barriers, this ``algorithmic'' perspective has indeed made progress in circuit lower bounds: By designing algorithms for the \emph{Range Avoidance} problem~\cite{KKMP21, DBLP:conf/focs/Korten21, DBLP:conf/focs/RenSW22}, which is the ``white-box'' version of $\MissingString$,\footnote{In the Range Avoidance problem, we are given the description of a circuit $C: \{0, 1\}^n \to \{0, 1\}^{n+1}$ and our goal is to output a string $y \in \{0, 1\}^{n+1}$ that is not in the range of $C$. It is easy to see that \emph{relativizing} algorithms for Range Avoidance are equivalent to decision trees for $\MissingString$.} recent work~\cite{CHR24, Li24} proved an exponential-size, relativizing lower bound for the complexity class $\Sigma_2\E$ (which also implies a quasi-polynomial size depth-$3$ circuit upper bound for $\MissingString$, settling the question in~\cite{VyasWilliams23}).

\paragraph{The quest of algebrization.}

However, the relativization barrier does not capture many circuit lower bounds that follow from \emph{nonrelativizing} results such as $\IP = \PSPACE$~\cite{LundFKN92, Shamir92} and $\MIP = \NEXP$~\cite{DBLP:journals/cc/BabaiFL91, DBLP:journals/cc/BabaiFNW93}, whose proofs are based on nonrelativizing proof techniques such as \emph{arithmetization}. This includes circuit lower bounds for $\PP$~\cite{DBLP:journals/tcs/Vinodchandran05} and $\MA$~\cite{BuhrmanFT98, DBLP:journals/siamcomp/Santhanam09}, both of which are \emph{provably nonrelativizing}~\cite{BuhrmanFT98, Aaronson06}.

To shed light on the limitations of such proof techniques, Aaronson and Wigderson~\cite{DBLP:journals/toct/AaronsonW09} proposed the \emph{algebrization} barrier: a lower bound statement $\calC \not\subseteq \calD$ \emph{algebrizes} if $\calC^{\widetilde A} \not\subseteq \calD^A$ for all oracles $A$ and all low-degree extensions $\widetilde A$ of~$A$. Aaronson and Wigderson showed that many arithmetization-based results indeed algebrize; in particular, $\MA_\E^{\widetilde A} \not\subseteq \P^A/_\poly$ for every oracle $A$ and low-degree extension $\widetilde A$ of $A$~\cite[Theorem 3.17]{DBLP:journals/toct/AaronsonW09}. In fact, all super-polynomial lower bounds against general circuits that we are aware of are algebrizing. \gpt{Claim strength: Consider softening (“to the best of our knowledge”) or adding a citation if available.} Thus, algebrization is a more suitable framework than relativization for capturing the limitations of our current techniques. %

Unfortunately, our understanding of the algebrization barrier remains primitive. For instance, several aspects of the algebrizing lower bound $\MA_\E \not\subseteq \P/_\poly$~\cite{BuhrmanFT98} remain poorly understood even with respect to algebrizing techniques:
\begin{itemize}
    \item This lower bound only holds \emph{infinitely-often}. That is, \cite{BuhrmanFT98} only exhibited a language $L \in \MA_\E$ where there are \emph{infinitely many} input lengths on which $L$ has high circuit complexity. Recently, the relativizing infinitely-often lower bound for $\Sigma_2\E$ was improved to almost everywhere by~\cite{Li24}, and it is tempting to ask whether the same improvement can be made with respect to this lower bound. Does $\MA_\E$ require large circuit complexity on \emph{every} input length? If so, can we show this via algebrizing proof techniques? %
    \item This lower bound is only \emph{sub-half-exponential}. Roughly speaking, a function $h$ is \emph{half-exponential} if $h(h(n)) \approx 2^n$. Half-exponential bounds appear naturally in many win-win analyses in complexity theory~\cite{MiltersenVW99}, and \cite{BuhrmanFT98} is no exception---it is only known how to prove a size-$h(n)$ lower bound for $\MA_\E$ when $h$ is smaller than half-exponential. The recent ``iterative win-win paradigm''~\cite{DBLP:conf/focs/CLORS23, CHR24, CLL25} provides new techniques for overcoming the half-exponential barrier, which has indeed improved the circuit lower bound for $\Sigma_2\EXP$ to a near-maximum ($2^n/n$) one~\cite{CHR24}. Can we use similar techniques to prove an exponential size lower bound for $\MA_\E$? If so, can we show this via algebrizing proof techniques?
\end{itemize}

Our lack of understanding on algebrization barriers raises the following question: 

\begin{quote}
	\centering
	Is there a \emph{duality-based approach} to algebrization barriers?
\end{quote}

In particular, can we establish algebrization barriers via lower bounds for $\MissingString$? It follows from~\cite{DBLP:journals/toct/AaronsonW09} that \emph{communication} lower bounds for $\MissingString$ imply algebrization barriers to circuit lower bounds.\footnote{In fact, lower bounds for $\MissingString$ in the \emph{algebraic query model} suffices. However, we find the algebraic query model somewhat counter-intuitive to work with (for example, the input to the query algorithms consists of a $\MissingString$ instance along with its \emph{low-degree extension}). The ``transfer principle''~\cite[Section 4.3]{DBLP:journals/toct/AaronsonW09} allows us to simulate such query algorithms by communication protocols, hence we choose to study communication complexity.} Hence, a na\"ive attempt is to prove communication lower bounds for $\MissingString$ and translate them into algebrization barriers. Unfortunately, it turns out that $\MissingString$ admits an efficient deterministic communication protocol by a simple binary search.\footnote{Suppose that $m \le 2^n/10$, Alice has inputs $x_1, x_2, \dots, x_m \in \{0, 1\}^n$ and Bob has inputs $y_1, y_2, \dots, y_m \in \{0, 1\}^n$. For each string $s \in \{0, 1\}^n$, it costs only $O(\log m)$ bits of communication to obtain the value $f(s) := |\{i: x_i \le s\}| + |\{i: y_i \le s\}|$. Hence, we can use binary search to find two (lexicographically) adjacent strings $\mathrm{pred}(s)$ and $s$ such that $f(\mathrm{pred}(s)) = f(s)$ by communicating $O(n\log m)$ bits. Clearly, $s$ is not in the set $\{x_1, \dots, x_m, y_1, \dots, y_m\}$.} %

The main conceptual contribution of this paper is the following communication problem:

\begin{problem}[{$\XorMissingString(n, m)$}]
    Let $n, m$ be such that $m < 2^{n/2}$. Alice receives $m$ strings $x_1, x_2, \dots, x_m \in \{0, 1\}^n$ and Bob receives $y_1, y_2, \dots, y_m \in \{0, 1\}^n$. Their goal is to communicate with each other and output an $n$-bit string $s$ such that $s$ is not equal to $x_i\oplus y_j$ for every $1\le i, j\le m$. Here $\oplus$ denotes the bit-wise XOR operation over strings.
\end{problem}

We show that $\XorMissingString$ is hard for various communication complexity models, and transfer our communication complexity lower bounds to algebrization barriers. We now discuss our results in more detail.

\subsection{Barriers to \texorpdfstring{$\pr\PostBPE$}{pr-PostBPE} Circuit Lower Bounds}

Our first result is an algebrization barrier to proving $\pr\PostBPE \not\subseteq \io\SIZE[O(n)]$, i.e., an almost-everywhere circuit lower bound for $\pr\PostBPE$. Here, $\pr\PostBPE$ is the class of promise problems decided by a probabilistic exponential-time machine with \emph{postselection}~\cite{DBLP:journals/siamcomp/HanHT97}: conditioned on some event (which may happen with exponentially small probability), the machine outputs the correct answer with high probability. Postselection is a powerful computational resource: $\PostBPP$ contains $\MA$ (thus $\NP$)~\cite{DBLP:journals/siamcomp/HanHT97} and $\PostBQP$ (polynomial-time \emph{quantum} computation with postselection) equals $\PP$~\cite{aaronson2005quantum}.\footnote{For example, a $\PostBPP$ machine can solve $\NP$-complete problems as follows. With exponentially small probability, output ``No'' and halt. If this does not happen, choose an $\NP$ witness uniformly at random and ``kill yourself'' if this witness is invalid (that is, we \emph{postselect} on the event that our witness is valid). If the algorithm survives, it outputs ``Yes.'' Conditioned on survival, with high probability, our algorithm outputs ``Yes'' on Yes instances and outputs ``No'' on No instances.} \gpt{Style: Consider avoiding colloquial “kill yourself”; e.g., “abort” or “reject if the witness is invalid.”}
{
\def\ImplyMA{1}
\def\Oracle{A_1}
\begin{restatable}{theorem}{ThmioPostBPPBarrier}\label{thm:io_postbpp_barrier}
    There exists an oracle $\Oracle$ and its multilinear extension $\widetilde{\Oracle}$ such that
    \[
    \pr\PostBPE^{\widetilde{\Oracle}} \subseteq \io\SIZE^{\Oracle}[O(n)].
    \]
    \ifnum\ImplyMA=1
    In particular, this also implies
    \[
    \pr\MA_\E^{\widetilde{\Oracle}} \subseteq \io\SIZE^{\Oracle}[O(n)].
    \]
    \fi
\end{restatable}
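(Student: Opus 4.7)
The plan is to establish the theorem in three steps: (i) prove a $\PostBPP$ communication lower bound for $\XorMissingString$, (ii) apply the Aaronson--Wigderson transfer principle to convert this into an algebrization statement, and (iii) construct the oracle $A_1$ by a standard diagonalization.

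For step (i), I plan to show that for $m$ near $2^{n/2}$, any $\PostBPP$ communication protocol for $\XorMissingString(n,m)$ requires at least $n^{\omega(1)}$ bits of communication. The strategy is to consider a uniform random input distribution, on which the XOR-set $\{x_i\oplus y_j\}$ covers a constant fraction of $\{0,1\}^n$ with high probability, and to show that a low-communication $\PostBPP$ protocol induces only a small family of output distributions (parameterized by the transcript and the postselection event). Since $\PostBPP$ is a powerful model (containing $\BPP$ and $\MA$), standard discrepancy- or corruption-based arguments do not immediately apply; instead, I would use a Fourier-analytic argument that exploits the XOR structure of the problem, reduce from a known hard problem such as an XOR-version of Disjointness, or use approximate polynomial degree / sign-rank techniques adapted to postselection.

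For step (ii), I would invoke the Aaronson--Wigderson transfer principle (\cite[Section~4.3]{DBLP:journals/toct/AaronsonW09}): any oracle algorithm making $q$ queries to the multilinear extension $\widetilde{A}$ of an oracle $A=A_L\cup A_R$ that is split between two players can be simulated by a communication protocol where Alice holds $A_L$ and Bob holds $A_R$, at a cost of $O(q\log|\mathbb{F}|)$ bits. For step (iii), I would construct $A_1$ by diagonalizing over all $\pr\PostBPE$ machines $M_1,M_2,\ldots$ and reserving for each $M_k$ an infinite sequence of target input lengths. On each such length $n$, the oracle $A_1$ is extended so that (a) the intended output of $M_k^{\widetilde{A_1}}$ on all length-$n$ inputs is encoded in a small region of $A_1$ that a linear-size circuit can read off via $O(n)$ direct $A_1$-queries, and (b) any deviation of $M_k^{\widetilde{A_1}}$ from this intended behavior would, via the transfer principle, force a $\PostBPP$ protocol to solve an embedded $\XorMissingString$ instance through the $\widetilde{A_1}$-interface, contradicting the lower bound from step (i). A fixed-point / oracle-stage-construction argument (standard in relativization-barrier proofs) ensures that the encoded behavior is consistent with $M_k$'s actual computation.

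I expect the main technical difficulty to lie in step (i), the $\PostBPP$ communication lower bound. $\PostBPP$ rules out many standard lower bound tools, and $\XorMissingString$ is a search problem whose set of valid outputs is itself a constant fraction of $\{0,1\}^n$, so one must rule out the protocol producing \emph{any} string outside the XOR-set rather than a single fixed target. Combining the XOR structure with $\PostBPP$-specific techniques (e.g., approximation by polynomials with bounded weight, or sign-rank arguments in the spirit of Razborov--Sherstov) while carefully tracking the search nature of the problem will be the core technical challenge.
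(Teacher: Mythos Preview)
Your three-step outline (communication lower bound, transfer principle, oracle diagonalization) matches the paper's structure. However, steps (i) and (iii) contain real gaps.

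\textbf{Step (i): wrong toolkit and missing error regime.} The paper's $\PostBPP$ lower bound is not proved via Fourier analysis, sign-rank, or a Disjointness-style reduction; those tools are poorly adapted to $\PostBPP$ search problems. The actual argument is elementary combinatorics. The key lemma is: for any rectangle $R\subseteq\{0,1\}^{nm}\times\{0,1\}^{nm}$ of density $\ge 2^{-m+2}$ and any candidate answer $s\in\{0,1\}^n$, there is a subrectangle of density $\ge 2^{-2n-2}|R|$ on which $s$ is \emph{never} a valid solution (proved by a pigeonhole/counting argument on how often each string appears in Alice's and Bob's lists). One then writes a $\PostBPP$ protocol as a distribution over labeled rectangles (an approximate majority cover) and does a weighted counting of correct vs.\ incorrect labels. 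More importantly, you omit the quantitative form of the lower bound: the paper proves that achieving error $\le 2^{-5n}$ requires communication $\Omega(m)$. This error threshold is essential, because the trivial protocol that outputs a uniformly random string already has error $m^2/2^n \approx 2^{-\Theta(n)}$; a constant-error lower bound would be vacuous. Your proposal never engages with this.

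\textbf{Step (iii): the logic is inverted, and the promise case needs care.} The oracle does not ``encode the intended output of $M_k$.'' Rather, at a designated input length the oracle encodes an $\XorMissingString(n_i,m_i)$ instance $(X_i,Y_i)$. Via the transfer principle, the machine $M_k^{\widetilde A}$ on inputs of length $\log n_i$ becomes a $\PostBPP$ protocol of cost $\poly(n_i)\ll m_i$ that outputs an $n_i$-bit ``truth table.'' The low-error lower bound from step (i) says that for some choice of $(X_i,Y_i)$, this truth table is a non-solution (i.e., equals $x_a\oplus y_b$ for some indices $a,b$) with probability $\ge 2^{-5n_i}$. One then union-bounds against the failure of promise-amplification on the ``good'' inputs (those where $M_k$ satisfies the $\PostBPP$ semantics) to conclude that some fixed non-solution $\alpha$ is consistent with $M_k$ on all good inputs. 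Hardwiring the indices $a,b$ gives an $O(\log m_i)=O(\log n_i)$-size $A$-oracle circuit for $\alpha$. This last step---handling the \emph{promise} version, where some inputs may be outside the promise---is exactly why the paper needs the lower bound against arbitrary low-error protocols rather than just pseudodeterministic ones; your ``fixed-point/consistency'' sketch does not address this.
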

}
The proof of $\pr\MA_\E\not\subseteq\P/_\poly$ is algebrizing~\cite{BuhrmanFT98,DBLP:journals/siamcomp/Santhanam09,DBLP:journals/toct/AaronsonW09}. Our result implies that improving this circuit lower bound to almost-everywhere requires non-algebrizing techniques.

\autoref{thm:io_postbpp_barrier} follows from a $\PostBPP$ communication lower bound for $\XorMissingString$:

\begin{restatable}{theorem}{LemmaXORMSLB}
    \label{lem:xor_ms_lb}
    Let $n\ge 1$ and $20n\le m<2^{n/2}$ be integers. Any $\PostBPP$ communication protocol that solves $\XorMissingString(n,m)$ with error $\le 2^{-5n}$ must have communication complexity $\Omega(m)$.
\end{restatable}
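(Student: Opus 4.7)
My plan is to establish the $\PostBPP$ communication lower bound via a combinatorial covering argument combined with a Newman-style reduction of shared randomness, where the tiny tolerance $\epsilon=2^{-5n}$ is crucially exploited. The core combinatorial lemma is a \emph{strict-valid rectangle bound}: for any rectangle $R=A\times B\subseteq(\{0,1\}^n)^m\times(\{0,1\}^n)^m$ paired with an output $s\in\{0,1\}^n$ with $s\notin X+Y$ for every $(X,Y)\in R$, the marginal supports $T_A=\bigcup_{X\in A}\{x_1,\ldots,x_m\}$ and $T_B=\bigcup_{Y\in B}\{y_1,\ldots,y_m\}$ are forced to satisfy $T_A\cap(s\oplus T_B)=\emptyset$, hence $|T_A|+|T_B|\le 2^n$ and $|T_A|\cdot|T_B|\le 2^{2n-2}$ by AM--GM. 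Since $|A|\le|T_A|^m$ and $|B|\le|T_B|^m$, this yields $|R|\le 2^{(2n-2)m}$.

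The technical heart is an \emph{almost-valid extension} of this bound to rectangles with error at most $\epsilon=2^{-5n}$: any such $(R,s)$ admits a strictly-valid sub-rectangle of density $1-o(1)$, so $|R|\le 2^{(2n-2)m+O(1)}$. For product rectangles whose marginals are uniform on their supports, the per-pair constraint $(\mu_i^A*\mu_j^B)(s)\le\epsilon$ becomes $|T_A^i\cap(s\oplus T_B^j)|\le \epsilon\cdot|T_A^i|\cdot|T_B^j|\le 2^{-3n}$, forcing emptiness by integrality and delivering the strict-valid conclusion directly; for general rectangles one first passes to a large product / near-uniform sub-rectangle via an entropy-regularity decomposition, then applies the per-pair argument. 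With this extension in hand, a Newman-style reduction for $\PostBPP^{cc}$ protocols at error $\epsilon$ (via Chernoff over $\poly(n,1/\epsilon)=2^{O(n)}$ independent samples, designed to preserve both the selection event and the conditional error up to constant factors) lets us assume the shared randomness ranges over only $2^{O(n)}$ values. The $\PostBPP$ condition $\Pr_r[\text{selected}]>0$ for every input then forces the union of selected rectangles over all $(r,\pi)$ to cover the full input space of size $2^{2nm}$; since each such rectangle has size at most $2^{(2n-2)m+O(1)}$, we obtain $2^{O(n)+c}\cdot 2^{(2n-2)m+O(1)}\ge 2^{2nm}$, i.e., $c\ge 2m-O(n)=\Omega(m)$ whenever $m\ge 20n$.

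The main obstacle I anticipate is the almost-valid extension when the marginals $\mu_i^A,\mu_j^B$ are highly non-uniform---they may have large support but be concentrated on a few ``spikes'', and in that regime an adversarial rectangle could host a tiny bad pocket of $(X,Y)$ with $s\in X+Y$ that does not excise cleanly along rows and columns. My preferred route is an entropy-regularity decomposition of $A$ and $B$ into a small number of near-uniform product sub-families, each handled by the uniform-marginals argument above; a more direct alternative is to symmetrize the input distribution by averaging over a random shift $(X,Y)\mapsto(X\oplus r,Y\oplus r)$, which leaves $X+Y$ invariant and regularizes the marginals, at the cost of a polynomially weaker dependence on $\epsilon$ that is still comfortably absorbed by the $2^{-5n}$ tolerance.
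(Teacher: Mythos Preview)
Your strict-valid rectangle bound is correct and clean: if $s$ solves every $(X,Y)\in A\times B$ then indeed $T_A\cap(s\oplus T_B)=\emptyset$, so $|T_A|\cdot|T_B|\le 2^{2n-2}$ and $|R|\le 2^{(2n-2)m}$. The argument breaks, however, at the covering step. The $\PostBPP$ guarantee is \emph{per input}, not \emph{per rectangle}: for each $(X,Y)$, a $(1-\epsilon)$-fraction of the labeled rectangles containing it carry a valid label. This does \emph{not} force any individual rectangle $(R,s)$ to have error $\le\epsilon$ on $R$; a rectangle whose label is wrong on half of $R$ is perfectly compatible with the protocol's semantics, provided each bad input inside it is also covered by many good rectangles. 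So the step ``each such rectangle has size at most $2^{(2n-2)m+O(1)}$'' is unjustified---your almost-valid bound only applies to rectangles with low internal error, and those need not cover the space.

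The paper repairs exactly this gap by proving the \emph{opposite} direction of your lemma. \autoref{lem:rect_error_prob_lb} shows that every rectangle of size $\ge 2^{2nm-m+2}$ contains, for \emph{any} label $s$, a subrectangle of density $\ge 2^{-2n-2}$ on which $s$ is \emph{wrong}. Equivalently, every large rectangle has internal error at least $2^{-2n-2}$, regardless of how the protocol uses it. This quantitative error lower bound (which is strictly larger than the protocol's tolerance $2^{-5n}$) is what makes a double-count on $C_{\text{correct}}=\sum_{(R,s)}|\{(X,Y)\in R:s\text{ solves }(X,Y)\}|$ non-vacuous: the per-input guarantee gives $C_{\text{correct}}\ge(1-2^{-5n})S$, while the lemma caps the large-rectangle contribution at $(1-2^{-2n-2})S$, forcing $2^{\Omega(m)}$ small rectangles. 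Your strict-valid lemma gives only the dichotomy ``error $=0\Rightarrow$ small'', and your almost-valid extension (even if true) would give ``error $\le 2^{-5n}\Rightarrow$ small'', whose contrapositive yields an error lower bound of merely $2^{-5n}$---exactly matching the tolerance and leaving the double-count vacuous.

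Two secondary issues. The entropy-regularity route to the almost-valid extension is not viable: an arbitrary $A\subseteq(\{0,1\}^n)^m$ need not contain any nontrivial product subset (the diagonal $\{(a,\ldots,a):a\in\{0,1\}^n\}$ has no product subset of size $>1$), so reducing to uniform product marginals loses everything. And the Newman step is both unnecessary and unsound here: unnecessary because in the paper's model (\autoref{def: PostBPPcc}) the complexity already counts the public randomness, so there are at most $2^C$ labeled rectangles directly; unsound because the selection probability in $\PostBPP$ can be as small as $2^{-k}$, and $2^{O(n)}$ Chernoff samples cannot preserve its positivity across all $2^{2nm}$ inputs when $k\gg n$.
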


Assume that $n \ll m \ll 2^{o(n)}$. A trivial protocol for $\XorMissingString$ is to output a uniformly random $n$-bit string, which has communication complexity $O(n)$ and error $m^2/2^n$. Thus, \autoref{lem:xor_ms_lb} states that even if postselection is allowed, if we want to beat the error bound of this trivial protocol ($\approx 2^{-n}$), then the amount of communication needed is close to the maximum ($\Omega(m)$).

Moreover, since the error of any \emph{pseudodeterministic}\footnote{A randomized protocol for a search problem is \emph{pseudodeterministic}~\cite{DBLP:journals/eccc/GatG11} if there is a \emph{canonical} output that is correct and outputted with probability $\ge 2/3$. The trivial protocol for $\XorMissingString$ that outputs a random guess is not pseudodeterministic, since running it two times using different randomness is likely to yield different answers.} communication protocol can be reduced by a $2^{-k}$ factor by repeating the protocol $\Theta(k)$ times and taking the majority answer, \autoref{lem:xor_ms_lb} implies that any pseudodeterministic $\PostBPP$ protocol that solves $\XorMissingString(n, m)$ (with correct probability, say, $2/3$) must have communication complexity $\Omega(m/n)$. This stands in stark contrast to the case without pseudodeterministic constraints, as the trivial protocol is correct with probability $1 - m^2/2^n \gg 2/3$. We also remark that lower bounds against pseudodeterministic $\PostBPP$ communication protocols suffice for constructing an oracle $A$ such that $\PostBPE^{\widetilde A}$ has small $A$-oracle circuits; however, we will need the full power of \autoref{lem:xor_ms_lb} against every low-error protocol to prove that \emph{the promise version} of $\PostBPE^{\widetilde A}$ has small $A$-oracle circuits.

\subsection{Barriers to \texorpdfstring{$\BPE$}{BPE} Circuit Lower Bounds}
Our second result is an algebrization barrier to proving $\BPE \not\subseteq \SIZE[O(n)]$, i.e., an \emph{infinitely-often} circuit lower bound for $\BPE$:
{
\def\Oracle{A_2}
\begin{restatable}{theorem}{ThmBPEBarrier}\label{thm: intro BPE}
    There exists an oracle $\Oracle$ and its multilinear extension $\widetilde{\Oracle}$ such that
    \[\BPE^{\widetilde{\Oracle}} \subseteq \SIZE^{\Oracle}[O(n)].\]
\end{restatable}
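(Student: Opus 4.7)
The plan is to refine the Aaronson--Wigderson construction---which established an analogous barrier, but only for \emph{multiquadratic} extensions---so that it applies to the \emph{multilinear} extension, using the pseudodeterministic $\BPP$ communication lower bound for $\XorMissingString$ implied by \autoref{lem:xor_ms_lb} as the core new ingredient. Since a $\BPE$ machine defines a language (rather than a search problem), its transfer to the communication model is naturally pseudodeterministic, so the $\Omega(m/n)$ lower bound is of just the right strength.

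Concretely, I would enumerate the oracle $\BPE$ machines $M_1, M_2, \dots$ (with $M_i$ of time $\le 2^{c_i n}$ and error $\le 2^{-i}$) and build $A_2$ stage by stage. For each pair $(i, n)$, reserve a block $B_{i, n}$ in $A_2$ at a dedicated input length $\ell_{i, n}$; the $\ell_{i, n}$ are strictly increasing, well-spaced, and chosen so that every sufficiently large length hosts exactly one block---this totality is what yields the conclusion on \emph{every} input length, in contrast to the infinitely-often conclusion of \autoref{thm:io_postbpp_barrier}. Split each block into two halves $B_\alpha$ and $B_\beta$ of size $2^n$; sample $B_\alpha$ uniformly at random and set $B_\beta$ so that $B_\alpha \oplus B_\beta$ equals the truth table $T_{i, n}$ of $M_i^{\widetilde{A_2}}$ at length $n$. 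The $O(n)$-size $A_2$-oracle circuit simulating $M_i$ on input $x$ simply reads $B_\alpha[x]$ and $B_\beta[x]$ and outputs their XOR.

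The main technical content is verifying \emph{self-consistency}: since $M_i^{\widetilde{A_2}}$ queries $\widetilde{A_2}$ and can thereby probe $B_{i, n}$, we need $M_i^{\widetilde{A_2}}(x)$ to depend (with high probability) only on $B_\alpha \oplus B_\beta$, so that the encoded truth table is well-defined and we may set it by a fixed-point argument. I would establish this via the Aaronson--Wigderson transfer principle: viewing Alice's input as $B_\alpha$ and Bob's as $B_\beta$, each query $\widetilde{B}(p, q) = (1 - p)\widetilde{B_\alpha}(q) + p \widetilde{B_\beta}(q)$ costs only $O(\log |\mathbb{F}|)$ bits of communication, so the $\BPE^{\widetilde{A_2}}$ computation of $M_i$ at length $n$ lifts to a pseudodeterministic $\BPP$ protocol in the XOR-structured communication model. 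Choosing block sizes so that $M_i$'s time budget is well below the lower bound $\Omega(m / n)$, any systematic dependence of $M_i^{\widetilde{A_2}}(x)$ on the individual pair $(B_\alpha, B_\beta)$ beyond their XOR would convert---through an adversary who reshuffles $(B_\alpha, B_\beta)$ at fixed XOR---into a pseudodeterministic $\XorMissingString$ protocol of too-small communication, contradicting \autoref{lem:xor_ms_lb}. Hence $T_{i, n}$ is a well-defined function of $B_\alpha \oplus B_\beta$ alone.

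The principal obstacle I foresee is not a single clean step but rather making the \emph{everywhere} construction robust: every length must carry an encoding, so blocks for different $(i, n)$ must coexist on overlapping scales without interfering with one another's self-consistency arguments. I would handle this by spacing the lengths $\ell_{i, n}$ so that multilinear extensions at distinct blocks do not meaningfully interact, filling all non-block bits with independent uniform randomness, and applying the XMS lower bound block-by-block while using the independence of the random portions to control residual cross-block correlations. The delicate point will be confirming that the simultaneous encoding of many blocks does not create a path by which $M_i$ extracts side-information about its own block $B_{i, n}$ through queries to other blocks.
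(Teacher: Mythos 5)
Your construction has the direction of the argument backwards, and this leads to two fatal gaps. The paper does not plant the truth table of $M_i^{\widetilde{A}}$ into the oracle; it plants a \emph{list} of $m_i$ strings on each side, with $m_i$ far exceeding the communication budget, and argues via the $\XorMissingString$ lower bound that the machine's truth table \emph{cannot avoid} the set $\{x\oplus y : x\in X_i,\, y\in Y_i\}$ --- so the truth table must equal some $x\oplus y$, which has an $O(\log m_i)$-size oracle circuit by hardcoding the two indices. Your plan instead sets $B_\beta := B_\alpha \oplus T_{i,n}$, which is circular, and neither of your two repairs works. First, the claimed self-consistency --- that $M_i^{\widetilde{A_2}}(x)$ depends only on $B_\alpha\oplus B_\beta$ --- is false: the machine can make a single Boolean query to one position of $B_\alpha$ and output that bit, and \autoref{lem:xor_ms_lb} says nothing to the contrary. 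That lower bound asserts that a low-communication protocol cannot \emph{output a string outside} the XOR set; it does not assert that a protocol's output factors through the XOR of Alice's and Bob's inputs. Second, even if the truth table were a function $T(v)$ of $v = B_\alpha\oplus B_\beta$ alone, you would need a fixed point $T(v)=v$, and none need exist: the machine can read its own planted bit (as the XOR of the two Boolean queries your circuit makes) and negate it. This is exactly the diagonalization that the Missing-String duality is designed to circumvent, by making the planted object a large list rather than a single answer.

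Separately, you have not engaged with the main technical difficulty of the paper's proof: a machine working at one input length can query the instances planted at other lengths, so the adversary must defeat all machines and all lengths \emph{simultaneously} against win-win strategies in which one protocol's correctness is predicated on another's failure. Spacing the lengths so that "the multilinear extensions do not meaningfully interact" does not address this, since plain Boolean queries to other blocks always interact. The paper handles it with \autoref{lemma: finite barrier}, a lower bound in a multi-instance model proved by induction, whose inductive step postselects on the failure of the first $t-1$ protocols and reduces to the $\PostBPP$ list-solver lower bound of \autoref{lem:xor_ms_lb_multi}, followed by a compactness argument to assemble the infinite oracle. Some version of this machinery appears unavoidable for an almost-everywhere upper bound, and your proposal contains no substitute for it.
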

 \gpt{Label hygiene: The label key contains spaces (\texttt{thm: intro BPE}). Consider \texttt{thm:intro-BPE} (and similarly elsewhere) to avoid fragile references.}
}

Previously, Aaronson and Wigderson~\cite{DBLP:journals/toct/AaronsonW09} constructed an oracle $A$ and its \emph{multiquadratic} extension $\widetilde A$ such that ${\sf BPEXP}^{\widetilde A} \subseteq \P^A/_\poly$. Their proof works with the algebraic query model directly, and it is unclear how to prove the same result for multilinear extensions using their techniques. %

Besides demonstrating the power of our communication- and duality-based approach, an algebrization barrier w.r.t.~multilinear extension also aligns better with the notion of \emph{affine relativization}~\cite{DBLP:journals/toct/AydinliogluB18}. A statement $\calC \subseteq \calD$ is said to \emph{affine relativize} if $\calC^{\widetilde A} \subseteq \calD^{\widetilde A}$ for every $\widetilde A$ that is the multilinear extension of some oracle $A$. The crucial difference from algebrization (in the sense of \cite{DBLP:journals/toct/AaronsonW09}) is that the left-hand side is also required to relativize with the multilinear extension. Affine relativization is arguably a cleaner and more robust notion than algebrization.\footnote{For example, algebrization is not necessarily closed under \emph{modus ponens}: Although it requires non-algebrizing techniques to prove, say, $\NEXP\not\subseteq \P/_\poly$, it might still be possible to exhibit a complexity class $\calC$ and prove both $\calC \subseteq \NEXP$ and $\calC \not\subseteq \P/_\poly$ via algebrizing techniques. In contrast, affine relativization is clearly closed under \emph{modus ponens}. \label{footnote: modus ponens}} It is important to only take multilinear extensions, as \cite{DBLP:journals/toct/AydinliogluB18} was only able to show that, e.g., $\PSPACE^{\widetilde A} \subseteq \IP^{\widetilde A}$ for every multilinear oracle $\widetilde A$; it is unclear if such a statement holds when $\widetilde A$ is merely multiquadratic. (This is also why Aaronson and Wigderson~\cite{DBLP:journals/toct/AaronsonW09} choose to relativize $\IP$ with the low-degree extension of $A$ but to relativize $\PSPACE$ with $A$ itself.) In this regard, \autoref{thm: intro BPE} also shows that $\BPE\not\subseteq\P/_\poly$ is not affine relativizing (since $\BPE^{\widetilde A} \subseteq \SIZE^A[O(n)]\subseteq \SIZE^{\widetilde A}[O(n)]$).%

To obtain \autoref{thm: intro BPE}, we prove a lower bound for $\XorMissingString$ \emph{against multiple pseudodeterministic $\BPP$ protocols simultaneously} in the following model. \begin{itemize}
    \item There are $t$ communication protocols $Q_1, Q_2, \dots, Q_t$ and $t$ inputs $(X_1, Y_1)$, $(X_2, Y_2)$, $\dots$, $(X_t, Y_t)$.
    
    \item The goal of protocol $Q_i$ is to solve the input $(X_i, Y_i)$. However, each protocol is given the inputs to other protocols as well. More precisely, in each $Q_i$, Alice gets $(X_1, X_2, \dots, X_t)$ as inputs and Bob gets $(Y_1, Y_2, \dots, Y_t)$ as inputs.\footnote{We remark that in the case of $\XorMissingString$, each $X_i$ and $Y_i$ is already a list of strings, which means Alice and Bob get $t$ lists of strings each.}
    \item We say that the protocols \defn{succeed} if \emph{at least one} of the protocols $Q_i$ outputs a correct answer.
    \item For proving algebrization barriers, each $Q_i$ and $(X_i, Y_i)$ should have a different size and correspond to different input lengths of the oracle, but we omit this detail in the informal description here. We refer the reader to~\autoref{lemma: finite barrier} for details.
\end{itemize}

The point of this model is that the protocols are allowed to look at other protocols' inputs and to perform \emph{win-win analyses}: the correctness of $Q_i$ may rely on the \emph{failure} of some other protocol $Q_{i'}$.\footnote{
    If the possibility of such win-win analyses sounds surprising, it may help to compare it with the following classic puzzle. There are $N$ prisoners, each assigned a (not necessarily distinct) number between $0$ and $N-1$. Each prisoner can see others' numbers but not their own. Each prisoner must guess their own number simultaneously. If at least one prisoner guesses correctly, they are all set free; if none of them do, they are all executed.

    There is a simple solution that guarantees the prisoners' freedom. The $i$-th prisoner ($0\le i\le N-1$) assumes that the total sum of all numbers is congruent to $i$ modulo $N$, and then infers their own number as
    \[(i-\text{sum of other prisoners' numbers})\bmod N.\]
    Exactly one assumption will be correct, so that prisoner will guess their own number correctly.
}
Indeed, this scenario models a variety of win-win analyses in complexity theory; see e.g., \cite[Section 1.4.2]{CHR24} and \cite[Section 5.2]{LuORS24}. Circuit lower bounds for $\MA$~\cite{BuhrmanFT98,DBLP:journals/siamcomp/Santhanam09} can also be modeled as win-win algebraic query algorithms for $\MissingString$; see~\autoref{sec: BFT as algebraic query algorithms}.

We show that efficient pseudodeterministic protocols require large communication complexity to solve $\XorMissingString$, even if they receive multiple instances and are allowed to perform win-win analyses in the above sense.

\begin{theorem}[Informal and simplified version of \autoref{lemma: finite barrier}]\label{thm: BPP cc LB informal}
    In the above model, suppose that each $(X_i, Y_i)$ is a $\XorMissingString(n_i, m_i)$ instance but the communication complexity of each $Q_i$ is ``much less'' than $m_i$. Then there exists a sequence of inputs $\{(X_i, Y_i)\}_{i\in [t]}$ such that every $Q_i$ fails to solve its corresponding instance $(X_i, Y_i)$ pseudodeterministically.
\end{theorem}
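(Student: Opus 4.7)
The plan is to reduce the multi-instance lower bound to the single-instance bound of \autoref{lem:xor_ms_lb} via a probabilistic argument. For each instance size $(n_i, m_i)$, I would first identify a hard input distribution $D_i$ implicit in the proof of \autoref{lem:xor_ms_lb}: a distribution over $\XorMissingString(n_i, m_i)$ instances on which every pseudodeterministic $\BPP$ protocol with communication much less than $m_i$ succeeds with probability bounded well below $1$. This requires reading the proof of \autoref{lem:xor_ms_lb} as an average-case bound against a specific distribution, which for XOR-structured problems is the natural way such lower bounds are established.

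The next step is to sample the adversarial inputs $(X_i, Y_i) \sim D_i$ independently across $i$ and argue that with positive probability every $Q_i$ fails simultaneously. For a fixed $i$, conditioning on all other inputs $\{(X_j, Y_j)\}_{j \ne i}$ turns $Q_i$ into an ordinary two-party pseudodeterministic protocol whose only input is $(X_i, Y_i)$, with unchanged communication budget $c_i$: the other inputs are simply hard-coded into the protocol's description. Applying the single-instance bound then gives an upper bound $\delta_i$ on the probability that $Q_i$ succeeds pseudodeterministically on $(X_i, Y_i)$, and because the $D_j$'s are independent, averaging over $\{(X_j, Y_j)\}_{j\ne i}$ preserves this bound. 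A union bound across $i$ yields
\[
\Pr\bigl[\exists i : Q_i \text{ succeeds}\bigr] \;\le\; \sum_{i=1}^{t} \delta_i,
\]
which is strictly less than $1$ as long as the gap between each $c_i$ and $m_i$ is wide enough. Hence some realization of the inputs defeats every protocol simultaneously, yielding the claimed sequence of hard inputs.

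The main technical obstacle is bridging the gap between the worst-case error statement of \autoref{lem:xor_ms_lb} (``error at most $2^{-5n}$ requires communication $\Omega(m)$'') and the small average-case pseudodeterministic success bounds needed for the union bound to close. The standard route is amplification: a pseudodeterministic protocol of communication $c$ and success probability at least $2/3$ can, by $k$-fold repetition and majority vote, be converted into one with pseudodeterministic error at most $2^{-\Omega(k)}$ and communication $O(ck)$. Choosing $k$ polynomial in $n_i$ and $\log t$ and feeding the amplified protocol into \autoref{lem:xor_ms_lb} should force each $\delta_i$ to be below $1/(2t)$, so that $\sum_i \delta_i < 1$. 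The delicate bookkeeping — making the amplification overhead $O(c_i \cdot (n_i + \log t))$ stay within the ``much less than $m_i$'' slack from the hypothesis, and confirming that the proof of \autoref{lem:xor_ms_lb} indeed supplies a fixed hard distribution $D_i$ rather than only a worst-case existence statement — is exactly what the precise quantitative hypothesis of the formal version in \autoref{lemma: finite barrier} must calibrate.
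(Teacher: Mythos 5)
Your reduction breaks at the very first step: there is no fixed hard distribution $D_i$ of the kind you need, and the union bound you build on it cannot close. For \emph{any} distribution $D$ over $\XorMissingString(n,m)$ instances, an averaging argument over the $2^n$ candidate answers shows that some fixed string $s$ satisfies $\Pr_{(X,Y)\sim D}[s \in X\oplus Y] \le m^2/2^n$; hence the zero-communication protocol ``always output $s$'' is pseudodeterministic and succeeds with probability at least $1 - m^2/2^n$ over $D$. So the best bound you can hope for is $\delta_i \ge 1 - m_i^2/2^{n_i}$, which is exponentially close to $1$, and $\sum_i \delta_i \approx t \gg 1$. Amplification does not help: repetition-plus-majority only boosts confidence on inputs where the protocol is already pseudodeterministically correct, which is all but a $2^{-\Omega(n_i)}$ fraction of any distribution; it does not shrink the set of inputs the protocol solves. \autoref{lem:xor_ms_lb} is inherently a worst-case statement (``any protocol with error $\le 2^{-5n}$ on \emph{every} input needs $\Omega(m)$ communication''), and the paper explicitly flags that the trivial protocol succeeds with probability $1-m^2/2^n \gg 2/3$ on random inputs — this is precisely why the model with cross-instance visibility and win-win strategies (cf.\ the prisoners puzzle in the footnote, where the union bound is exactly tight) is nontrivial, and why independent sampling plus a union bound cannot defeat it.

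The paper's actual proof is structurally different: it is an induction on $t$ combined with postselection. Assuming the claim for $t-1$ but not for $t$, one builds a single $\PostBPP$ \emph{list-solver} for $\XorMissingString(n_t,m_t)$: sample the first $t-1$ instances uniformly at random using public randomness, postselect on the event that all protocols $Q_{i,j}$ with $i\le t-1$ fail (this event has probability at least $2^{-K}>0$ by the induction hypothesis, where $K$ is the total bit-length of the guessed instances), and then output the candidate answers of the level-$t$ protocols; by the assumed failure of the lemma at level $t$, one of these must solve $(X_t,Y_t)$. This yields a low-communication $\PostBPP$ list-solver with error $2^{-\Omega(n_t^2)}$, contradicting \autoref{lem:xor_ms_lb_multi}. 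Note that this route essentially requires the $\PostBPP$ and list-solver strengthenings of the single-instance bound — the conditioning on a positive-but-tiny-probability event is exactly what postselection models, and it is the mechanism that converts the worst-case single-instance lower bound into a statement robust against win-win strategies. If you want to salvage a direct argument, you would have to replace ``independent sampling plus union bound'' with some form of this conditioning.
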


\subsection{Barriers to \texorpdfstring{$\MA_\E$}{MAE} Circuit Lower Bounds}

Finally, we present algebrization barriers to improving the half-exponential lower bound for $\MA_\E$~\cite{BuhrmanFT98}. While we are unable to fully resolve the algebrizing circuit complexity of $\MA_\E$, we show that for a natural subclass of $\MA_\E$ which includes the hard language in~\cite{BuhrmanFT98}, non-algebrizing techniques are required to go beyond half-exponential bounds.

Recall that a standard algorithm $M^{\widetilde A}$ in $\MAE^{\widetilde A}$ defines a hard language if, \emph{relative to this particular oracle $\widetilde A$}, $M^{\widetilde A}$ satisfies the $\MA$ promise and defines a language without small $A$-oracle circuits; we do not care about the behavior of $M^{\widetilde B}$ for other oracles $\widetilde B$. In contrast, we say that an $\MA_\E$ machine $M$ is a \defn{robust} machine for defining a hard language (or simply ``is robust''), if for \emph{any} oracle $B$ and its multilinear extension $\widetilde B$ such that $M^{\widetilde B}$ satisfies the $\MA$ promise, the language computed by $M^{\widetilde B}$ does not have small $B$-oracle circuits. The use of interactive proofs in~\cite{BuhrmanFT98} naturally leads to hard languages defined by robust $\MA$ machines; see \autoref{app:bft} for details.\footnote{In fact, the definition of robust $\MA$ resembles closer to the class $\MA\cap\coMA$ compared to $\MA$ itself. This is natural since \emph{single-valued $\FMA$-constructions of hard truth tables} (see, e.g., \cite[Section 1.2.1]{CHR24}) give rise to circuit lower bounds for $\MA\cap\coMA$, and indeed the lower bound proved in~\cite{BuhrmanFT98} is for hard languages in $\MA_\E\cap\coMA_\E$. For ease of notation we use ``robust $\MA$'' ($\text{Rob-}\MA$) instead of ``robust $\MA\cap\coMA$'' ($\text{Rob-}\MA\cap\coMA$).} %

We use $\text{Rob-}\MA^A$ (resp.~$\text{Rob-}\MAE^A$) to denote the class of languages computed by robust $\MA$ (resp.~$\MAE$) algorithms with access to oracle $A$. Our main result is that proving a super-half-exponential bound for languages in $\E$ or robust $\MA_\E$ would require non-algebrizing techniques:

\begin{theorem}[{Informal version of \autoref{thm:ma_ae_barrier}}]\label{thm: intro MAE}
    There exists an oracle $A_3$ and its multilinear extension $\widetilde{A_3}$ such that both $\E^{\widetilde{A_3}}$ and $\mathrm{Rob}\text{-}\MA_\E^{\widetilde{A_3}}$ admit half-exponential size $A_3$-oracle circuits.
\end{theorem}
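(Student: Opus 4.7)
The plan is to extend the \XorMissingString-to-barrier template of Theorems~\ref{thm:io_postbpp_barrier} and \ref{thm: intro BPE} to the Rob-$\MA_\E$ setting, with parameters balanced to match the half-exponential tradeoff inherent in the Buhrman--Fortnow--Thierauf argument.

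The $\E^{\widetilde{A_3}}$ half of the statement is standard: by a staged Wilson/Heller-style diagonalization, one processes every $\E$-machine $M$ in turn, fixes enough of $A_3$ to determine $M^{\widetilde{A_3}}(x)$ on all inputs $x$ of interest, and records the answers at a dedicated block of short $A_3$-positions. A size-$\poly(n)$ $A_3$-oracle circuit then retrieves $M^{\widetilde{A_3}}(x)$ in a single query, comfortably within the half-exponential budget. This part is mostly bookkeeping and requires only care to interleave consistently with the Rob-$\MA_\E$ stage below.

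The Rob-$\MA_\E^{\widetilde{A_3}}$ half is the crux, and would use the finite-barrier framework of \autoref{lemma: finite barrier}. For every Rob-$\MA_\E$ machine $M$ of size $\le h(n)$, one determines and encodes its language $L_{M,A_3}$ in a dedicated block of $A_3$-positions. The \emph{robustness} clause is exactly what makes this well-defined: for any multilinear extension of $A_3$, $M$ is guaranteed to satisfy the $\MA$ promise and therefore to define a unique language on $\widetilde{A_3}$, so the stored answer is unambiguous. To prevent $M$ from ``seeing through'' our construction and disagreeing with the stored value, one plants $\XorMissingString$ instances at lengths $M$ might plausibly query, and invokes an MA-model communication lower bound for $\XorMissingString$ in the spirit of Klauck's $\widetilde{\Omega}(\sqrt{m})$ lower bound for disjointness---lifted to the multi-instance win-win setting of \autoref{thm: BPP cc LB informal}. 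The transfer principle of Aaronson--Wigderson then turns this communication bound into a statement about oracle circuits accessing $\widetilde{A_3}$, which rules out $M$ solving the planted instances within its query budget.

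The half-exponential threshold emerges from balancing familiar constraints: encoding all $2^n$ outputs of each size-$h(n)$ Rob-$\MA_\E$ machine into $A_3$ requires short but richly populated oracle positions, while the machine itself probes $\widetilde{A_3}$ at lengths up to $2^{O(n)}$, forcing the planted $\XorMissingString$ gadgets to live at comparable lengths; the square-root MA communication cost for $\XorMissingString$ verification then forces the oracle-circuit size $h(n)$ to satisfy $h(h(n))\approx 2^n$. The main obstacle, and the reason we do not obtain a stronger barrier than half-exponential, is establishing the required communication lower bound for $\XorMissingString$ against MA-type (and Rob-$\MA$-type) protocols in the multi-instance win-win model: Merlin can propose a candidate answer $s$ with only $n$ bits, so the verifier's task reduces to disjointness on $\{s\oplus x_i\}_i$ versus $\{y_j\}_j$, which MA solves in $\tilde{O}(\sqrt{m})$ communication. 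Strengthening this to the win-win multi-instance model---where a single Merlin proof may attempt to coordinate across several overlapping protocols---while simultaneously ensuring compatibility with the robust-$\MA$ robustness condition is the delicate technical core of the theorem, and mirrors the Buhrman--Fortnow--Thierauf tradeoff from the upper-bound side.
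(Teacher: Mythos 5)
Your proposal misses the actual engine of the paper's argument in several places. First, the $\E^{\widetilde{A_3}}$ half is not ``bookkeeping'': recording $M^{\widetilde{A_3}}(x)$ at a dedicated block of short oracle positions is circular, since $M$ runs in time $2^n$ and can query $\widetilde{A_3}$ at exactly those positions (and a single query to the multilinear extension depends on the entire slice of $A_3$ at that length, so the usual ``fix the queried bits'' trick does not apply). The paper instead never \emph{stores} the answer: it plants a $\XorMissingString(2^i, 2^{h(i)/10})$ instance $(X_i,Y_i)$ at length $h(i)$, converts $M$ into a deterministic communication protocol via the transfer principle, and uses \autoref{lem:rect_error_prob_lb} to shrink a rectangle of oracles until the protocol's output is forced to equal $x\oplus y$ for some $x\in X_i$, $y\in Y_i$ --- a string that automatically has a small $A_3$-oracle circuit by hardcoding two indices. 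Second, you misread the robustness condition: a robust machine is \emph{not} one that satisfies the $\MA$ promise on every oracle (so its language is ``unambiguous''); it is one that never lets Merlin certify an \emph{easy} truth table, while being permitted to violate the promise entirely. This distinction is what \autoref{lem:socratic_lb} exploits.

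Third, and most substantively, the lower bound you invoke is the wrong kind of statement, and the half-exponential threshold does not arise where you place it. The paper's key lemma (\autoref{lem:socratic_lb}) is not a Klauck-style $\widetilde\Omega(\sqrt m)$ communication bound; it says that a \emph{robust} $\FMA$ protocol of complexity $C$ can solve only a $2^{-\Omega(m/C)+O(C+n)}$ \emph{fraction of all inputs} (because it cannot output false positives and cannot verify correctness, it is essentially forced to abstain). The half-exponential threshold then emerges from interleaving two refutation strategies: shrinking rectangles to kill deterministic protocols costs density $2^{-2^{O(k)}}$ by level $k$, which must stay above the $\approx 2^{-2^{h(i)}}$ success fraction of the robust protocol at level $i$ (forcing $k\ll h(i)$); and after probabilistically refuting a robust protocol of complexity $2^{2^{O(i)}}$, the surviving rectangle must still be large enough ($\ge 2^{-2^{h(k)}}$) to continue the deterministic refutation at level $k$ (forcing $2^i\ll h(k)$). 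Together these force $h$ to be super-half-exponential. Your proposal also does not confront the paper's central compositional difficulty --- the set of oracles refuting a robust protocol is large but is \emph{not a rectangle} --- which is resolved by \autoref{lem:extend_prefix} via a Markov/union-bound argument when extending the common prefix. Finally, \autoref{lemma: finite barrier} (the $\BPE$ win-win lemma) plays no role in this proof; the induction here is over oracle rectangles, not over families of protocols receiving each other's inputs.
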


\paragraph{Why study robust $\MAE$?} We present two reasons for studying the notion of robust $\MAE$.

Firstly, \autoref{thm: intro MAE} is tight in the sense that, in the two cases of the win-win argument in~\cite{BuhrmanFT98}, the hard language is either in $\E$ (i.e., deterministic exponential time), or in robust $\MA_\E$. Therefore, a slight adaptation of~\cite{BuhrmanFT98} proves that for every oracle $A$ and its multilinear extension $\widetilde A$, $\E^{\widetilde A} \cup \mathrm{Rob}\text{-}\MA_\E^{\widetilde A}$ does not have sub-half-exponential size $A$-oracle circuits. This is shown in \autoref{app:bft}.

Secondly, while the classes $\E$ and $\RobMAE$ may initially appear arbitrary, they in fact capture two fundamental types of failures for algorithms in $\MA_\E \cap \co\MA_\E$:

\begin{itemize}
\item For $M \in \RobMAE$, if $M^{\widetilde{A}}$ fails to achieve high circuit complexity, it must be due to an input $x$ on which $M^{\widetilde{A}}$ is semantically incorrect. That is, the corresponding truth table entry is undefined. We call this a failure due to \textbf{no positive}.

\item In contrast, for \( M \in \E \), the value \( M^{\widetilde{A}}(x) \) is always well-defined for all \( x \), regardless of the oracle \( A \). If \( M^{\widetilde{A}} \) has low circuit complexity, it must be because its truth table is easy for \( A \)-oracle circuits. We call this a failure due to a \textbf{false positive}.

\end{itemize}

In general, for an algorithm $M \in \MA_\E \cap \co\MA_\E$, failure to achieve high circuit complexity can be attributed to both types, depending on the oracle. Thus, $\E$ and $\RobMAE$ represent two extreme cases within $\MA_\E \cap \co\MA_\E$.

Our proof of \autoref{thm:ma_ae_barrier} entails techniques that can force the two types of failures to occur. In this sense, the proof says something fundamental about $\MAE\cap\co\MAE$. However, our current methods are limited to handling algorithms that exhibit only a single failure mode. For more general algorithms in $\MA_\E \cap \co\MA_\E$ that can have mixed types of failures, our understanding remains incomplete.

\paragraph{Proof of \autoref{thm: intro MAE} via a communication lower bound.} To prove algebrization barriers to circuit lower bounds for robust $\MA_\E$, we need to understand the \emph{robust} $\MA$ communication complexity of $\XorMissingString$. Here, an $\MA$ protocol $P$ is called \emph{robust} if on every input $(X, Y)$, Merlin (i.e., the prover) cannot convince the protocol to output a non-solution for $(X, Y)$ except with very small probability. Underlying \autoref{thm: intro MAE} is the following lower bound stating that efficient $\MA$ protocols for $\XorMissingString$ that are robust can only be correct on a tiny fraction of inputs:%

\begin{restatable}{lemma}{ZeroErrorLB}
    \label{lem:socratic_lb}
    Let $n\ge 1,1\le m<2^{n/2}$ be parameters. Let $P$ be a robust $\FMA$ communication protocol of complexity $C$ attempting to solve $\XorMissingString(n,m)$. Then the fraction of inputs that $P$ solves is at most
    \begin{align*}
        2^{-\Omega(m/C)+O(C+n)}.
    \end{align*}
\end{restatable}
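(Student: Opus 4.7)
The plan is to reduce the lemma to a rectangle bound on an XOR-disjointness-type function. First, I would amplify the BPP subprotocol so that the robustness parameter $\epsilon$ and the gap between success and non-solution probabilities are both below some small constant, which costs only a constant factor in $C$. Then, by a union bound over the $\le 2^C$ possible Merlin messages $\pi$, it suffices to show that for each fixed $\pi$, the set $G_\pi$ of inputs solved by the BPP subprotocol $P_\pi$ has fraction at most $2^{-\Omega(m/C) + O(n)}$; the multiplicative $2^C$ from this union contributes the extra $O(C)$ in the final exponent.

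Fixing $\pi$, I would use double counting and averaging to pigeonhole the solved inputs into a single rectangle-label pair. For each $(X, Y) \in G_\pi$ and each random string $r$ for which $P_{\pi, r}(X, Y)$ outputs a solution $s$, the rectangle $R$ of the deterministic protocol $P_{\pi, r}$ containing $(X, Y)$ is labeled by $s$, and $s \notin X \oplus Y$. Summing over $(X, Y), r, s$, and the $\le 2^{2C}$ rectangles of $P_{\pi, r}$ with label $s$, an averaging argument yields a triple $(r^*, R^*, s^*)$ with $R^* = A^* \times B^*$ certifying at least $|G_\pi| / 2^{O(C+n)}$ of the solved inputs. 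By choosing $r^*$ to also minimize the global number of non-solution outputs (via Markov on the robustness constraint), I would ensure that at most an $O(\epsilon)$-fraction of the inputs in $R^*$ are \emph{bad}, i.e., have $s^* \in X \oplus Y$.

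The main step would then be a combinatorial rectangle bound: any rectangle $R = A \times B$ and label $s$ such that at most an $O(\epsilon)$-fraction of $(X, Y) \in R$ satisfy $s \in X \oplus Y$ must have $|R| \le 2^{2mn - \Omega(m/C)}$. I would prove this via a Razborov-style corruption argument on the XOR-disjointness function $(X, Y) \mapsto [X \cap (s \oplus Y) = \emptyset]$, exploiting the sparsity of the universe ($m$ elements in $\{0,1\}^n$ with $m \ll 2^{n/2}$) together with a hard distribution adapted to the FMA protocol's effective communication. The hard part will be extracting the $m/C$ exponent itself: classical disjointness corruption gives the stronger $2^{-\Omega(m)}$ only for truly monochromatic rectangles, whereas here the rectangles are only approximately monochromatic, and the precise $m/C$ dependence will require a careful trade-off between the number of rectangles and their corruption level---possibly through an amplification-style step that strengthens approximate monochromaticity at the cost of more rectangles. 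Combining this rectangle bound with the averaging of the previous paragraph yields $|G_\pi| \le 2^{-\Omega(m/C) + O(C+n)} \cdot 2^{2mn}$, and the final union bound over Merlin messages completes the proof.
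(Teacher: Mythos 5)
There is a genuine gap, and it sits at the heart of your argument: the ``main step'' rectangle bound is false as stated. You claim that any rectangle $R$ with label $s$ in which at most an $O(\epsilon)$-fraction of inputs satisfy $s\in X\oplus Y$ must have $|R|\le 2^{2mn-\Omega(m/C)}$. Take $R$ to be the \emph{entire} input space and $s$ arbitrary: by a union bound, the fraction of $(X,Y)$ with $s\in X\oplus Y$ is at most $m^2/2^n$, which is far below any constant $\epsilon$ in the regime of interest, yet $|R|=2^{2mn}$. This is exactly the difficulty the lemma is meant to capture --- a trivial protocol that outputs a fixed string already ``solves'' a $1-m^2 2^{-n}$ fraction of inputs --- so no bound of the form ``mostly-correct rectangles are small'' can hold, and a Razborov-style corruption argument on the XOR-disjointness predicate points in the wrong direction (under the relevant distribution, ``disjoint'' is the overwhelmingly common event, so near-monochromatic-correct rectangles can be maximal). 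Relatedly, amplifying the robustness parameter only to a small constant cannot possibly produce the $m/C$ in the exponent; that exponent has to come from somewhere, and your sketch defers it to a vague ``amplification-style step'' without supplying it.

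The paper's proof uses the rectangle structure in the opposite direction. Its key combinatorial fact (\autoref{lem:rect_error_prob_lb}) is that every rectangle of size at least $2^{2nm-m+2}$ contains, for \emph{every} label $s$, a sub-rectangle of relative density $2^{-2n-2}$ on which $s$ fails. Hence for any deterministic protocol of complexity below $m/2$, $\Pr[\text{accept}]\le 2^{-m+2}\cdot 2^{C}+2^{2n+2}\cdot\Pr[\text{accept}\wedge w\text{ wrong}]$. Robustness is then amplified by $\Theta(m/C)$ repetitions --- the maximum affordable while keeping the total complexity below $m/2$ so that the rectangle lemma still applies --- driving $\Pr[\text{accept}\wedge w\text{ wrong}]$ down to $2^{-\Omega(m/C)}$, far below the $2^{-2n-2}$ ``guaranteed wrongness'' of large rectangles. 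This tension, not a corruption bound, forces $\Pr[\text{accept}]\le 2^{-\Omega(m/C)+O(n)}$ for each pair $(w,\pi)$, and a union bound over the $2^{O(C+n)}$ pairs finishes the proof. Your outer structure (union bound over Merlin's messages, decomposition into rectangles) is compatible with this, but you would need to replace both the constant-level amplification and the false rectangle claim with the above.
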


Intuitively, since there is no efficient way to verify whether a string is a solution, any protocol that solves a large number of instances of $\XorMissingString$ is essentially guessing the answer (similar to the na\"ive protocol that succeeds with probability $1-2^{-O(n)}$), which means that it must make mistakes. Since robust protocols are not allowed to output false positives, they can only be correct on a tiny fraction of inputs.

Although \autoref{lem:socratic_lb} is useful for analyzing the behaviors of robust $\MA$ protocols, it does not directly imply any \emph{half-exponential} bounds. Indeed, we need to carefully diagonalize against both $\E^{\widetilde A}$ and $\RobMAE^{\widetilde A}$ simultaneously, which leads to a half-exponential bound. We discuss this in more detail in \autoref{sec: tech overview}.%

\subsection{Technical Overview}\label{sec: tech overview}

Next, we briefly overview our proof techniques. In fact, the engine behind all of our communication lower bounds is the following observation about $\XorMissingString$:%
\begin{lemma}[{Informal version of~\autoref{lem:rect_error_prob_lb}}]
    \label{lem:rect_error_prob_lb_informal}
    For every large enough rectangle $R = \mathcal{X} \times \mathcal{Y}$ and every answer $s$, there is a large enough subrectangle $R' = \mathcal{X}' \times \mathcal{Y}'$ ($\mathcal{X}' \subseteq \mathcal{X}$ and $\mathcal{Y}'\subseteq \mathcal{Y}$) such that $s$ is a \emph{wrong answer} for the $\XorMissingString$ problem on every input $(X, Y) \in R'$.
\end{lemma}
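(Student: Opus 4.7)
My plan is a double-counting argument that produces a sub-rectangle of the simple ``fix-a-collision'' form. Fix the answer $s$ and, for each $a \in \{0,1\}^n$, define $\mathcal{X}_a := \{X \in \mathcal{X} : a \in \{x_1,\dots,x_m\}\}$ and $\mathcal{Y}_{s \oplus a} := \{Y \in \mathcal{Y} : (s \oplus a) \in \{y_1,\dots,y_m\}\}$. On every $(X, Y) \in \mathcal{X}_a \times \mathcal{Y}_{s \oplus a}$ we have $s = a \oplus (s \oplus a) = x_i \oplus y_j$ for some $i, j$, so $s$ is a wrong answer on the whole sub-rectangle. It therefore suffices to find an $a$ for which $|\mathcal{X}_a| \cdot |\mathcal{Y}_{s \oplus a}|$ is not too small.

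To find such an $a$, I would sum over $a$ and swap the order of summation to obtain the double-counting identity
\[
\sum_{a \in \{0,1\}^n} |\mathcal{X}_a| \cdot |\mathcal{Y}_{s \oplus a}| \;=\; \sum_{X \in \mathcal{X},\, Y \in \mathcal{Y}} |\{a : a \in X \text{ and } s \oplus a \in Y\}|,
\]
so that the best $a$ yields a sub-rectangle of size at least $2^{-n}$ times the right-hand side. Writing $|\mathcal{X}| \geq 2^{nm - C_\mathcal{X}}$ and $|\mathcal{Y}| \geq 2^{nm - C_\mathcal{Y}}$, the coordinate marginals $P_X^i(a) := \Pr_{X \sim \mathcal{X}}[x_i = a]$ satisfy $\sum_i (n - H(P_X^i)) \leq C_\mathcal{X}$ by subadditivity of Shannon entropy, and likewise for the $y_j$. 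For a ``typical'' pair $(i, j)$ where both marginals are close to uniform, I expect the convolution $\Pr[x_i \oplus y_j = s]$ to be at least $2^{-n - O(1)}$ for every $s$, which, when summed over pairs $(i, j) \in [m]^2$ and over $(X, Y) \in \mathcal{X} \times \mathcal{Y}$, gives $\Omega(m^2 \cdot 2^{-n}) \cdot |\mathcal{X}||\mathcal{Y}|$; the maximum sub-rectangle then has size $\geq 2^{-O(n)} \cdot |\mathcal{X}||\mathcal{Y}|$, with additional slack polynomial in $C_\mathcal{X}/m$ and $C_\mathcal{Y}/m$.

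The main obstacle I anticipate is anti-concentration of the convolution at a \emph{worst-case} $s$: high Shannon entropy of the marginals does not by itself prevent $\Pr[x_i \oplus y_j = s]$ from vanishing---for example, if both marginals concentrate on a coset of a proper $\mathbb{F}_2$-subspace, then their XOR avoids half of $\{0,1\}^n$. I plan to handle this by a preliminary regularization step: shrink $\mathcal{X}$ and $\mathcal{Y}$ to comparable-size sub-rectangles on which every coordinate marginal is additionally flat in $L_\infty$ (essentially $\|P_X^i\|_\infty \leq 2^{-n} \cdot 2^{O(C_\mathcal{X}/m)}$), by a pigeonhole / typical-set argument driven by the rectangle-size hypothesis. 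On this regularized rectangle, the desired $\Omega(2^{-n})$ lower bound on the convolution at every $s$ follows from a direct $L_2$/Fourier estimate, and plugging this back into the double-counting identity completes the argument.
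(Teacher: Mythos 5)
Your first step coincides with the paper's: both proofs take the candidate subrectangle to be $\mathcal{X}_a\times\mathcal{Y}_{s\oplus a}$ for a well-chosen collision value $a$, so everything reduces to exhibiting one $a$ with $|\mathcal{X}_a|\cdot|\mathcal{Y}_{s\oplus a}|\ge 2^{-O(n)}|\mathcal{X}||\mathcal{Y}|$. The gap is in how you propose to find that $a$. Your plan hinges on a pointwise anticoncentration bound $\Pr[x_i\oplus y_j=s]\ge 2^{-n-O(1)}$ for a ``typical'' pair of coordinates, to be guaranteed by regularizing all coordinate marginals to be flat in $L_\infty$. This regularization criterion provably does not suffice: a marginal that is uniform on a codimension-$1$ affine subspace satisfies $\|P\|_\infty=2\cdot 2^{-n}$ (so it passes your flatness test with constant $2$), yet the convolution of two such marginals is supported on a single coset and can vanish identically at the adversarial $s$ --- this is exactly the obstruction you name, and the fix you offer does not rule it out. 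What you would actually need is an $L_2$ (collision-probability) bound $\|P_X^i\|_2^2\le(1+o(1))2^{-n}$, and neither entropy subadditivity nor $L_\infty$ flatness delivers that: the hypothesis $|R|\ge 2^{2nm-m+2}$ only budgets about one bit of entropy deficiency per coordinate on average, which is precisely enough to allow a constant fraction of coordinates on each side to be confined to codimension-$1$ subspaces (e.g.\ $\mathcal{X}=V^{m/2-2}\times(\{0,1\}^n)^{m/2+2}$ and similarly for $\mathcal{Y}$ meets the size threshold). In addition, the regularization step itself --- shrinking to a comparable-size subrectangle on which all $m$ marginals are simultaneously flat --- is asserted rather than argued, and restricting $\mathcal{X}$ perturbs every marginal at once. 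A smaller point: your double-counting identity counts \emph{distinct} collision values $a$, while $\sum_{i,j}\Pr[x_i\oplus y_j=s]$ counts pairs, so the $\Omega(m^2 2^{-n})$ lower bound over-counts; this is harmless (one good pair already gives the claimed $2^{-O(n)}$ density) but should be stated as such.

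The paper's proof avoids convolutions entirely and is purely combinatorial. After reducing to $s=0^n$, it sorts the $2^n$ strings by how many lists in $\mathcal{X}$ (resp.\ $\mathcal{Y}$) contain them, proves by a direct count that at most $(i-1)^m$ lists avoid the top $i-1$ strings --- hence the $i$-th most frequent string occurs in at least $(|\mathcal{X}|-(i-1)^m)/(2^n-i+1)$ lists --- and applies pigeonhole between Alice's top $i$ and Bob's top $2^n-i+1$ strings to find a common frequent string; a suitable choice of $i$ then gives density $2^{-2n-2}$. I would recommend adopting that route, or, if you want to pursue the analytic one, replacing the $L_\infty$ regularization by an argument that genuinely controls the structured (subspace-like) marginals permitted by the size hypothesis.
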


With this lemma in hand, it is not hard to prove \autoref{lem:xor_ms_lb}. We characterize a $\PostBPP$ communication protocol as a distribution over labeled rectangles (using \emph{approximate majority covers}~\cite{Klauck2003}), where for each input $(X,Y)$, the output of the protocol is given by the label of a randomly selected rectangle that contains $(X,Y)$. Using \autoref{lem:rect_error_prob_lb_informal}, for every large enough labeled rectangle, there exists a large subrectangle in which the label is never a solution to $\XorMissingString$. \autoref{lem:rect_error_prob_lb_informal} thus translates to a lower bound for the success probability of large labeled rectangles (i.e., the probability that the label is a solution, over a random input in the rectangle). This then translates to a lower bound for (weighted) average success probability over a random input, because the distribution consists mostly of large rectangles when the protocol is efficient. This lower bound implies that efficient protocols cannot have very high success probability.

However, more work needs to be done to obtain our other communication lower bounds and algebrization barriers.

\paragraph{BPP communication lower bounds.} We prove our pseudodeterministic $\BPP$ communication lower bound in the ``win-win model'' (\autoref{thm: BPP cc LB informal}) via a reduction to the $\PostBPP$ communication lower bound (\autoref{lem:xor_ms_lb}). We first use an induction on the number of protocols; assume towards a contradiction that \autoref{thm: BPP cc LB informal} holds for any sequence of $t-1$ protocols but does not hold for some sequence of $t$ protocols $Q_1, \dots, Q_t$. Then the following $\PostBPP$ communication protocol solves $\XorMissingString$ efficiently: Given an input $(X, Y)$, guess $t-1$ inputs $\{(X_i, Y_i)\}_{i\in [t-1]}$ uniformly at random, set $(X_t, Y_t) := (X, Y)$, and \emph{postselect} on the event that all of the first $t-1$ protocols fail. That is, for every $1\le i \le t-1$, $Q_i$ fails to solve $(X_i, Y_i)$ given $\{(X_i, Y_i)\}_{i\in [t]}$. By our induction hypothesis, this event happens with probability strictly greater than $0$. If this happens, then $Q_t$ solves $(X_t, Y_t)$ correctly. This contradicts \autoref{lem:xor_ms_lb}.

For clarity, we have made a crucial simplification in the above description: To obtain algebrization barriers, we also need to prove lower bounds against \emph{list-solvers} for $\XorMissingString$, which are communication protocols that output a \emph{small set} of answers such that one of the answers is correct. Fortunately, it is not hard to adapt the proof of \autoref{lem:xor_ms_lb} to prove a $\PostBPP$ communication lower bound for list solvers; see \autoref{sec: list-solvers}.

\paragraph{$\MA$ communication lower bounds against robust protocols.} %
A robust protocol can be described as a collection of randomized verifiers $V_{w,\pi}$, where $V_{w,\pi}(X,Y)=1$ means that when the input is $(X,Y)$ and $\pi$ is given as proof, the protocol accepts $w$ as output. In the formal proof, we apply error reduction on the verifiers, so that when $w$ is not a solution to $(X,Y)$, $V_{w,\pi}(X,Y)=1$ with extremely small probability. 

Thus, if a protocol has large success probability, there must exist some answer string $w$ and some proof $\pi$, such that over a random input $(X,Y)$, $\Pr[V_{w,\pi}(X,Y)=1]$ is large. We interpret this $V_{w,\pi}$ as a distribution over labeled rectangles, and apply \autoref{lem:rect_error_prob_lb_informal} to show that conditioned on $V_{w,\pi}(X,Y)=1$, the verifier makes many mistakes (i.e., $\Pr[V_{w,\pi}(X,Y)=1\land w\text{ is not a solution to }(X,Y)]$ is large). This implies that the protocol must make mistakes on some $(X,Y)$, so it cannot be robust.

\paragraph{The half-exponential barrier for robust $\MA_\E$.} We formulate the problem in terms of refuting communication protocols, where we need to refute a finite set of protocols (deterministic or robust) on each input length. More formally, to construct an algebrized world with circuit upper bound $h(n)$, our oracle encodes one instance $(X_i, Y_i)$ of $\XorMissingString(2^i, 2^{h(i)})$ for each input length $i$. Each protocol being diagonalized is dedicated to solving the one instance that corresponds to its input length (it can nevertheless see the other instances).

We employ an inductive approach to construct the oracle. At step $i$, we maintain a rectangle\footnote{We interpret our oracles as the concatenation of Alice's and Bob's inputs, so it is reasonable to talk about a ``rectangle'' of oracles.} $R_i$ of oracles, such that the rectangle is relatively large, and any protocol that works on input length at most $i$ fails to solve its instance on any oracle $A\in R_i$.

The goal of every step is then to slightly shrink the rectangle $R_{i-1}$ into $R_i$, so that the protocols working on length $i$ are refuted. To achieve this, we employ a combination of two different strategies:

\begin{itemize}
    \item \textbf{Refuting deterministic protocols:} Given a deterministic protocol $P$ working on length $i$ and a large enough rectangle $R_i$, we can always find a large subrectangle $R'\subseteq R_i$ such that $P$ outputs the same answer on every oracle in $R'$ (this is by definition of deterministic protocols). By \autoref{lem:rect_error_prob_lb_informal}, there exists a large subrectangle $R''\subseteq R'$, such that for any oracle $A\in R''$, $P$ does not solve $(X_i,Y_i)$ on $A$. We then update $R_i\gets R''$ to refute $P$.
    \item \textbf{Refuting robust protocols:} Given a robust protocol $Q$ working on length $i$ and a large enough rectangle $R_i$, a random oracle from $R_i$ refutes $Q$ with high probability. This is true as long as the density of $R_i$ is sufficiently larger than the success probability of $Q$ on a uniformly random input (as upper bounded in \autoref{lem:socratic_lb}).
\end{itemize}
On their own, both strategies work extremely well: The first strategy is very similar to the proof of $\NEXP^{\widetilde{A}}\subset \P^A/\poly$ by \cite{DBLP:journals/toct/AaronsonW09}, and can be used to obtain much better circuit upper bounds than half-exponential. For the second strategy, since robust protocols are very weak, a randomly selected oracle would refute all robust protocols with high probability.

However, it is unclear how to combine the two strategies. On the one hand, after refuting a deterministic protocol $P$ by the first strategy, the resulting rectangle $R_i$ may be too small for the second strategy to work. On the other hand, although the set of oracles $R'\subseteq R_i$ that refutes a robust protocol $Q$ is large, it may not be a rectangle, so we cannot use it to refute the next deterministic protocol.

To integrate these two strategies, we refute each robust protocol at a carefully chosen time step. Suppose that we refute a robust protocol $Q_i$ working on length $i$ \emph{after} all deterministic protocols working on lengths $< k$ are refuted, where $k = k(i)$ is some function on $i$. Also, recall that we want to prove an algebrized circuit upper bound of size $h(n)$. %
\begin{itemize}
    \item After refuting the deterministic protocols working on lengths $< k$, we end up with a rectangle that contains roughly a $2^{-2^{O(k)}}$ fraction of all oracles. This is because the deterministic protocols working on lengths $< k$ run in time $2^{O(k)}$. In contrast, \autoref{lem:socratic_lb} implies that $Q_i$ only solves a $\approx 2^{-2^{h(i)}}$ fraction of oracles (since it attempts to solve an instance of $\XorMissingString(2^i,2^{h(i)})$). Therefore, if $2^{-2^{O(k)}}\gg 2^{-2^{h(i)}}$, then we can refute $Q_i$.
    \item On the other hand, after refuting $Q_i$, we still need to refute the deterministic protocols working on length $k$. Since these protocols attempt to solve $\XorMissingString(2^k, 2^{h(k)})$, to apply \autoref{lem:rect_error_prob_lb_informal}, this requires our current rectangle $R_k$ to have size at least roughly $2^{-2^{h(k)}}$. Since $Q_i$ is equivalent to a deterministic protocol running in time $2^{2^{O(i)}}$ (by enumerating the random bits and Merlin's proof), after refuting it, we still have a rectangle of density $\approx 2^{-2^{2^{O(i)}}}$. We thus need that $2^{-2^{2^{O(i)}}}\gg2^{-2^{h(k)}}$.
\end{itemize}
Overall, we require that $k\ll h(i)$ and $2^i\ll h(k)$. This implies that $h$ must be super-half-exponential.%

\subsection{Related Works}

\paragraph{Prior work on super-polynomial circuit lower bounds.} Kannan's seminal work~\cite{DBLP:journals/iandc/Kannan82} proved a super-polynomial size lower bound for $\Sigma_2\EXP$; since then, a sequence of work has proved circuit lower bounds for various complexity classes such as $\ZPEXP^\NP$~\cite{KoblerW98,BshoutyCGKT96}, $\S_2\EXP$~\cite{CaiCHO05, Cai01a}, $\PEXP$~\cite{DBLP:journals/tcs/Vinodchandran05,Aaronson06}, $\MA_\EXP$~\cite{BuhrmanFT98, DBLP:journals/siamcomp/Santhanam09}, $\ZPEXP^\MCSP$~\cite{DBLP:conf/coco/ImpagliazzoKV18,HiraharaLR23}, and more \cite{StockmeyerM02, CLL25}. Such lower bounds are usually proved by \emph{Karp--Lipton} theorems~\cite{DBLP:conf/stoc/KarpL80, DBLP:journals/cc/BabaiFNW93, DBLP:conf/coco/ChenMMW19}: If a large uniform class (usually~$\PH$ or $\EXP$) admits polynomial-size circuits, then it collapses to a smaller uniform class (such as $\Sigma_2\P$). As explained in~\cite{MiltersenVW99} and \cite[Section 1.4.1]{CHR24}, such a strategy naturally yields a \emph{half-exponential} bound. Recently, \cite{CHR24, Li24} proved near-maximum ($2^n/n$) circuit lower bounds for the classes $\Sigma_2\EXP$, $\ZPEXP^\NP$, and $\S_2\EXP$, improving upon previous half-exponential bounds. Near-maximum circuit lower bounds are also known for several classes with \emph{subexponential} amount of advice bits, such as $\mathsf{BPEXP}^\MCSP/_{2^{n^\eps}}$~\cite{HiraharaLR23} and $\mathsf{AMEXP}/_{2^{n^\eps}}$~\cite{CLL25}.%

\paragraph{Algebrization.} The interactive proof results such as $\IP = \PSPACE$~\cite{LundFKN92, Shamir92} and $\MIP = \NEXP$~\cite{DBLP:journals/cc/BabaiFL91} generated much excitement among complexity theorists, as they are the first ``truly compelling'' (\cite{DBLP:conf/sigal/Allender90}) non-relativizing results in complexity theory. There has been much discussion on the extent to which these results are non-relativizing, and how to adapt the relativization barrier to accommodate these results~\cite{arora1992relativizing, DBLP:journals/eatcs/Fortnow94}. Arguably, the most influential work in this direction is that of Aaronson and Wigderson~\cite{DBLP:journals/toct/AaronsonW09} on the algebrization barrier, which nicely captures interactive proof results and arithmetization-based techniques. However, Aaronson and Wigderson's algebrization barrier has its own subtleties (such as not being closed under \emph{modus ponens}, see \autoref{footnote: modus ponens}), which leads to several proposed refinements of its definition~\cite{DBLP:conf/stoc/ImpagliazzoKK09, DBLP:journals/toct/AydinliogluB18}. We also mention the \emph{bounded relativization} barrier, recently proposed by Hirahara, Lu, and Ren~\cite{HiraharaLR23}, which attempts to capture the interactive proof results entirely within the original Baker--Gill--Solovay~\cite{DBLP:journals/siamcomp/BakerGS75} framework of relativization.

\subsection{Open Problems}
We think the most important open problem left in our paper is to study the algebrizing circuit complexity of $\MA_\E$. Can we strengthen \autoref{thm: intro MAE} to hold for all of $\MA_\E$ instead of just ``robust'' $\MA_\E$ algorithms? If not, can we prove an exponential size lower bound for $\MA_\E$ (potentially with one bit of advice) using algebrizing techniques?

Another open question is to prove a $\PP$ communication lower bound for $\XorMissingString$, which would imply an algebrization barrier to proving almost-everywhere circuit lower bounds for $\PEXP$. Such circuit lower bounds are known in the infinitely-often regime~\cite{BuhrmanFT98, DBLP:journals/tcs/Vinodchandran05}.

Finally, our work did not examine the regime of fixed-polynomial circuit lower bounds. Using algebrizing techniques, Santhanam~\cite{DBLP:journals/siamcomp/Santhanam09} proved that for every constant $k\ge 1$, $\MA/_1\not\subseteq \SIZE[n^k]$. Aaronson and Wigderson~\cite{DBLP:journals/toct/AaronsonW09} speculated that it might be possible to eliminate the advice bit and prove $\MA\not\subseteq \SIZE[n^k]$ using ``tried-and-true arithmetization methods,'' but there has been no progress in this direction. Is there an algebrizing barrier to proving $\MA\not\subseteq \SIZE[n^k]$?

\section{Preliminaries}

\begin{definition}
    A \defn{search problem} $f$ over domain $\mathcal{X\times Y}$ and range $\mathcal{O}$ is defined using a relation $\mathcal{R}\subset \mathcal{(X\times Y)\times \mathcal{O}}$. For any input $(X,Y)\in \mathcal{X\times Y}$, the valid solutions for $f$ on $(X,Y)$ are those $s$ such that $(X,Y,s)\in \mathcal{R}$ (we say that \defn{$s$ is a solution to $f(X,Y)$}). We say that $f$ is a \defn{total problem}, if for any $(X,Y)$, there is at least one valid solution. \gpt{Notation: Prefer writing the domain as a Cartesian product of sets and ensure consistent use of calligraphic symbols for sets.}
\end{definition}

\subsection{Complexity Classes}

We assume familiarity with basic complexity classes such as $\P$, $\NP$, $\BPP$ and their exponential-time versions $\EXP$, $\NEXP$, $\sf BPEXP$. The reader is encouraged to consult standard textbooks~\cite{Arora-Barak, Goldreich-complexity} or the Complexity Zoo\footnote{\url{https://complexityzoo.net/}, accessed Nov 15, 2025.} for their definition.

\begin{definition}[{$\PostBPP$}]
    A promise problem $\Pi = (\Pi_{\rm yes}, \Pi_{\rm no})$ is in $\pr\PostBPP$ if there exist two polynomial-time algorithms $A, B$, taking an input $x\in \{0, 1\}^n$ and randomness $r \in \{0, 1\}^{\poly(n)}$ (they take the \emph{same} randomness), such that the following holds for every $x \in \Pi_{\rm yes} \cup \Pi_{\rm no}$.%
    \begin{itemize}
        \item If $x \in \Pi_{\rm yes}$, then
        $\Pr_r[A(x, r) = 1 \mid B(x, r) = 1] \ge 2/3$.
        \item If $x \in \Pi_{\rm no}$, then
        $\Pr_r[A(x, r) = 1 \mid B(x, r) = 1] \le 1/3$.
        \item $\Pr_r[B(x, r) = 1] > 0$.
    \end{itemize}
    We say that the algorithm \defn{postselects} on the event that $B(x, r) = 1$.
\end{definition}

\begin{definition}[{$\MA\cap\co\MA$}]
    A promise problem $\Pi = (\Pi_{\rm yes}, \Pi_{\rm no})$ is in $\pr(\MA\cap\co\MA)$ if there exist two polynomial-time algorithms (verifiers) $V_0,V_1$, taking an input $x\in \{0, 1\}^n$, a proof $\pi\in \{0, 1\}^{\poly(n)}$ and randomness $r \in \{0, 1\}^{\poly(n)}$ (they take \emph{different} randomness), such that the following holds for every $x \in \Pi_{\rm yes} \cup \Pi_{\rm no}$.
    \begin{itemize}
        \item If $x \in \Pi_{\rm yes}$, then $V_1$ accepts $x$ and $V_0$ rejects $x$.
        \item If $x \in \Pi_{\rm no}$, then $V_0$ accepts $x$ and $V_1$ rejects $x$.
    \end{itemize}
    Here, for $k=0,1$, we say that
    \begin{itemize}
        \item $V_k$ \defn{accepts} $x$, if there exists a proof $\pi$ such that $\Pr_r[V_k(x,\pi,r)=1]\ge 2/3$.
        \item $V_k$ \defn{rejects} $x$, if for any proof $\pi$, we have that $\Pr_r[V_k(x,\pi,r)=1]\le 1/3$.
    \end{itemize}
\end{definition}

\subsection{Algebrization}

In this section, we present some definitions regarding algebrization barriers. The definitions are in accordance with \cite{DBLP:journals/toct/AaronsonW09}, with slight modifications.

\begin{definition}[Oracle]
    An \defn{oracle} $A$ is a collection of Boolean functions $A_m:\{0,1\}^m\to \{0,1\}$, one for each $m\in \mathbb N$. Given a complexity class $\mathcal{C}$, by $\mathcal{C}^A$ we mean the class of languages decidable by a $\mathcal{C}$ machine that can query $A_m$ for any $m$ of its choice.
\end{definition}

We say that a separation $\mathcal{C}\not\subset \mathcal{D}$ does not algebrize, if there exist an oracle $A$ and a low-degree extension $\widetilde A$ of $A$, such that $\mathcal{C}^{\widetilde A}\subset \mathcal{D}^A$. Throughout this paper, we only consider multilinear extensions, which are the simplest class of low-degree extensions.%

\begin{definition}[Multilinear Extension over Finite Fields]
    Let $A_m:\{0,1\}^m\to\{0,1\}$ be a Boolean function, and let $\mathbb F$ be a finite field. The (unique) multilinear extension of $A_m$ over $\mathbb F$ is the linear function $\widetilde A_{m,\mathbb F}:\mathbb F^m\to \mathbb F$ that agrees with $A_m$ on $\{0,1\}^m$. \gpt{Terminology: A multilinear extension is a (unique) multilinear polynomial over $\mathbb F$—not a “linear function.” Consider rephrasing.} Given an oracle $A=(A_m)$, the \defn{multilinear extension} $\tilde A$ of $A$ is the collection of multilinear extensions $\widetilde A_{m,\mathbb F}$, one for each positive integer $m$ and finite field $\mathbb F$.

    Given a complexity class $\mathcal{C}$, by $\mathcal{C}^{\widetilde A}$ we mean the class of languages decidable by a $\mathcal{C}$ machine that can query $\widetilde A_{m,\mathbb F}$ for any $m,\mathbb F$ of its choice. Moreover, we assume that each query to $\widetilde A_{m,\mathbb F}$ takes $O(m\cdot \log|\mathbb F|)$ time.
\end{definition}

An important property of multilinear extensions is that algorithms having access to $\widetilde A$ can be simulated by communication protocols where Alice and Bob are each given half of $A$ as input. This reduces proving algebrization barriers to proving communication lower bounds. Formally, we have the following theorem, which is a simple corollary of \cite[Theorem 4.11]{DBLP:journals/toct/AaronsonW09}.

\begin{theorem}
    \label{thm:transfer}
    Let $A$ be an oracle, and let $A_0$ (resp.~$A_1$) be the subfunction of $A$ obtained by restricting the first bit to $0$ (resp.~$1$). Let $M$ be a deterministic algorithm that has oracle access to $\widetilde A$ and runs in time $T(|x|)$ when given $x$ as input. There exists a deterministic communication protocol $P$ in which Alice (resp.~Bob) is given $x$ and the function $A_0$ (resp.~$A_1$) as input, such that $P$ uses $O(T(|x|)^3)$ bits of communication, and agrees with $M(x)$ on any input $x$ and any oracle $A$.
\end{theorem}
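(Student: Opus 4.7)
The plan is to reduce each oracle query to $\widetilde A$ to a pair of private queries, one to $\widetilde{A_0}$ (held by Alice) and one to $\widetilde{A_1}$ (held by Bob), using the standard multilinear restriction identity: for every $m\ge 1$, every finite field $\mathbb F$, and every $(y_1,\ldots,y_m)\in \mathbb F^m$,
\[
\widetilde A_{m,\mathbb F}(y_1,\ldots,y_m) \;=\; (1-y_1)\cdot \widetilde{A_0}(y_2,\ldots,y_m) \;+\; y_1\cdot \widetilde{A_1}(y_2,\ldots,y_m),
\]
where the two extensions on the right-hand side are taken over the same field $\mathbb F$ at input length $m-1$. I would verify this identity by noting that both sides are multilinear polynomials over $\mathbb F^m$ agreeing on the Boolean cube $\{0,1\}^m$ (for $y_1\in\{0,1\}$ one side trivially restricts to the other), and multilinear extensions are unique.

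With the identity in hand, the protocol $P$ runs as follows. Alice, who knows $x$ and $A_0$, locally simulates $M^{\widetilde A}(x)$ step by step; no communication is required between oracle queries. Whenever $M$ issues a query $\widetilde A_{m,\mathbb F}(y)$ with $y=(y_1,\ldots,y_m)\in \mathbb F^m$, Alice first computes $\alpha := (1-y_1)\cdot \widetilde{A_0}(y_2,\ldots,y_m)$ from her copy of $A_0$, then transmits the tuple $(m,\mathbb F,y)$ to Bob using $O(m\log |\mathbb F|)$ bits. Bob computes and sends back $\beta := y_1\cdot \widetilde{A_1}(y_2,\ldots,y_m)\in \mathbb F$ using a further $O(\log |\mathbb F|)$ bits. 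Alice then feeds $\alpha+\beta$ to $M$ as the oracle's response and continues the simulation. At the end, Alice sends $M(x)$'s output bit to Bob.

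Correctness follows from the identity: for every oracle $A$ and every query made by $M$, the value $\alpha+\beta$ equals the true $\widetilde A_{m,\mathbb F}(y)$, so Alice's local simulation faithfully reproduces $M^{\widetilde A}(x)$. For the communication bound, each query incurs $O(m\log |\mathbb F|)$ bits, and by the cost model stipulated in the definition of oracle access to $\widetilde A$ the same query already costs $M$ at least $\Omega(m\log|\mathbb F|)$ units of time. Summing over all queries yields total communication $O(T(|x|))$, which sits comfortably inside the claimed $O(T(|x|)^3)$ bound.

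There is no substantive obstacle here: the only conceptual ingredient is the multilinear restriction identity, and everything else is routine bookkeeping. The one mild subtlety to keep straight is that $A_0$ and $A_1$ are themselves oracles indexed by input length, so each query to $\widetilde A_{m,\mathbb F}$ must be interpreted by both parties as invoking the input-length-$(m-1)$ restriction of the appropriate sub-oracle; this is immediate.
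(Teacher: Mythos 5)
Your proof is correct and is essentially the standard argument behind the transfer principle of Aaronson and Wigderson that the paper cites for this theorem: the restriction identity $\widetilde A_{m,\mathbb F}(y)=(1-y_1)\widetilde{A_0}(y_2,\dots,y_m)+y_1\widetilde{A_1}(y_2,\dots,y_m)$ (valid by uniqueness of multilinear extensions) lets Alice and Bob each evaluate their half of every query and combine them with $O(m\log|\mathbb F|)$ bits of communication. Your accounting even gives a tighter bound than the stated $O(T^3)$ (which accommodates the crude estimate of at most $T$ queries, each of description length $O(T^2)$), so the claim follows.
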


We remark that the proof of \cite[Theorem 4.11]{DBLP:journals/toct/AaronsonW09} can be generalized to produce randomized communication protocols from randomized algorithms, $\MA$ communication protocols from $\MA$ algorithms, and so on.

\section{An Infinitely-Often Algebrization Barrier for \texorpdfstring{$\PostBPE$}{PostBPE}}

In this section, we prove the following algebrization barrier:

{
    \def\ImplyMA{0}
    \def\Oracle{A}
    \ThmioPostBPPBarrier*
}

\subsection{\texorpdfstring{$\PostBPP$}{PostBPP} Communication Lower Bounds for \texorpdfstring{$\XorMissingString$}{XOR-Missing-String}}
\autoref{thm:io_postbpp_barrier} follows from the following communication lower bound for $\XorMissingString$:

\LemmaXORMSLB*

In this paper, $\PostBPP$ communication protocols are defined as follows:\footnote{There are many different definitions in the literature (see e.g., \cite{goos18landscape}). Here, we choose a simple definition that is best suited for proving the barrier.}

\begin{definition}[$\PostBPP$ communication protocols for search problems]\label{def: PostBPPcc}
    Let $f$ be a search problem over domain $\mathcal{X\times Y}$ and range $\mathcal{O}$. A \defn{$\PostBPP$ communication protocol} $\Pi$ for $f$ is defined as follows: Let $k$ be the length of the public randomness used by $\Pi$, and let $\{\Pi_r\}_{r\in \{0,1\}^k}$ be a set of deterministic communication protocols, each having communication complexity $c$. On input $(X,Y)\in \mathcal{X\times Y}$, protocol $\Pi$ first samples a uniformly random string $r\in \{0,1\}^k$, then simulates $\Pi_r$, whose output can be $\bot$ or any value from $\mathcal{O}$. The communication complexity of $\Pi$ is defined to be $k+c$.

    We say that $\Pi$ \defn{solves $f$ with error $\epsilon$}, if for any input $(X,Y)\in \mathcal{X\times Y}$,
    \begin{align*}
        \Pr_r\biggr[\Pi_r(X,Y)\text{ is a solution to }f(X,Y)\,\biggr|\,\Pi_r(X,Y)\ne \bot\biggr]\ge 1-\epsilon.
    \end{align*}

    We say that $\Pi$ is  \defn{pseudodeterministic}, if for any input $(X,Y)\in \mathcal{X\times Y}$, there exists some answer $s\in \mathcal{O}$, such that
    \begin{align*}
    \Pr\biggr[\Pi_r(X,Y)=s\,\biggr|\,\Pi_r(X,Y)\ne \bot\biggr]\ge 2/3.
    \end{align*}
\end{definition}

The main technical observation for our lower bound is that for any large enough rectangle $R$, no single answer $s\in \{0, 1\}^n$ will be correct for every input $(X, Y) \in R$. In fact, there is always a $2^{-\Theta(n)}$ fraction of inputs in $R$ on which $s$ is not a valid solution. 

To use this result in the lower bound, note that a $\PostBPP$ communication protocol can be seen as a distribution of labeled rectangles (i.e., approximate majority covers, as defined in \autoref{subsec:amc}). If a protocol has small communication complexity, then it must contain many large rectangles, so we can apply the aforementioned lower bound on individual rectangles, which implies that the protocol must have error $2^{-O(n)}$.

\subsubsection{A Lower Bound for One Rectangle}

In this section, we prove the main technical lemma for the communication lower bound. It is helpful to draw an analogy from the decision tree version of $\MissingString$: Suppose that the input is a list of $m$ $n$-bit strings. Whenever the decision tree only queries the input on $m-1$ bits, there always exists a string in the input list that is not queried. In this case, regardless of the decision tree's answer, the adversary can arbitrarily modify the value of that string and make the decision tree fail.

Here, we show that the $\XorMissingString$ problem has a similar property: If, conditioned on the communication history, the rectangle formed by the possible inputs is large (this corresponds to the decision tree making a small number of queries), then regardless of the protocol's answer, there always exists a significant portion of the rectangle, on which the protocol fails.

\begin{lemma}
    \label{lem:rect_error_prob_lb}
    Let $n\ge 1,1\le m<2^{n/2}$ be integers. Let $R\subseteq \{0,1\}^{nm}\times \{0,1\}^{nm}$ be a rectangle (i.e., $R$ is of the form $\mathcal{X\times Y}$, where $\mathcal{X,Y}\subseteq \{0,1\}^{nm}$) of size at least $2^{2nm-m+2}$. Let $s$ be any $n$-bit string. Then there exists some $R'$, which is a subrectangle of $R$ and has size at least $2^{-2n-2}|R|$, such that for any $(X,Y)\in R'$, $s$ is not a solution to $\XorMissingString$ on $(X,Y)$.
\end{lemma}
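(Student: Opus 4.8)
The plan is to argue that, for the fixed answer string $s$, there is a large subrectangle $R' = \mathcal{X}' \times \mathcal{Y}'$ on which $s$ is ``forbidden'' as a solution, meaning that for every $(X,Y) \in R'$ there exist indices $i,j$ with $x_i \oplus y_j = s$. The key idea: if Alice has \emph{many} possible first strings $x_1$ (ranging over a large set) and Bob likewise has many possible first strings $y_1$, then among these there must be a pair $(a,b)$ with $a \oplus b = s$, simply because the map $b \mapsto a \oplus b$ is a bijection and two large subsets of $\{0,1\}^n$ must intersect after such a shift. So I would first restrict $\mathcal{X}$ to inputs whose first string equals some carefully chosen value $a$, and $\mathcal{Y}$ to inputs whose first string equals $b = a \oplus s$; then $x_1 \oplus y_1 = s$ on the whole subrectangle, so $s$ is never a solution there.

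The main quantitative work is to show this restriction costs only a factor $2^{-2n-2}$ in size. Since $R$ has size at least $2^{2nm - m + 2}$, one of $\mathcal{X}, \mathcal{Y}$ has size at least $2^{nm - m/2 + 1}$; by symmetry say $\mathcal{Y}$ does (I would handle both cases). For each value $v \in \{0,1\}^n$, let $\mathcal{Y}_v \subseteq \mathcal{Y}$ be the set of Bob-inputs whose first string is $v$; these partition $\mathcal{Y}$ into $2^n$ parts, so some $\mathcal{Y}_{b}$ has size at least $|\mathcal{Y}|/2^n$. Set $a := b \oplus s$ and let $\mathcal{X}_a \subseteq \mathcal{X}$ be the Alice-inputs whose first string is $a$; I need $\mathcal{X}_a$ to also be large. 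For this I would instead choose $b$ more cleverly: pick $b$ to maximize $|\mathcal{Y}_b| \cdot |\mathcal{X}_{b \oplus s}|$, or do a two-step averaging — but the cleanest route is: by a counting/averaging argument over the $2^n$ choices of the first string, there exists a value $b$ such that $|\mathcal{Y}_b| \geq |\mathcal{Y}|/2^n$ \emph{and} $|\mathcal{X}_{b \oplus s}| \geq |\mathcal{X}|/2^{n+1}$, provided $\mathcal{X}$ is not too small; indeed if $\mathcal{X}$ were so small that no such $b$ works, we'd get a contradiction with $|\mathcal{X}| \cdot |\mathcal{Y}| = |R|$ being large. The factor $2^{-2n-2}$ in the statement is exactly $(1/2^n) \cdot (1/2^{n+1}) \cdot (1/2)$ with a little slack, which matches this double-averaging.

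Let me reconsider the averaging to make it airtight. Write $\alpha_v = |\mathcal{X}_v|/|\mathcal{X}|$ and $\beta_v = |\mathcal{Y}_v|/|\mathcal{Y}|$, so $\sum_v \alpha_v = \sum_v \beta_v = 1$. I want some $b$ with $\beta_b \cdot \alpha_{b \oplus s}$ not too small. We have $\sum_v \beta_v \cdot \alpha_{v \oplus s} \geq ?$ — this is not bounded below in general (the supports could be shift-disjoint), so a naive average fails. The fix is to use the size hypothesis differently: since $|R| = |\mathcal{X}||\mathcal{Y}| \geq 2^{2nm-m+2}$ and trivially $|\mathcal{X}|, |\mathcal{Y}| \leq 2^{nm}$, each of $|\mathcal{X}|, |\mathcal{Y}|$ is at least $2^{nm - m + 2}$. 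So there are at least $2^{nm - m + 2}$ possible values for $(x_2,\dots,x_m)$ across all of $\mathcal{X}$ — in particular, for \emph{some} value $a^\star$ of $x_1$, the fiber $\mathcal{X}_{a^\star}$ has size at least $|\mathcal{X}|/2^n \geq 2^{nm - m + 2 - n}$; and once $x_1 = a^\star$ is pinned, I set $b^\star = a^\star \oplus s$ and take $\mathcal{Y}_{b^\star}$. The remaining issue is only whether $\mathcal{Y}_{b^\star}$ is large: it need not be for this particular $a^\star$. This is the real obstacle, and I expect it to be resolved by a matching argument / Hall-type or greedy argument across the $2^n$ fibers, or — more likely, given the clean $2^{-2n-2}$ bound — by the following symmetric double-count: there exists $b$ with $|\mathcal{Y}_b| \geq |\mathcal{Y}| / 2^n$ (pigeonhole on $\mathcal{Y}$), and \emph{independently} there exists $a$ with $|\mathcal{X}_a| \geq |\mathcal{X}|/2^n$; but these $a,b$ won't satisfy $a \oplus b = s$. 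Hence the honest plan is: among the (at least $2^{n} \cdot 2^{-n/2-\text{ish}}$, using $m < 2^{n/2}$) distinctly-occurring values of $x_1$ in $\mathcal{X}$ and of $y_1$ in $\mathcal{Y}$, the $s$-shift of one large collection must meet the other — formalized by: if $T_X = \{ v : |\mathcal{X}_v| \geq |\mathcal{X}|/2^{n+1} \}$ then $|T_X| \geq 2^n \cdot (1 - 1/2) \cdot$(fraction of mass) and similarly $T_Y$, and since the hypothesis $m < 2^{n/2}$ forces $|T_X|, |T_Y|$ each to be more than $2^{n-1}$ (because the total number of possible $x_1$-values consistent with a size-$2^{nm-m+2}$ rectangle is nearly $2^n$), we get $T_X \cap (s \oplus T_Y) \neq \emptyset$ by inclusion–exclusion on $\{0,1\}^n$. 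Picking $b$ in that intersection and $a = b \oplus s$ gives $|\mathcal{X}_a| \geq |\mathcal{X}|/2^{n+1}$ and $|\mathcal{Y}_b| \geq |\mathcal{Y}|/2^{n+1}$, so $R' := \mathcal{X}_a \times \mathcal{Y}_b$ has $|R'| \geq |R|/2^{2n+2}$ and $x_1 \oplus y_1 = a \oplus b = s$ throughout, completing the proof. The step I flag as the crux is precisely verifying that the size hypothesis (together with $m < 2^{n/2}$) forces $|T_X|, |T_Y| > 2^{n-1}$ so that the shift-intersection is nonempty — everything else is bookkeeping.
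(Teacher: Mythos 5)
There is a genuine gap, and it is exactly at the step you flagged as the crux. Your plan fibers $\mathcal{X}$ and $\mathcal{Y}$ over the \emph{first} list entry $x_1$ (resp.\ $y_1$) and claims that the size hypothesis forces the set of heavy first-coordinate values $T_X=\{v: |\mathcal{X}_v|\ge |\mathcal{X}|/2^{n+1}\}$ to have size more than $2^{n-1}$. This is false. Take $m\ge n+2$ (allowed, since $m$ may be as large as $2^{n/2}$), let $\mathcal{X}=\{0^n\}\times(\{0,1\}^n)^{m-1}$ (first string pinned to $0^n$, the rest free) and $\mathcal{Y}=(\{0,1\}^n)^m$. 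Then $|R|=2^{2nm-n}\ge 2^{2nm-m+2}$, so the hypothesis holds, yet $x_1$ takes only \emph{one} value across all of $\mathcal{X}$, so $|T_X|=1$. Consequently $T_X\cap(s\oplus T_Y)$ can be empty and no choice of $(a,b)$ with $a\oplus b=s$ exists among first coordinates. The underlying issue is that the constraint $|\mathcal{X}|\ge 2^{nm-m+2}$ loses only $m-2$ bits out of $nm$, which says almost nothing about any single fixed coordinate of the list; your averaging $\sum_v\alpha_v=1$ over a partition into $2^n$ fibers can therefore only guarantee \emph{one} heavy fiber, not $2^{n-1}$ of them.

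The lemma is nevertheless true, and the repair is to count occurrences of a string $v$ \emph{anywhere} in the list, i.e., to work with $\mathcal{X}_v=\{X\in\mathcal{X}: v\in X\}$ (these sets overlap, so they are not a partition and simple averaging no longer applies). This is what the paper does: after the WLOG reduction to $s=0^n$ (your XOR-shift idea for this reduction is fine and matches the paper), it shows that if one sorts strings by occurrence count, the $i$-th most frequent string is contained in at least $(|\mathcal{X}|-(i-1)^m)/(2^n-i+1)$ of the lists in $\mathcal{X}$ --- the point being that a list avoiding all but the $i-1$ most frequent strings is one of at most $(i-1)^m$ lists, which is where the exponent $-m$ in the size hypothesis actually gets used. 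A pigeonhole between the top $i$ strings for Alice and the top $2^n-i+1$ strings for Bob then yields a single string $s'$ that is heavy on both sides, and $\mathcal{X}_{s'}\times\mathcal{Y}_{s'}$ is the desired subrectangle (since $s'\oplus s'=0^n$). Your shift-intersection idea survives in this form, but applied to ``contains $v$ somewhere'' rather than ``has first entry $v$''; as written, your argument does not go through.
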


\begin{proof}
    Without loss of generality, we only have to prove the lemma for the case where $s=0^n$. When $s\ne 0^n$, we consider a different rectangle $R_s$, in which every input string $x$ for Alice in every input list $X\in \mathcal{X}$ is replaced by $x\oplus s$. By the definition of $\XorMissingString$, the lemma holds for $R$ and $s$ if and only if it holds for $R_s$ and $0^n$.

    To prove the lemma, we need to show that there exists a large enough subrectangle $R'$ in $R$, in which $0^n$ is never a solution to $\XorMissingString$. Recall that $R$ is equal to $\mathcal{X}\times \mathcal{Y}$, where $\mathcal{X}$ is a set of Alice's inputs and $\mathcal{Y}$ is a set of Bob's inputs. For any string $s\in\{0,1\}^n$, let $\mathcal{X}_s=\{X\in \mathcal{X},s\in X\}$ denote the subset of $\mathcal{X}$ that contains the string $s$; $\mathcal{Y}_s$ is defined similarly. 
    
    For any $s\in \{0,1\}^n$, consider the subrectangle $\mathcal{X}_s\times \mathcal{Y}_s\subseteq R$. Since any input list (Alice's or Bob's) in the subrectangle always contains $s$, $0^n$ is never a solution in $\mathcal{X}_s\times \mathcal{Y}_s$.

    In the remaining, we show that there exists some $s$ such that $\mathcal{X}_s\times \mathcal{Y}_s$ is large enough (i.e., has size $\ge 2^{-2n-2}|R|$) using a counting argument. If we can show this, then the lemma follows by letting $R'=\mathcal{X}_s\times \mathcal{Y}_s$. 

    Let $cx_s=|\mathcal{X}_s|$ denote the number of occurrences of string $s$ in $\mathcal{X}$. Note that, if some input $X\in \mathcal{X}$ contains multiple copies of $s$, $X$ is only counted once in $cx_s$. Let $x_1,\dots,x_{2^n}$ be all the $n$-bit strings, sorted in non-increasing order of $cx$. Similarly define $cy_s$ and $y_1, \dots, y_{2^n}$. We have the following lower bound on the number of occurrences of $x_i$:

    \begin{claim}
        \label{clm:occurrence_lb}
        For any $1\le i\le 2^n$, $cx_{x_i}$ is at least 
        \begin{align*}
            \frac{|\mathcal{X}|-(i-1)^m}{ 2^n-i+1}.
        \end{align*}
    \end{claim}

    \begin{claimproof}
        Let $k=\sum_{i'\ge i}cx_{x_{i'}}$, i.e., the sum of occurrences of $x_{\ge i}$. 
        
        Consider the inputs $X$ in $\mathcal{X}$ that only contain strings from $x_1,\dots,x_{i-1}$. The number of such $X$ is at most $(i-1)^m$. Therefore, at least $|\mathcal{X}|-(i-1)^m$ inputs in $\mathcal{X}$ contain at least one string from $x_{\ge i}$. Each of the $|\mathcal{X}|-(i-1)^m$ inputs contribute at least one to the sum $k$. That is,
        \begin{align*}
            k=\sum_{i'\ge i}cx_{x_{i'}}\ge |\mathcal{X}|-(i-1)^m.
        \end{align*}
        Since $x_i$ occurs at least as frequently as any string in $x_{i+1},\dots,x_{2^n}$, we have that $k\le cx_{x_i}\cdot (2^n-i+1)$. Therefore,
        \begin{align*}
            cx_{x_i}&\ge \frac k{2^n-i+1}\ge \frac{|\mathcal{X}|-(i-1)^m}{ 2^n-i+1}.\qedhere
        \end{align*}
    \end{claimproof}

    The same bound also applies to $cy_{y_i}$. Fix any $1\le i\le 2^n$ and consider the strings $x_1,\dots,x_i$ and $y_1,\dots,y_{2^n-i+1}$. By the pigeonhole principle, there must exist some string $s$ that appears in both $\{x_1,\dots,x_i\}$ and $\{y_1,\dots,y_{2^n-i+1}\}$. Using \autoref{clm:occurrence_lb}, we can show that $\mathcal{X}_s\times \mathcal{Y}_s$ is large, that is,
    \begin{align*}
        |\mathcal{X}_s\times \mathcal{Y}_s|&=cx_s\cdot cy_s \\
        &\ge cx_{x_i}\cdot cy_{y_{2^n-i+1}} &\text{(by definition of the lists $x,y$)}\\
        &\ge \frac{|\mathcal{X}|-(i-1)^m}{2^n-i+1}\cdot \frac{|\mathcal{Y}|-(2^n-i)^m}{i}.&\text{(by \autoref{clm:occurrence_lb})}
    \end{align*}
    Let $A=|\mathcal{X}|,B=|\mathcal{Y}|$. It now remains to show that, for any $A,B$ such that $1\le A,B\le 2^{nm}$ and $A\cdot B\ge 2^{2nm-m+2}$, there exists some $1\le i\le 2^n$ such that the rectangle chosen above is large enough. That is,
    \begin{align}
        \label{eq:size_lb}
        \frac{A-(i-1)^m}{2^n-i+1}\cdot \frac{B-(2^n-i)^m}{i}\ge 2^{-2n-2}\cdot A\cdot B.
    \end{align}
    Let $i^*\ge 1$ be the largest integer such that $2(i^*-1)^m\le A$. Note that $2\cdot 2^{nm}>A$, so we must have $i^*\le 2^n$. In this case, we have that $2(2^n-i^*)^m\le B$, since if not, then
    \begin{align*}
        &2(i^*)^m\cdot 2(2^n-i^*)^m>A\cdot B& (2(i^*)^m>A\text{ by definition of $i^*$}) \\
        \Rightarrow\;\;& (i^*\cdot (2^n-i^*))^m>A\cdot B/4\ge 2^{2nm-m} \\
        \Rightarrow\;\;& i^*\cdot (2^n-i^*)>2^{2n-1},
    \end{align*}
    which is impossible since $i\cdot (2^n-i)\le 2^{2(n-1)}$ for any $i$. Therefore, by plugging $i^*$ into \eqref{eq:size_lb}, we have that
    \begin{align*}
        &\frac{A-(i^*-1)^m}{2^n-i^*+1}\cdot \frac{B-(2^n-i^*)^m}{i^*} \\
        \ge{}& \frac{A/2}{2^n-i+1}\cdot \frac{B/2}{i^*}&(2(i^*-1)^m\le A\text{ and }2(2^n-i^*)^m\le B) \\
        \ge{}& \frac{A/2}{2^n}\cdot \frac{B/2}{2^n}=2^{-2n-2}\cdot A\cdot B.&&\qedhere
    \end{align*}%
\end{proof}

\autoref{lem:rect_error_prob_lb} can be extended to the case where, in addition to $(X,Y)$, Alice and Bob also receive some auxiliary input (which is independent of $(X,Y)$). This variant will be useful in \autoref{sec:ma_barrier}.

\begin{restatable}{corollary}{RectAux}
    
    \label{cor:rect_error_prob_lb_aux}
    Let $n\ge 1,1\le m<2^{n/2},a\ge 0$ be integers. Let $R\subseteq \{0,1\}^{nm+a}\times \{0,1\}^{nm+a}$ be a rectangle of size at least $2^{2nm+2a-m+2}$ (i.e., $R$ is of the form $X\times Y$ for some $X,Y\subset \{0,1\}^{nm+a}$). Each element in $R$ is interpreted as having the form $(X\circ t_x,Y\circ t_y)$, where $(X,Y)$ is an input to $\XorMissingString(n,m)$, and $t_x,t_y$ are two $a$-bit strings.
    
    Let $s$ be any $n$-bit string. There exists some $R'$, which is a subrectangle of $R$ and has size at least $2^{-2n-2}|R|$, such that for any $(X\circ t_x,Y\circ t_y)\in R'$, $s$ is not a solution to $\XorMissingString$ on $(X,Y)$.
\end{restatable}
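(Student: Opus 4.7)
The plan is to mimic the proof of \autoref{lem:rect_error_prob_lb} almost verbatim, treating the auxiliary strings $t_x, t_y$ as passive labels that only affect the counting by a multiplicative factor of $2^a$. First I would reduce to the case $s = 0^n$ by the same XOR-translation trick: replace each element $X \circ t_x \in \mathcal{X}$ by $(X \oplus s) \circ t_x$, where $X \oplus s$ denotes XORing $s$ into every string of the list $X$. This preserves the rectangle structure and the size of $R$, and one checks directly from the definition that $0^n$ is a non-solution on the translated $(X \oplus s, Y)$ iff $s$ is a non-solution on $(X, Y)$.

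Next I would generalize \autoref{clm:occurrence_lb}. For each $w \in \{0,1\}^n$, let $cx_w = |\{X \circ t_x \in \mathcal{X} : w \in X\}|$ and define $cy_w$ analogously; then sort the $n$-bit strings in non-increasing order of $cx$ (resp.~$cy$) as $x_1, \dots, x_{2^n}$ (resp.~$y_1, \dots, y_{2^n}$). The key adaptation is that the number of elements $X \circ t_x \in \mathcal{X}$ whose $X$-part uses only strings from $\{x_1, \dots, x_{i-1}\}$ is at most $(i-1)^m \cdot 2^a$, since there are at most $(i-1)^m$ choices of such an $X$ and $2^a$ choices of $t_x$. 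This yields $cx_{x_i} \ge (|\mathcal{X}| - (i-1)^m \cdot 2^a)/(2^n - i + 1)$, and an identical bound for $cy_{y_i}$.

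Then I would execute the pigeonhole and arithmetic step exactly as in \autoref{lem:rect_error_prob_lb}. For any $i$, some string $s^*$ lies in both $\{x_1, \dots, x_i\}$ and $\{y_1, \dots, y_{2^n-i+1}\}$, so the subrectangle $\mathcal{X}_{s^*} \times \mathcal{Y}_{s^*}$ --- consisting of those elements of $R$ whose $X$-part (resp.~$Y$-part) contains $s^*$ --- is a subrectangle of $R$ on which $0^n$ is never a solution to $\XorMissingString$, and its size is at least $cx_{x_i} \cdot cy_{y_{2^n-i+1}}$. Choosing $i^*$ to be the largest integer with $2(i^*-1)^m \cdot 2^a \le |\mathcal{X}|$, the final arithmetic goes through unchanged after absorbing the two $2^a$ factors into the size hypothesis: one verifies $2(2^n - i^*)^m \cdot 2^a \le |\mathcal{Y}|$ by contradiction (otherwise $(i^*(2^n - i^*))^m > 2^{2nm-m}$, contradicting $i^*(2^n - i^*) \le 2^{2(n-1)}$), and the product lower bound then gives $|\mathcal{X}_{s^*} \times \mathcal{Y}_{s^*}| \ge 2^{-2n-2} |R|$.

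The main obstacle is purely bookkeeping: ensuring that the two factors of $2^a$ are absorbed cleanly into the hypothesis $|R| \ge 2^{2nm + 2a - m + 2}$, rather than degrading the conclusion to something weaker like $2^{-2n-2-2a}|R|$. In particular, a naive black-box reduction that projects $R$ onto the $(X, Y)$-coordinates, applies \autoref{lem:rect_error_prob_lb}, and lifts the resulting subrectangle back loses a factor of $2^{2a}$ in exactly this way. This is why I would redo the counting from scratch rather than invoke the lemma as a black box.
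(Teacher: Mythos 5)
Your proposal is correct and follows exactly the route the paper intends: the paper's proof of \autoref{cor:rect_error_prob_lb_aux} explicitly states that a black-box application of \autoref{lem:rect_error_prob_lb} does not work, that one must instead rerun the counting argument with $\mathcal{X}_s$ redefined to count pairs $(X\circ t_x)\in\mathcal{X}$ with $s\in X$, and then omits the details. Your filled-in details --- the $(i-1)^m\cdot 2^a$ bound, the choice of $i^*$ with $2(i^*-1)^m\cdot 2^a\le|\mathcal{X}|$, and the cancellation of the $2^{2a}$ factors against the strengthened size hypothesis --- are exactly the right bookkeeping and the arithmetic checks out.
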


\begin{proof}
    This corollary cannot be obtained by applying \autoref{lem:rect_error_prob_lb} in a black-box way. However, it is provable using the same techniques, i.e., by defining $\mathcal{X}_s=\{s\in X:(X\circ t_x)\in \mathcal{X}\}$ and $\mathcal{Y}_s=\{s\in Y:(Y\circ t_y)\in \mathcal{Y}\}$, and showing that $\mathcal{X}_s\times \mathcal{Y}_s$ is large for some $s$. The details are omitted.
\end{proof}

\subsubsection{Reducing \texorpdfstring{$\PostBPP$}{PostBPP} Communication to Approximate Majority Covers} 

\label{subsec:amc}

To use \autoref{lem:rect_error_prob_lb} in the lower bound proof, we use the characterization of $\PostBPP$ communication protocols by distributions of labeled rectangles called \emph{approximate majority covers}~\cite{Klauck2003}.

\begin{definition}[approximate majority covers]
    Let $f$ be a search problem over the domain $\mathcal{X\times Y}$ and range~$\mathcal{O}$. An \defn{approximate majority cover for~$f$} is a (multi) set of labeled rectangles~$\mathcal{AMC}=\{(R,s)\}$, where~$R\subseteq \mathcal{X\times Y}$ is a rectangle, i.e.,~$R$ is of the form~$A\times B$ for some~$A\subseteq\mathcal{X},B\subseteq\mathcal{Y}$, and~$s\in \mathcal{O}$ is the label of~$R$. The \defn{size} of~$\mathcal{AMC}$ is defined as the number of labeled rectangles.

    We say that the approximate majority cover \defn{solves $f$ with error $\epsilon$}, if for any $(X,Y)\in \mathcal{X\times Y}$, we have that
    \begin{align*}
        \Pr_{(R,s)\sim \mathcal{AMC}}[s\text{ is a solution to }f(X,Y)\;|\;(X,Y)\in R]\ge 1-\epsilon.
    \end{align*}
    That is, when we randomly sample from $\mathcal{AMC}$ a labeled rectangle $(R,s)$ that contains $(X,Y)$, the label $s$ is a solution with probability $2/3$.
\end{definition}

\begin{claim}
    \label{clm:protocol_to_amc}
    Let $f$ be a total search problem over domain $\mathcal{X\times Y}$ and range $\mathcal{O}$. If there exists a $\PostBPP$ communication protocol $\Pi$ that solves $f$ with error $\epsilon$ and has complexity $C$, then there exists an approximate majority cover that solves $f$ with error $\epsilon$ and has size $\le 2^{C}$.
\end{claim}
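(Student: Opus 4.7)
The plan is to convert each deterministic subprotocol $\Pi_r$ into its canonical rectangle partition, assemble the resulting labeled rectangles across all values of the public randomness into a single multiset, and verify that the conditional distribution defining the cover's error coincides exactly with the distribution appearing in the $\PostBPP$ correctness condition.

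First, I would recall that any deterministic communication protocol of complexity $c$ induces a partition of $\mathcal{X}\times\mathcal{Y}$ into at most $2^c$ combinatorial rectangles, one per leaf of the protocol tree, each naturally labeled by the output at that leaf. For every $r\in\{0,1\}^k$, let $\mathcal{L}_r$ denote the multiset of labeled rectangles $(R_{r,\ell},s_{r,\ell})$ obtained from $\Pi_r$, keeping only those leaves whose output $s_{r,\ell}$ lies in $\mathcal{O}$ (i.e., discarding the leaves labeled $\bot$). I would then define $\mathcal{AMC}:=\bigsqcup_{r\in\{0,1\}^k}\mathcal{L}_r$ as a multiset union. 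Since $|\mathcal{L}_r|\le 2^c$ and there are $2^k$ values of $r$, the size is at most $2^{k+c}=2^C$ as required.

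Next, I would verify the error bound. Fix any input $(X,Y)$. For each $r$, exactly one rectangle in the partition induced by $\Pi_r$ contains $(X,Y)$, and it appears in $\mathcal{L}_r$ if and only if $\Pi_r(X,Y)\ne\bot$, in which case its label is precisely $\Pi_r(X,Y)$. Hence the labeled rectangles in $\mathcal{AMC}$ that contain $(X,Y)$ are in bijection with $\{r:\Pi_r(X,Y)\ne\bot\}$, and sampling $(R,s)$ uniformly from $\mathcal{AMC}$ conditioned on $(X,Y)\in R$ is equivalent to sampling $r$ uniformly from $\{0,1\}^k$ conditioned on $\Pi_r(X,Y)\ne\bot$ and emitting $s=\Pi_r(X,Y)$. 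The $\PostBPP$ error guarantee then directly yields the desired $1-\epsilon$ correctness bound.

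The only subtlety I anticipate concerns inputs $(X,Y)$ on which $\Pi_r(X,Y)=\bot$ for \emph{every} $r$, which makes both conditional probabilities ill-defined; on such inputs the cover's correctness condition is vacuous in precisely the same way that the $\PostBPP$ guarantee is vacuous, so this introduces no genuine obstacle. Since the argument simply repackages the leaf-wise rectangle structure of the protocol, the proof should reduce to the size count and the straightforward bijection above, with no heavier computation required.
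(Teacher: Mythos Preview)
Your proposal is correct and follows essentially the same approach as the paper's own proof: take the leaf-partition of each deterministic subprotocol $\Pi_r$, discard the $\bot$-labeled rectangles, and take the multiset union over all $r\in\{0,1\}^k$. The paper's argument is identical in structure (same construction, same size count $2^k\cdot 2^{C-k}=2^C$, same bijection between rectangles containing $(X,Y)$ and random seeds with $\Pi_r(X,Y)\ne\bot$), and your additional remark about the all-$\bot$ case is a valid but minor observation that the paper leaves implicit.
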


\begin{proof}
    Assume that $\Pi$ uses $k\le C$ bits of randomness. That is, $\Pi$ is the uniform distribution over $2^k$ deterministic communication protocols $\{\Pi_r\}$, where each protocol $\Pi_r$ has complexity $C-k$. 

    Construct an approximate majority cover $\mathcal{AMC}$ as follows: Initially, $\mathcal{AMC}$ is empty. Each deterministic protocol $\Pi_r$ partitions the input space into at most $2^{C-k}$ pairwise disjoint rectangles, where the answers of $\Pi_r$ in each rectangle are the same. For each such rectangle $R$ of $\Pi_r$, suppose the answer of $\Pi_r$ on $R$ is $v$, we add $(R,v)$ to $\mathcal{AMC}$ if and only if $v\ne \bot$.

    It is clear that the size of $\mathcal{AMC}$ is at most $2^C$. Moreover, for any $(X,Y)\in \mathcal{X\times Y}$, there is a one-to-one correspondence between the rectangles in $\mathcal{AMC}$ containing $(X,Y)$ and the protocols $\Pi_r$ for which $\Pi_r(X,Y)\ne \bot$. Therefore, the value of $\Pi(X,Y)$ is distributed the same as the label of a uniformly random rectangle in $\mathcal{AMC}$ that contains $(X,Y)$.
\end{proof}

\subsubsection{Putting Everything Together}

With the components in place, we can now prove \autoref{lem:xor_ms_lb}.

\LemmaXORMSLB*
\begin{proof}
    Let $\Pi$ be a $\PostBPP$ communication protocol that solves $\XorMissingString(n,m)$ with error $\le 2^{-5n}$, and let $C$ denote the complexity of $\Pi$. We convert $\Pi$ into an approximate majority cover $\mathcal{AMC}$ using \autoref{clm:protocol_to_amc}. We have that $\mathcal{AMC}$ solves $\XorMissingString(n,m)$ with error $\le 2^{-5n}$ and has size $2^{O(C)}$. We now show that the size of $\mathcal{AMC}$ is at least $2^{\Omega(m)}$, which proves the lemma. %

    Let $S=\sum_{(R,s)\in \mathcal{AMC}}|R|$ denote the total size of the rectangles. Since $\mathcal{AMC}$ is correct, each input $(X,Y)\in \{0,1\}^{nm}\times \{0,1\}^{nm}$ must be contained in at least one rectangle, which means that $S\ge 2^{2nm}$.

    Let
    \begin{align*}
        C_{\text{correct}}\coloneqq\sum_{(X,Y)\in \{0,1\}^{nm}\times \{0,1\}^{nm}}\sum_{(R,s)\in \mathcal{AMC},R\ni(X,Y)}[s\text{ is a solution to }\XorMissingString\text{ on }(X,Y)].
    \end{align*}
    
    Since $\mathcal{AMC}$ solves $\XorMissingString(n,m)$ with error $2^{-5n}$, we have that
    \begin{align*}
        &C_{\text{correct}}\\
        \ge {}&\sum_{(X,Y)\in \{0,1\}^{nm}\times \{0,1\}^{nm}}(1-2^{-5n})\cdot \sum_{(R,s)\in \mathcal{AMC},R\ni (X,Y)}1\tag{each $(X,Y)$ is correct w.h.p.} \\
        = {}&(1-2^{-5n})\cdot \sum_{(R,s)\in \mathcal{AMC}}\sum_{(X,Y)\in R}1 \\
        = {}&(1-2^{-5n})\cdot S. 
    \end{align*}

    On the other hand,
    \begin{align*}
        &C_{\text{correct}} \\
        ={}&\sum_{(R,s)\in \mathcal{AMC}}\sum_{(X,Y)\in R}[s\text{ is a solution to }\XorMissingString\text{ on }(X,Y)] \\
        ={}&\sum_{\substack{(R,s)\in \mathcal{AMC}\\|R|\ge 2^{2nm-m+2}}}\sum_{(X,Y)\in R}[s\text{ is a solution to }\XorMissingString\text{ on }(X,Y)] \\
        &+\sum_{\substack{(R,s)\in \mathcal{AMC}\\|R|< 2^{2nm-m+2}}}\sum_{(X,Y)\in R}[s\text{ is a solution to }\XorMissingString\text{ on }(X,Y)].
    \end{align*}
    It follows from \autoref{lem:rect_error_prob_lb} that the first summand is at most
    \begin{align*}
        \sum_{\substack{(R,s)\in \mathcal{AMC}\\|R|\ge 2^{2nm-m+2}}}(1-2^{-2n-2})\cdot |R|\le (1-2^{-4n})\cdot S.
    \end{align*}
    Let $|\mathcal{AMC}|$ denote the size of $\mathcal{AMC}$, then the second summand is at most
    \begin{align*}
        \sum_{\substack{(R,s)\in \mathcal{AMC}\\|R|< 2^{2nm-m+2}}}|R|\le{}& |\mathcal{AMC}|\cdot 2^{2nm-m+2}.\\
        \le{}& |\mathcal{AMC}|\cdot S/2^{m-2}. \tag{since $S\ge 2^{2nm}$}
    \end{align*}
    If $|\mathcal{AMC}| \le 2^{m/2}$, then the second summand is at most $2^{-m/2+2}\cdot S$, which is at most $2^{-8n}\cdot S$ since $m\ge 20n$. Summing everything together, we have that
    \begin{align*}
        C_{\text{correct}}\le (1-2^{-4n}+2^{-8n})\cdot S,
    \end{align*}
    which contradicts the previous bound of $C_{\text{correct}}\ge (1-2^{-5n})\cdot S$. Hence it must be the case that $|\mathcal{AMC}| > 2^{m/2}$.
\end{proof}

\paragraph{Lower bounds for pseudodeterministic protocols.} A consequence of \autoref{lem:xor_ms_lb} is that it also applies to \emph{pseudodeterministic} $\PostBPP$ communication protocols for $\XorMissingString$ with constant error (say $1/3$), as the error probability for such protocols can always be reduced to as small as possible by running it multiple times and outputting the majority answer. In contrast, the error probability of an arbitrary (non-pseudodeterministic) protocol cannot be amplified in general, since it is unclear how to verify whether a string is a valid solution to $\XorMissingString$.

\begin{claim}
    \label{clm:pseudo_det_error_reduction}
    Let $n\ge 1,1\le m<2^{n/2}$ be integers. Let $\Pi$ be a pseudodeterministic $\PostBPP$ communication protocol that solves $\XorMissingString(n,m)$ with error $1/3$ and has communication complexity $C$. Then for any $k\ge 1$, there exists a pseudodeterministic $\PostBPP$ communication protocol $\Pi_k$ that solves $\XorMissingString(n,m)$ with error $2^{-\Theta(k)}$ and has communication complexity $O(C\cdot k)$.
\end{claim}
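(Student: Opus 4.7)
The plan is to perform standard majority amplification, exploiting the fact that pseudodeterminism together with the $1/3$ error bound forces the canonical answer to be a genuine solution. I would run $\Pi$ in parallel $N=\Theta(k)$ times with independent public randomness, postselect on the event that \emph{every} parallel run outputs a non-$\bot$ value, and then output the mode of these $N$ outputs.

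\paragraph{Key steps.} First I would observe that for any input $(X,Y)$, the canonical value $s^\ast$ that $\Pi$ outputs with conditional probability $\ge 2/3$ must itself be a valid $\XorMissingString(n,m)$ solution: otherwise, the probability that the output is a non-solution (conditioned on not being $\bot$) would already be at least $2/3$, violating the $1/3$ error bound. Next I would define the amplified protocol $\Pi_k$ as above. Its communication complexity is $N\cdot C=O(Ck)$. The heart of the analysis is that, conditioned on the postselection event $E$ that no parallel run outputs $\bot$, the $N$ outputs are i.i.d.\ samples from $\Pi$'s conditional output distribution (by independence of the per-run randomness), and each equals $s^\ast$ with probability $\ge 2/3$. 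A Chernoff bound then gives that strictly more than $N/2$ of them equal $s^\ast$ except with probability $2^{-\Theta(N)}=2^{-\Theta(k)}$, and in that case the mode is $s^\ast$.

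\paragraph{Validity and obstacle.} To close out, I would verify that $\Pi_k$ is a legitimate pseudodeterministic $\PostBPP$ protocol: the postselection event $E$ occurs with probability $p^N>0$, where $p=\Pr_r[\Pi_r(X,Y)\ne\bot]>0$, so the non-triviality clause of \autoref{def: PostBPPcc} is satisfied; conditioned on non-$\bot$ output, $\Pi_k$ outputs $s^\ast$ with probability $1-2^{-\Theta(k)}$, giving simultaneously the pseudodeterminism guarantee (with canonical answer $s^\ast$) and the claimed error of $2^{-\Theta(k)}$. I do not expect any significant obstacle: the only nontrivial conceptual move is the initial identification of the pseudodeterministic canonical answer with a valid solution, after which the argument reduces to textbook majority amplification over i.i.d.\ heavy-element samples.
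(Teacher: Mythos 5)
Your proposal is correct and is exactly the argument the paper has in mind: the paper states \autoref{clm:pseudo_det_error_reduction} without proof, treating it as the standard repetition-plus-majority amplification it alludes to in the introduction (``repeating the protocol $\Theta(k)$ times and taking the majority answer''). Your write-up supplies the two points that make this routine argument go through in the postselection setting — that the canonical answer must be a valid solution (else the $1/3$ error bound is violated), and that conditioned on all $N$ runs being non-$\bot$ the outputs are i.i.d.\ from the conditional output distribution — so nothing is missing.
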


By setting $k = \Theta(n)$ in the above lemma and applying \autoref{lem:xor_ms_lb}, we have:

\begin{corollary}\label{cor: LB for pseudodet PostBPP}
    Let $n\ge 1$ and $20n\le m < 2^{n/2}$ be integers. Any pseudodeterministic $\PostBPP$ communication protocol that solves $\XorMissingString(n, m)$ with error probability $\le 1/3$ requires communication complexity $\Omega(m/n)$.
\end{corollary}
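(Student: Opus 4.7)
The plan is to obtain \autoref{cor: LB for pseudodet PostBPP} as a direct consequence of combining the error-amplification \autoref{clm:pseudo_det_error_reduction} with the $\PostBPP$ communication lower bound \autoref{lem:xor_ms_lb}. The motivation is the following: \autoref{lem:xor_ms_lb} gives us a strong lower bound, but only for protocols whose error is as small as $2^{-5n}$. The statement we want to prove concerns protocols with only \emph{constant} error $1/3$. Since the protocol is pseudodeterministic, error reduction is available, which lets us bridge this gap.

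Concretely, suppose towards a contradiction that there is a pseudodeterministic $\PostBPP$ communication protocol $\Pi$ for $\XorMissingString(n,m)$ with error $\le 1/3$ and communication complexity $C = o(m/n)$. First, I would invoke \autoref{clm:pseudo_det_error_reduction} with the amplification parameter $k = \Theta(n)$, with the constant in $\Theta$ chosen large enough to guarantee error at most $2^{-5n}$. This produces a (still pseudodeterministic, hence in particular a bona fide) $\PostBPP$ protocol $\Pi_k$ that solves $\XorMissingString(n,m)$ with error $\le 2^{-5n}$ and communication complexity $O(C\cdot n)$.

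Next, I would apply \autoref{lem:xor_ms_lb} to $\Pi_k$, which is valid since the hypothesis $20n \le m < 2^{n/2}$ is exactly what that lemma requires. This forces the communication complexity of $\Pi_k$ to be $\Omega(m)$, so $O(C\cdot n) \ge \Omega(m)$, i.e., $C \ge \Omega(m/n)$. This contradicts our assumption that $C = o(m/n)$, completing the argument.

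There is no real obstacle: the corollary is essentially a bookkeeping composition of two already-established results. The only minor subtlety is checking that the amplified protocol $\Pi_k$ produced by \autoref{clm:pseudo_det_error_reduction} indeed fits the framework of \autoref{def: PostBPPcc} (so that \autoref{lem:xor_ms_lb} applies to it), and that the constants hidden in $\Theta(k)$ can be set so the amplified error sits below the $2^{-5n}$ threshold of \autoref{lem:xor_ms_lb}; both are routine.
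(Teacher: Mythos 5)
Your proposal is correct and matches the paper's argument exactly: the paper also obtains this corollary by applying \autoref{clm:pseudo_det_error_reduction} with $k = \Theta(n)$ to drive the error below $2^{-5n}$ and then invoking \autoref{lem:xor_ms_lb}, yielding $O(C\cdot n) \ge \Omega(m)$ and hence $C = \Omega(m/n)$.
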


\subsection{An Extension: \texorpdfstring{$\XorMissingString$}{XOR-Missing-String} Lower Bound for List-Solvers}\label{sec: list-solvers}

We prove a strengthened lower bound for $\XorMissingString$ against \emph{list-solvers}, which are protocols that output a list of strings, and are considered correct if \emph{at least one} string in the list is a solution. Although this strengthened lower bound is not needed for \autoref{thm:io_postbpp_barrier}, it is relevant in the context of \emph{almost-everywhere} circuit upper bounds and will be useful in the proof of \autoref{thm: intro BPE}.%

\begin{definition}
    Let $f$ be a search problem over domain $\mathcal{X\times Y}$ and range $\mathcal{O}$. We say that a communication protocol $\Pi$ is a \defn{list-solver} for $f$, if it outputs a list of answers (i.e., its output is in $\mathcal{O}^k$ for some $k$).

    For a $\PostBPP$ list-solver $\Pi$, we say that it \defn{list-solves $f$ with error $\epsilon$}, if for any input $(X,Y)\in\mathcal{X\times Y}$,
    \begin{align*}
        \Pr[\text{one of the answers in }\Pi(X,Y)\text{ solves }(X,Y)\mid \Pi(X,Y)\ne \bot]\ge 1-\epsilon.
    \end{align*}
\end{definition}

\begin{restatable}{lemma}{XorMulti}
    \label{lem:xor_ms_lb_multi}
    Let $n\ge 1,1\le m<2^{n/2},k$ be parameters, where $m\ge 20nk$. Let $Q$ be a $\PostBPP$ list-solver of  $\XorMissingString(n,m)$ that outputs $k$ candidate strings. If $Q$ list-solves $\XorMissingString(n,m)$ with error $\le 2^{-5nk}$, then the complexity of $Q$ is at least $\Omega(m)$.
\end{restatable}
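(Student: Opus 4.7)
The plan is to follow the same three-step blueprint used in the proof of \autoref{lem:xor_ms_lb} — rectangle lemma, approximate majority cover characterization, counting argument — but adapted so that each rectangle carries a list of $k$ candidate strings instead of a single string. The key technical ingredient is a list-version of \autoref{lem:rect_error_prob_lb}, stated roughly as follows: for every rectangle $R\subseteq\{0,1\}^{nm}\times\{0,1\}^{nm}$ of size at least $2^{2nm-m+2+(k-1)(2n+2)}$ and every list $L=(s_1,\dots,s_k)$ of $n$-bit strings, there is a subrectangle $R'\subseteq R$ with $|R'|\ge 2^{-k(2n+2)}|R|$ such that no $s_i$ is a solution to $\XorMissingString$ on any input in $R'$. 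I would prove this by iterated application of \autoref{lem:rect_error_prob_lb}: starting from $R_0=R$, at step $i$ the current rectangle has size $|R_{i-1}|\ge 2^{-(i-1)(2n+2)}|R|\ge 2^{2nm-m+2}$ (using the assumed lower bound on $|R|$), so \autoref{lem:rect_error_prob_lb} produces $R_i\subseteq R_{i-1}$ of size $\ge 2^{-2n-2}|R_{i-1}|$ on which $s_i$ is never a valid solution. After $k$ iterations, the rectangle $R':=R_k$ works for the whole list.

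Next I would extend \autoref{clm:protocol_to_amc} to list-solvers in the obvious way: a $\PostBPP$ list-solver $Q$ of complexity $C$ is a uniform distribution over $2^{k_0}$ deterministic list-solvers, each of complexity $C-k_0$, and each of them partitions the input space into at most $2^{C-k_0}$ rectangles on which the output list is constant. Attaching the list as a label (and discarding the $\bot$-branches) yields an approximate majority cover $\mathcal{AMC}$ of size at most $2^C$ whose rectangles are labeled by $k$-tuples $L\in(\{0,1\}^n)^k$, and correctness is now ``one of the $k$ labels is a solution.''

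The final counting step mirrors the proof of \autoref{lem:xor_ms_lb}. Let $T:=2^{2nm-m+2+(k-1)(2n+2)}$ be the threshold separating ``large'' and ``small'' rectangles, and let $S:=\sum_{(R,L)\in\mathcal{AMC}}|R|\ge 2^{2nm}$. Define $C_{\text{correct}}$ as the total count of triples $((X,Y),(R,L))$ with $(X,Y)\in R\in\mathcal{AMC}$ and some $s\in L$ a valid $\XorMissingString$ solution for $(X,Y)$. The error bound gives $C_{\text{correct}}\ge(1-2^{-5nk})\cdot S$. On the other hand, splitting the sum over $\mathcal{AMC}$ into large rectangles ($|R|\ge T$) and small rectangles ($|R|<T$), the list-version of the rectangle lemma bounds the large contribution by $(1-2^{-k(2n+2)})\cdot S$ and the small contribution by $|\mathcal{AMC}|\cdot T$. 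Combining these,
\[
|\mathcal{AMC}|\cdot T \;\ge\; \bigl(2^{-k(2n+2)}-2^{-5nk}\bigr)\cdot S \;=\; \Omega\bigl(2^{-k(2n+2)}\bigr)\cdot S,
\]
so $|\mathcal{AMC}|\ge \Omega\bigl(2^{m-(2n+2)k-(k-1)(2n+2)}\bigr)=2^{\,m-O(nk)}$, which is $2^{\Omega(m)}$ by the hypothesis $m\ge 20nk$, yielding $C=\Omega(m)$.

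The main obstacle I expect is simply bookkeeping: ensuring that the rectangle size stays above the $2^{2nm-m+2}$ threshold of \autoref{lem:rect_error_prob_lb} throughout all $k$ iterations, and that the gap between $2^{-k(2n+2)}$ (the ``rectangle error'') and $2^{-5nk}$ (the assumed protocol error) remains wide enough to drive the counting argument. Both are guaranteed by the gap $m\ge 20nk$, so the proof goes through by essentially the same calculation as the single-answer case.
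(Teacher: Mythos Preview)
Your proposal is correct and follows essentially the same approach as the paper: iterate \autoref{lem:rect_error_prob_lb} $k$ times to obtain a list-version of the rectangle lemma (the paper states this as \autoref{clm:rect_error_prob_lb_multi}, with the slightly looser threshold $2^{2nm-m+2+(2n+2)k}$ instead of your $2^{2nm-m+2+(k-1)(2n+2)}$), convert the list-solver to an approximate majority cover with $k$-tuple labels, and run the same large/small rectangle counting argument. The bookkeeping you flag is exactly what the paper does, and the gap $m\ge 20nk$ handles it in both versions.
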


It is natural to require that the error probability is at least $2^{-O(nk)}$, since this bound is achieved by a trivial protocol that outputs $k$ random strings.

The proof is similar to \autoref{lem:xor_ms_lb}: We use the lower bound on rectangles (\autoref{lem:rect_error_prob_lb}) to show that, for large enough rectangles, even when the protocol outputs $k$ candidate strings, there still exists a significant fraction of inputs in the rectangle on which all strings fail. We then use approximate majority covers to translate this result into a communication lower bound.

\begin{proof}

    We start by proving a stronger version of \autoref{lem:rect_error_prob_lb}.

    \begin{claim}
        \label{clm:rect_error_prob_lb_multi}
        Let $n\ge 1,1\le m<2^{n/2},k$ be parameters. Let $R\subseteq \{0,1\}^{nm}\times \{0,1\}^{nm}$ be a rectangle of size at least $2^{2nm-m+2+(2n+2)\cdot k}$. Let $s_1,\dots,s_k$ be any set of $k$ $n$-bit strings. Then there exists some $R'$, which is a subrectangle of $R$ and has size at least $2^{-(2n+2)\cdot k}|R|$, such that for any $(X,Y)\in R'$, none of $s_1,\dots,s_k$ is a solution to $\XorMissingString$ on $(X,Y)$.
    \end{claim}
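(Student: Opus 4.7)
The plan is to prove the claim by iteratively applying \autoref{lem:rect_error_prob_lb} $k$ times, peeling off one string $s_i$ at a time. Setting $R_0 := R$, at step $i \in \{1, \dots, k\}$ I would invoke \autoref{lem:rect_error_prob_lb} on the rectangle $R_{i-1}$ with the target string $s_i$, obtaining a subrectangle $R_i \subseteq R_{i-1}$ with $|R_i| \ge 2^{-(2n+2)} |R_{i-1}|$ and on which $s_i$ is never a solution to $\XorMissingString$. After $k$ iterations, I set $R' := R_k$; this rectangle satisfies $|R'| \ge 2^{-(2n+2)k}|R|$ by telescoping the shrinkage factors.

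The key observation that makes the induction go through is a monotonicity property: if $s_j$ is not a solution on any input in $R_{i-1}$, then the same holds on every subset of $R_{i-1}$, and in particular on $R_i$. Thus the ``not-a-solution'' guarantees established in rounds $1, 2, \dots, i-1$ are inherited by $R_i$, so that by the end of the iteration none of $s_1, \dots, s_k$ is a solution to $\XorMissingString$ on any input in $R' = R_k$, which is exactly the claim.

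The one technical condition to verify is that the precondition $|R_{i-1}| \ge 2^{2nm - m + 2}$ of \autoref{lem:rect_error_prob_lb} holds at every step. Unrolling the recursion and using the assumed bound $|R| \ge 2^{2nm - m + 2 + (2n+2)k}$ gives
\[
|R_{i-1}| \;\ge\; 2^{-(2n+2)(i-1)} |R| \;\ge\; 2^{\,2nm - m + 2 + (2n+2)(k - i + 1)} \;\ge\; 2^{\,2nm - m + 2}
\]
for every $1 \le i \le k$, so every invocation is legitimate.

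There is no genuine obstacle here; the argument is a straightforward induction on $k$. The only subtle point worth flagging is that the monotonicity step crucially relies on \autoref{lem:rect_error_prob_lb} producing a \emph{subrectangle} of its input (rather than an arbitrary dense subset), since the inductive hypothesis ``$s_j$ is not a solution anywhere in $R_{i-1}$'' is only preserved under taking subsets, not under more exotic set operations. Had \autoref{lem:rect_error_prob_lb} only guaranteed a dense subset that is itself a rectangle but not necessarily contained in $R_{i-1}$, the iterated argument would break and one would have to argue more carefully via a simultaneous pigeonhole over all $k$ strings.
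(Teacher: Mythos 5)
Your proposal is correct and matches the paper's proof exactly: both iterate \autoref{lem:rect_error_prob_lb} once per string $s_i$, inherit the earlier ``not-a-solution'' guarantees by taking subrectangles, and check that the size precondition $|R_{i-1}|\ge 2^{2nm-m+2}$ survives all $k$ rounds. Your write-up is in fact slightly more careful than the paper's, which only verifies the precondition explicitly for the second application.
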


    \begin{claimproof}
        The proof is by repeatedly applying \cref{lem:rect_error_prob_lb}. Given $R$ and $s_1,\dots,s_k$, we first apply \cref{lem:rect_error_prob_lb} on $R$ and $s_1$ to construct a subrectangle $R_1\subset R$, which has size at least $2^{-2n-2}|R|$, on which $s_1$ is never a solution. Next, since $|R_1|\ge 2^{-2n-2}|R|\ge 2^{2nm-m+2}$, we can apply \cref{lem:rect_error_prob_lb} on $R_1$ and $s_2$, and obtain another subrectangle $R_2\subset R_1$, on which $s_2$ is never the solution. This process can be repeated, and the end result is a subrectangle $R_k$ that refutes every $s_i$.
    \end{claimproof}

    Next, we convert $Q$ into an approximate majority cover. 
    
    Formally, define the $k$-fold $\XorMissingString(n,m)$ to be the search problem where, given an instance $(X,Y)$ of $\XorMissingString(n,m)$, the task is to output $k$ strings so that at least one of them solves $(X,Y)$. Since $Q$ is a protocol for this problem, we can use \autoref{clm:protocol_to_amc} to obtain an approximate majority cover $\mathcal{AMC}$, such that $\mathcal{AMC}$ has size at most $2^C$ (where $C$ is the complexity of $Q$), and the label corresponding to each rectangle is a list of $k$ strings. $\mathcal{AMC}$ satisfies that, for any $(X,Y)\in\{0,1\}^{nm}\times \{0,1\}^{nm}$,
    \begin{align}
        \label{equ:multi_solve}
        \Pr_{(R,s_1,\dots,s_k)\sim\mathcal{AMC}}[\text{one of }s_1,\dots,s_k\text{ solves }(X,Y)\mid (X,Y)\in R]\ge 1-2^{-5nk}.
    \end{align}
    
    Finally, we show that $\mathcal{AMC}$ has size at least $2^{\Omega(m)}$. Let

    \begin{align*}
        C_{\text{correct}}=\sum_{(X,Y)}\sum_{(R,s_1,\dots,s_k)\in \mathcal{AMC},(X,Y)\in R}[\text{one of }s_1,\dots,s_k\text{ solves }(X,Y)].
    \end{align*}

    Let $S=\sum_{(R,s_1,\dots,s_k)\in \mathcal{AMS}}|R|$ denote the total size. The following computations are similar to the proof of \autoref{lem:xor_ms_lb}, where similar steps are simplified.
    
    By \eqref{equ:multi_solve}, we have that
    \begin{align*}
        C_{\text{correct}}\ge (1-2^{-5nk})\cdot S.
    \end{align*}

    On the other hand, using \autoref{clm:rect_error_prob_lb_multi}, we have that 
    \begin{align*}
        C_{\text{correct}}&\le \sum_{R\in \mathcal{AMC}:|R|\ge 2^{2nm-m+2+(2n+2)\cdot k}}(1-2^{-(2n+2)\cdot k})\cdot |R|+\sum_{R\in \mathcal{AMC}:|R|\le 2^{2nm-m+2+(2n+2)\cdot k}}|R| \\
        &\le (1-2^{-(2n+2)\cdot k})\cdot S+2^C\cdot 2^{2nm-m+2+(2n+2)\cdot k}.
    \end{align*}
    Since $\mathcal{AMC}$ solves every input, we must have that $S\ge 2^{2nm}$. Therefore, the second term is at most $2^C\cdot 2^{-m+2+(2n+2)\cdot k}\cdot S$, which is $2^{-\Omega(m)}\cdot S$ since $m\ge 20nk$. \gpt{Clarity: Replace “which is $-\Omega(m)$” with “which is $2^{-\Omega(m)}\cdot S$.”} We thus have that
    \begin{align*}
        (1-2^{(2n+2)\cdot k})\cdot S+2^C\cdot 2^{-\Omega(m)}\cdot S\ge (1-2^{-5nk})\cdot S.
    \end{align*}
    For this to hold, it must be the case that $C=\Omega(m)$.
\end{proof}

\subsection{Proving the Barrier}

{
    \def\ImplyMA{0}
    \def\Oracle{A}
    \ThmioPostBPPBarrier*
}
\def\GOOD{\mathcal{GOOD}}
\begin{proof}[Proof of \autoref{thm:io_postbpp_barrier}]
    Let $M_1, M_2, \dots$ be a syntactic enumeration of $\PostBPE$ algorithms each running in time $2^n$ (that is, each $M_i$ may or may not satisfy the $\PostBPP$ promise). If we can prove a barrier for all such algorithms, then we can also prove a barrier for all algorithms running in time $2^{O(n)}$ by a padding argument. We assume that each machine $M_i$ appears infinitely many times in the list $(M_i)_{i\in \mathbb{N}}$.
    
    Let $n_1$ be a large enough constant and set $n_i = 4^{n_{i-1}}$ for every $i > 1$. Let $m_i = n_i^{20}$ for every $i\ge 1$. Recall that logarithms are always base $2$. The oracle $A$ that we construct will have the following structure: On input strings whose lengths are not of the form $2\log m_i$, the value of $A$ will always be zero. For every $i\ge 1$, the truth table of $A$ on input length $2\log m_i$ will encode an instance $(X_i, Y_i)$ of $\XorMissingString(n_i, m_i)$. More specifically, the truth table of $A\cap \{0, 1\}^{2\log m_i}$ restricted to inputs whose first bit is $0$ (resp.~$1$) will be equal to $X_i$ (resp.~$Y_i$) padded with zeros. Note that the length of $(X_i, Y_i)$ is $2m_i\cdot n_i \le 2^{2\log m_i}$.
    
    For every $i\ge 1$, the instance $(X_i, Y_i)$ is designed to diagonalize against $M_i^{\widetilde{A}}$. More precisely, we want the following property to hold: let $\GOOD_i$ be the set of inputs $x \in \{0, 1\}^{\log n_i}$ such that the execution of $M_i^{\widetilde{A}}(x)$ satisfies the semantics of $\PostBPP$, then there exists a \emph{non-solution} $\alpha \in \{0, 1\}^{n_i}$ of the $\XorMissingString$ instance $(X_i, Y_i)$ such that $\alpha_x = M_i^{\widetilde{A}}(x)$ for every $x \in \GOOD_i$. Note that $\alpha = (X_i)_a \oplus (Y_i)_b$ for some $a, b \in [m_i]$, hence by hardwiring $a$ and $b$ we can construct an $A$-oracle circuit of size $O(\log m_i) \le O(\log n_i)$ whose truth table is equal to $\alpha$, and it follows that the $\pr\PostBPE$ problem defined by $M_i^{\widetilde{A}}$ on input length $\log n_i$ can be computed by linear-size $A$-oracle circuits. Therefore, it suffices to satisfy the aforementioned property.

    Note that, when considering the truth table of $M^{\widetilde{A}}_i$ on input length $\log n_i$, the algorithm $M^{\widetilde{A}}_i$ can only access $\widetilde{A}$ on inputs of length $\le n_i$. Therefore, we only need to show that $M^{\widetilde{A}_{\le n_i}}_i$ fails to solve $(X_i,Y_i)$. Since $2\log m_{i+1}> n_i$, this implies that the algorithm only sees $(X_{\le i},Y_{\le i})$.

    We construct our oracle inductively. Suppose that we have constructed $(X_j, Y_j)$ for every $j < i$, and we need to construct $(X_i, Y_i)$. Let $P_i$ be the following $\PostBPP$ communication protocol that tries to solve $\XorMissingString(n_i, m_i)$. Given an instance $(X, Y)$, define an oracle $A$ such that $(X_{<i}, Y_{<i})$ are the previously fixed values and $(X_i, Y_i) = (X, Y)$. Then, Alice and Bob output the ``truth table'' of $M_i^{\widetilde{A}_{\le n_i}}$ on inputs of length $\log n_i$, denoted as ${\bf tt} \in \{0, 1\}^{n_i}$: for every $x \in \{0, 1\}^{\log n_i}$, Alice and Bob simulate the computation of $M_i^{\widetilde{A}_{\le n_i}}(x)$ using (the $\PostBPP$ version of) \autoref{thm:transfer} repeatedly for $(1000n_i+1)$ times, and output the majority outcome as the $x$-th bit of ${\bf tt}$. (We use boldface to emphasize that ${\bf tt}$ is a random variable). Let $\GOOD_i(X, Y)$ denote the set of inputs $x \in \{0, 1\}^{\log n_i}$ such that the computation of $M_i^{\widetilde{A}_{\le n_i}}(x)$ satisfies $\PostBPP$ promise, and let $f_i : \GOOD_i(X, Y) \to \{0, 1\}$ denote the promise problem defined by $M_i^{\widetilde{A}_{\le n_i}}$:
    \[\GOOD_i^b(X, Y) = \mleft\{x \in \{0, 1\}^{\log n_i}: \Pr\mleft[M_i^{\widetilde{A}_{\le n_i}}(x) = b~\biggl|~M_i^{\widetilde{A}_{\le n_i}}(x) \ne \bot\bigr.\mright] \ge 2/3\mright\},\]
    \[\GOOD_i(X, Y) = \GOOD_i^0(X, Y) \cup \GOOD_i^1(X, Y),\]
    \[f_i(x) = \begin{cases}0 & \text{if }x \in \GOOD_i^0(X, Y);\\1 & \text{if }x \in \GOOD_i^1(X, Y).\end{cases}\]

    We say that ${\bf tt}$ is \emph{consistent} with $f_i$ if for every $x \in \GOOD_i(X, Y)$, ${\bf tt}_x = f_i(x)$. Since the computation of $M_i^{\widetilde{A}_{\le n_i}}(x)$ is repeated $(1000n_i+1)$ times, the probability that ${\bf tt}$ is consistent with $f_i$ is at least $1 - 2^{-10n_i}$. On the other hand, note that the behavior of $P_i$ is the same as some $\PostBPP$ algorithm that has oracle access to $\widetilde{A}$ and runs in time $O(n_i^3)$ (since it considers $n_i$ possible inputs, simulates $M_i$ for $O(n_i)$ times on each of them, and each simulation takes $O(n_i)$ time), hence \autoref{thm:transfer} implies that $P_i$ can be implemented by a $\PostBPP$ communication protocol of complexity $O(n_i^{9})=o(m_i)$. It follows from \autoref{lem:xor_ms_lb} that $P_i$ cannot solve $\XorMissingString(n_i, m_i)$ with success probability more than $1-2^{-5n_i}$; in other words, there exists some $(X, Y)$ such that ${\bf tt}$ is a non-solution of $(X, Y)$ w.p.~at least $2^{-5n_i}$. By a union bound, there is a non-zero probability that both of the following hold simultaneously: ${\bf tt}$ is consistent with $f_i$ and ${\bf tt}$ is a non-solution of $(X, Y)$. This means that there is a non-solution $\alpha \in \{0, 1\}^{n_i}$ of $(X, Y)$ such that for every $x \in \GOOD_i(X, Y)$, $M_i^{\widetilde{A}_{\le n_i}}(x) = \alpha_x$. Setting $(X_i, Y_i) = (X, Y)$, this establishes the property we want for $i$, and we can continue our construction for $i+1$.
\end{proof}

\section{Alternative Almost-Everywhere Algebrization Barrier for \texorpdfstring{$\BPE$}{BPE}}

In \cite{DBLP:journals/toct/AaronsonW09}, the authors proved almost-everywhere algebrization barriers for many common complexity classes such as $\NEXP$ and $\BP\EXP$. However, their barriers were constructed using multiquadratic extensions, which are more difficult to manipulate than multilinear extensions, and are arguably less clean. Barriers based on multiquadratic extensions also fail to imply \emph{affine relativization} barriers in the sense of~\cite{DBLP:journals/toct/AydinliogluB18}. It was unknown whether these barriers can be proven using only communication complexity. In this section, we use the techniques developed for $\XorMissingString$ to show that these almost-everywhere algebrization barriers can indeed be proven using communication complexity only.

{
    \def\Oracle{A}
    \ThmBPEBarrier*
}
\subsection{Reducing to Communication Lower Bounds for \texorpdfstring{$\XorMissingString$}{XOR-Missing-String}}

We first show that certain communication lower bounds suffice for proving our barrier result. Roughly speaking, we consider a setting where there are many inputs $\calI_1, \calI_2, \dots, \calI_\ell$ and many protocols $\calP_1, \calP_2, \dots, \calP_\ell$, every protocol $\calP_i$ receives every input $(\calI_1, \dots, \calI_\ell)$, and the goal of $\calP_i$ is to solve the problem on input $\calI_i$. We want to prove lower bounds of the following form: there is a sequence of bad inputs $(\calI_1, \dots, \calI_\ell)$ such that \emph{no protocol $\calP_i$ could solve $\calI_i$ correctly, even after seeing the inputs of other protocols}. This setting captures win-win analyses: for example, if $\calP_i$ fails to solve $\calI_i$, then $\calI_i$ might contain useful information that helps $\calP_j$ solve $\calI_j$. In fact, this is indeed what happened in many win-win analyses in complexity theory: see e.g., \cite[Section 1.4.2]{CHR24}, \cite[Section 5.2]{LuORS24}, and \autoref{sec: BFT as algebraic query algorithms} of this paper.

We set the following parameters. Let $n_1=2^{C}$ for some sufficiently large constant $C$ and $n_i = n_{i-1}^4$ for every $i > 1$. For every $i\ge 1$, let $m_i = n_i^{100}$, the $i$-th input will be an instance of $\XorMissingString(n_i, m_i)$. We also define a sequence of error thresholds $\{p_i\}_{i\in\N}$: let $p_1 = 0.6$ and $p_i = p_{i-1} + 1/(10i^2)$ for every $i > 1$, then $p_i < 0.8$ for every $i$. Looking ahead, it will be useful in the proof of \autoref{lemma: finite barrier} that there is a non-trivial gap (of $1/(10i^2)$) between each $p_i$ and $p_{i+1}$.

We need the following communication lower bound:

\begin{theorem}\label{lemma: finite barrier}
    Let $t\ge 1$ be an integer and $\{Q_{i, j}\}_{1\le j\le i\le t}$ be a set of randomized communication protocols satisfying the following:
    \begin{itemize}
        \item In each $Q_{i, j}$, Alice receives $\vec{X} := (X_1, \dots, X_t)$ and Bob receives $\vec{Y} := (Y_1, \dots, Y_t)$, where each $(X_i, Y_i)$ is an instance of $\XorMissingString(n_i, m_i)$.
        \item The communication complexity of each $Q_{i, j}$ is at most $n_i^4$ and each $Q_{i, j}$ outputs a string of length $n_i$.
    \end{itemize}
    Then there exists a sequence of inputs $\{(X_i, Y_i)\}_{i\le t}$ such that every $Q_{i, j}$ fails to solve the instance $(X_i, Y_i)$. More formally, for every $1\le j\le i\le t$:
    \begin{itemize}
        \item either no string is outputted by $Q_{i, j}(\vec{X}, \vec{Y})$ with probability $>p_t$; or
        \item the (unique) string outputted with probability $>p_t$ is not a solution to $\XorMissingString$ on the input $(X_i, Y_i)$.
    \end{itemize}
\end{theorem}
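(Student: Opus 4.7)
The plan is to prove this by strong induction on $t$, reducing the inductive step to the $\PostBPP$ list-solver lower bound for $\XorMissingString$ (\autoref{lem:xor_ms_lb_multi}).

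For the base case $t=1$, there is a single protocol $Q_{1,1}$ of complexity $n_1^4 \ll m_1$. If $Q_{1,1}$ succeeded at threshold $p_1 = 0.6 > 1/2$ on every input, then amplifying by $\Theta(n_1)$ independent repetitions and taking the majority would yield a pseudodeterministic $\PostBPP$ protocol with error $\le 2^{-5 n_1}$ and complexity $O(n_1^5)$, contradicting \autoref{lem:xor_ms_lb}. So at least one input $(X_1, Y_1)$ refutes $Q_{1,1}$.

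For the inductive step, assume the theorem at level $t-1$ and suppose for contradiction that there is a level-$t$ collection in which every sequence is solved by some $Q_{i,j}$. For each $(X_t, Y_t)$, applying the inductive hypothesis to $\{Q_{i,j}\}_{i \le t-1}$ with $(X_t, Y_t)$ hardcoded, the set $G(X_t, Y_t)$ of sequences $(X_{<t}, Y_{<t})$ refuting all small protocols at threshold $p_{t-1}$ is non-empty. I would then derive a contradiction with \autoref{lem:xor_ms_lb_multi} by constructing a $\PostBPP$ list-solver $\Pi$ for $\XorMissingString(n_t, m_t)$ that outputs $t$ candidate strings and has error $\le 2^{-5 n_t t}$. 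On input $(X_t, Y_t)$, $\Pi$ first samples $(X_{<t}, Y_{<t})$ via public randomness (so that both parties fully share it), then postselects on the sample lying in $G(X_t, Y_t)$, and finally runs heavily amplified versions of each $Q_{t,j}$, outputting the $t$-tuple. Conditioned on postselection, the small protocols all fail at threshold $p_{t-1}$ (and hence at $p_t$), so the contradiction hypothesis forces some $Q_{t,j}$ to succeed at threshold $p_t$ on the combined input; its amplified output then appears in the list with probability $1 - 2^{-\Omega(n_t)}$.

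The main obstacle is implementing the postselection event ``sample lies in $G(X_t, Y_t)$'' within a small deterministic-protocol communication budget, since semantic verification of $G$-membership would in principle require enumerating each small $Q_{i,j}$'s randomness, which is prohibitively expensive. My plan is to verify it approximately via three ingredients that together absorb all the approximation error: (a) the threshold gap $p_t - p_{t-1} = 1/(10 t^2)$, which lets Chernoff-bounded empirical estimation distinguish ``top output probability $> p_{t-1}$'' from ``top output probability $< q$'' for a carefully chosen $q \in (p_{t-1} - 1/(20t^2), p_{t-1})$; (b) a strengthened inductive hypothesis that also lower-bounds the density $\delta_{t-1}$ of $G(X_t, Y_t)$, so that postselection on a non-sparse event does not over-amplify the estimation error; and (c) local non-solution checking, enabled by both parties sharing the sampled small inputs via public randomness. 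With these three ingredients, $\poly(n_t, t, \log(1/\delta_{t-1}))$ bits of communication suffice to verify the postselection up to $2^{-5 n_t t}$ error, keeping the total communication $\ll m_t = n_t^{100}$ and closing the induction by setting $\delta_t$ to the resulting density of good sequences.
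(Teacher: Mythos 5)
Your proposal is correct and follows essentially the same route as the paper's proof: induction on $t$, sampling $(X_{<t},Y_{<t})$ with public randomness, approximate postselection on the failure of the small protocols via empirical estimation exploiting the gap $p_t-p_{t-1}=1/(10t^2)$, free local non-solution checking, and a contradiction with the $\PostBPP$ list-solver lower bound (\autoref{lem:xor_ms_lb_multi}). The only divergence is your ingredient (b): the paper does not strengthen the induction to track the density of $G(X_t,Y_t)$, but instead uses the trivial bound $\Pr[(X_{<t},Y_{<t})\in G]\ge 2^{-K}$ (at least one good truncated oracle among $2^K$) and drives the empirical-estimation error down to $2^{-2K}$ by repeating each small protocol $K\cdot t^{10}$ times, so that conditioning on the postselection event cannot over-amplify the error.
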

 \gpt{Label hygiene: The label key contains a space (\texttt{lemma: finite barrier}). Consider \texttt{lemma:finite-barrier} to avoid fragile references.}

\begin{proof}[Proof of \autoref{thm: intro BPE} from \autoref{lemma: finite barrier}]
    Similar to the proof of \autoref{thm:io_postbpp_barrier}, our oracle encodes a sequence of $\XorMissingString$ instances. An infinite sequence of instances $\{(X_i, Y_i)\}_{i\in \N}$ where each $(X_i, Y_i)$ is an instance of $\XorMissingString(n_i, m_i)$ corresponds to an oracle $A$ defined as follows:\begin{itemize}
        \item For every $i\ge 1$, the truth table of $A$ on input length $2\log m_i$ encodes the instance $(X_i, Y_i)$. More precisely, the truth table of $A\cap \{0, 1\}^{2\log m_i}$ restricted to inputs whose first bit is $0$ (resp.~$1$) is equal to $X_i$ (resp.~$Y_i$) padded with zeros. Note that the length of $(X_i, Y_i)$ is $2m_i\cdot n_i \le 2^{2\log m_i}$.
        \item If the input length is not of the form $2\log m_i$, then $A$ always returns $0$.
    \end{itemize}
    Oracles of this form are called \defn{well-structured oracle}. When the oracle $A$ corresponds to a finite sequence of $t$ instances $\{(X_i, Y_i)\}_{i\le t}$ (that is, $(X_i, Y_i)$ is the all-zero string for every $i > t$), we say that $A$ is a \defn{truncated oracle at level $t$}. Note that we consider oracles that are much ``denser'' than those considered in \autoref{thm:io_postbpp_barrier}, in the sense that the lengths of adjacent $\XorMissingString$ instances encoded in the oracle are much closer to each other. (Recall that we set $n_i = n_{i-1}^4$ in this section but we set $n_i = 4^{n_{i-1}}$ in \autoref{thm:io_postbpp_barrier}.) 

\paragraph{Converting algorithms to communication protocols.} Let $M_1,M_2,\dots$ be a syntactic enumeration of $\BPE$ algorithms, each running in time $2^n$. For any integer $i\ge 1,j\ge 1$, let $P_{i,j}$ be the communication protocol that outputs the truth table of $M_j^{\widetilde A}$ on inputs of length less than $\log n_i$:
\begin{itemize}
    \item Alice (resp.~Bob) receives as inputs the subfunction $A_0$ (resp.~$A_1$) of a well-structured oracle $A$, where $A_0$ (resp.~$A_1$) denotes the subfunction of $A$ where the first bit is restricted to $0$ (resp.~$1$). Note that since $M_j$ runs in $2^{\log n_i} = n_i$ time, for some large enough $b_i := O(\log n_i)$, it only has access to the instances $(X_j, Y_j)$ for $j\le b_i$. Hence one can view $P_{i, j}$ as a communication protocol where Alice (resp.~Bob) receives $(X_1, \dots, X_{b_i})$ (resp.~$(Y_1, \dots, Y_{b_i})$) as inputs.
    \item Then, for every $\ell < \log n_i$, Alice and Bob simulate the execution of $M_j^{\widetilde A}$ on inputs of length $\ell$ and obtains a truth table $tt_\ell$. They output the concatenation of $tt_\ell$ for every $\ell < \log n_i$; this is a bit-string of length $n_i - 1$, and we append a bit $0$ at the end to make the output length equal to $n_i$.
    \item On each input of $M_j^{\widetilde{A}}$, Alice and Bob simulate the execution of $M_j^{\widetilde A}$ for $\Theta(n_i)$ times and output the majority answer. This means that when $M_j^{\widetilde A}$ satisfies the semantics of a $\BPP$ algorithm, $P_{i,j}$ successfully computes the truth table with probability $\ge 1 - 2^{-\Omega(n_i)}$.
\end{itemize}
Using (a generalization of) \autoref{thm:transfer}, $P_{i,j}$ can be implemented by a randomized communication protocol with complexity $O(n_i^3) < n_i^4$ if $n_i$ is large enough.

\paragraph{Constructing the oracle.} For each $k\in \N$, define $S_k$ to be the set of truncated oracles at level $b_k$ (i.e., instance sequences $\{(X_i, Y_i)\}_{i \le b_k}$) such that for every $1\le j\le i\le k$, the protocol $P_{i,j}$ fails to solve $(X_i, Y_i)$ with probability threshold $0.8$. Using \autoref{lemma: finite barrier}, we can show that

    \begin{claim}\label{claim: Sk nonempty}
        For any $k\ge 1$, $S_k$ is nonempty.
    \end{claim}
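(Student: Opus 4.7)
The plan is to obtain \autoref{claim: Sk nonempty} as an essentially direct application of the finite-barrier result \autoref{lemma: finite barrier}, with a small bit of bookkeeping to match the index ranges and the probability thresholds. The protocols $P_{i,j}$ that we want to refute are indexed by pairs $1\le j\le i\le k$, but a truncated oracle at level $b_k$ has $\XorMissingString$ instances at every level $i\le b_k$, so we will invoke the finite barrier with $t:=b_k$ rather than $t:=k$.

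Concretely, I would set up the following collection $\{Q_{i,j}\}_{1\le j\le i\le b_k}$ of protocols to feed into \autoref{lemma: finite barrier}. For $1\le j\le i\le k$, let $Q_{i,j}:=P_{i,j}$; these already have communication complexity $O(n_i^3)\le n_i^4$ (for the constant $C$ chosen large enough so that $n_1$ is sufficiently big), output an $n_i$-bit string, and take as inputs $\vec X,\vec Y$ with $(X_1,\dots,X_{b_k}),(Y_1,\dots,Y_{b_k})$—by construction $P_{i,j}$ only looks at the first $b_i\le b_k$ coordinates, but \autoref{lemma: finite barrier} allows each protocol to ignore coordinates. For the remaining indices $k<i\le b_k$ and $1\le j\le i$, define $Q_{i,j}$ to be the trivial deterministic protocol of complexity $0$ that outputs $0^{n_i}$ regardless of input.

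Applying \autoref{lemma: finite barrier} with $t=b_k$ yields a sequence of instances $\{(X_i,Y_i)\}_{i\le b_k}$ such that for every $1\le j\le i\le b_k$, either $Q_{i,j}$ outputs no string with probability $>p_{b_k}$, or the unique string it outputs with probability $>p_{b_k}$ is not a solution to $\XorMissingString$ on $(X_i,Y_i)$. Let $A$ be the truncated oracle at level $b_k$ encoded by this sequence. Since $p_{b_k}<0.8$, both conclusions imply that no solution string is output by $Q_{i,j}$ with probability $\ge 0.8$: in the first case trivially, and in the second case because a solution output with probability $>0.8\ge p_{b_k}$ would contradict the uniqueness clause. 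Restricting to $i\le k$, this is exactly the condition that $P_{i,j}$ fails to solve $(X_i,Y_i)$ with threshold $0.8$, so $A\in S_k$.

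The only place where one must be slightly careful is the quantitative match between the threshold $0.8$ used to define $S_k$ and the threshold $p_t$ supplied by \autoref{lemma: finite barrier}; this is handled by the fact that $p_i<0.8$ for all $i$ as noted just before the statement of \autoref{lemma: finite barrier}. There is no genuine obstacle here—the substance of the argument has already been isolated into \autoref{lemma: finite barrier}, and the role of \autoref{claim: Sk nonempty} is just to package its conclusion in the oracle-language that the rest of the construction consumes.
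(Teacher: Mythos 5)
Your proof is correct and follows essentially the same route as the paper's: invoke \autoref{lemma: finite barrier} with $t=b_k$, take $Q_{i,j}=P_{i,j}$ for $i\le k$ and trivial protocols for $k<i\le b_k$, and use $p_{b_k}<0.8$ to translate the lemma's threshold into the one defining $S_k$. The extra care you take with the threshold translation is a correct (and slightly more explicit) version of what the paper leaves implicit.
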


    \begin{claimproof}
        This is a consequence of \autoref{lemma: finite barrier}. More specifically, we use $b_k$ as the parameter $t$ in \autoref{lemma: finite barrier}, let $Q_{i,j}=P_{i,j}$ for every $i\le k$, and $Q_{i,j}$ be a trivial protocol for $i>k$. Note that each $Q_{i,j}$ is a communication protocol with complexity $\le n_{i}^4$. Hence, by invoking \autoref{lemma: finite barrier}, we obtain a truncated oracle $A$ at level $b_k$, on which every protocol $P_{i,j}$ fails with probability threshold $p_{b_k} < 0.8$. It follows that $A\in S_k$.
    \end{claimproof}

    Let $t_1\le t_2$. For instance sequences (i.e., truncated oracles) $A^{(1)} = \{(X_i^{(1)}, Y_i^{(1)})\}_{i\le t_1}$ and $A^{(2)} = \{(X_i^{(2)}, Y_i^{(2)})\}_{i\le t_2}$, we say that $A^{(1)}$ is a \defn{prefix} of $A^{(2)}$ if $(X_i^{(1)}, Y_i^{(1)}) = (X_i^{(2)}, Y_i^{(2)})$ for every $i\le t_1$; that is, they agree on the first $t_1$ instances. We say that a truncated oracle $A\in S_k$ \defn{extends infinitely} if there exist infinitely many oracles in $\bigcup_{k\ge i}S_k$ for which $A$ is a prefix. The following two properties of $\{S_k\}$ are easy to see:
    \begin{itemize}
        \item For any $A\in S_i$ and any $j<i$, there exists an oracle in $S_{j}$ that is a prefix of $A$.
        \item The empty oracle (corresponding to the empty sequence of instances) extends infinitely (this follows from \autoref{claim: Sk nonempty}).
    \end{itemize}

    We construct a sequence $\{(X_i, Y_i)\}_{i\in\N}$ as follows. Suppose we have constructed a finite sequence $A_\ell = \{(X_i, Y_i)\}_{i\le \ell}$, we maintain the invariant that $A_\ell$ extends infinitely. We start from $\ell = 0$ and $A_\ell$ being the empty oracle. To construct $(X_{\ell + 1}, Y_{\ell + 1})$, we simply find any $(X_{\ell + 1}, Y_{\ell + 1})$ that would maintain our invariant that $\{(X_i, Y_i)\}_{i\le \ell + 1}$ extends infinitely. Since there are only finitely many options for $(X_{\ell + 1}, Y_{\ell + 1})$, such a choice of $(X_{\ell + 1}, Y_{\ell + 1})$ always exists.

    Let $A$ denote the well-structured oracle corresponding to the sequence $\{(X_i, Y_i)\}_{i\in\N}$ and we now prove that $\BPE^{\widetilde{A}} \subseteq \SIZE^A[O(n)]$. Consider a machine $M_j^{\widetilde{A}}$ and a sufficiently large input length $n$. Let $i$ be the integer such that $n \in [\log n_{i-1}, \log n_i)$. In this case, we have that $n \ge (\log n_i)/4$. Since $n$ is sufficiently large, we may assume that $i \ge j$. Note that the behavior of $P_{i,j}$ on $A$ is the same as its behavior on $A_{b_i} = \{(X_{i'}, Y_{i'})\}_{i' \le b_i}$ (as $M_j$ on input length $\log n_i$ can only access these inputs). Since $A_{b_i} \in S_i$, we have that $P_{i, j}$ fails to solve the instance $(X_i, Y_i)$ with probability threshold $0.8$. There are two cases:
    \begin{itemize}
        \item Either $M_j^{\widetilde{A}}$ does not satisfy the $\BPP$ promise. In this case, we are done.
        \item Or $M_j^{\widetilde{A}}$ satisfies the $\BPP$ promise but the unique string outputted by $P_{i,j}$ is a string of the form $x \oplus y$ where $x\in X_i$ and $y\in Y_i$. In this case, there is an $A$-oracle circuit of size $O(\log m_i)=O(\log n_i)$ that outputs the truth table of $M_j^{\widetilde A}$ on inputs of length $<\log n_i$. From this, we can construct a circuit which agrees with $M_j^{\widetilde A}$ on inputs of length $n$ and has size $O(\log m_i)=O(n)$.\qedhere
    \end{itemize}
\end{proof}

\subsection{Proof of \autoref{lemma: finite barrier}}

We prove \autoref{lemma: finite barrier} by induction. The base case where $t = 1$ follows from \autoref{cor: LB for pseudodet PostBPP}. Now assume that for some $t>1$, \autoref{lemma: finite barrier} holds for $t-1$ but not for $t$, and $\{Q_{i, j}\}_{1\le j\le i\le t}$ is a set of protocols that witnesses the failure of \autoref{lemma: finite barrier}. We show that $\{Q_{i, j}\}$ can be transformed into a single $\PostBPP$ protocol $Q$ for list-solving $\XorMissingString(n_t, m_t)$ that is impossibly efficient in the sense that it contradicts our communication lower bound (\autoref{lem:xor_ms_lb_multi}).

Let $(X_t,Y_t)$ be an input for $\XorMissingString(n_t,m_t)$. For $1\le j\le i\le t$, let $Q'_{i,j}$ be the protocol that takes as input a truncated oracle at level $t-1$, combines it with $(X_t,Y_t)$ to form a truncated oracle at level $t$, and then simulates the execution of $Q_{i,j}$ on the combined oracle. Consider the protocols $Q'_{i, j}$ for $1\le j \le i \le t - 1$. Since \autoref{lemma: finite barrier} holds for $t-1$, there must exist a truncated oracle $A$ at level $t-1$ on which all these protocols fail. That is, when we combine this oracle $A$ with $(X_t,Y_t)$ to form an oracle $A'$ and run the original protocols $\{Q_{i,j}\}$ on $A'$, every protocols $Q_{i, j}$ with $i\le t-1$ would fail. However, one of the protocols $Q_{i,j}$ must succeed on $A'$. In this case, the protocol that succeeds must have $i=t$, which means that its output solves $(X_t,Y_t)$.

The above discussion naturally leads to the following $\PostBPP$ communication protocol for list-solving $\XorMissingString(n_t, m_t)$: randomly guess a truncated oracle $A$ at level $t-1$, use postselection to ensure that $Q'_{i,j}$ fails for every $1\le j\le i\le t-1$, and simulate the protocols $Q_{j, t}$ to list-solve the instance $(X_t, Y_t)$. We remark that the combined protocol $Q$ needs to use \emph{postselection} even though the original protocols $Q_{i, j}$ do not; nevertheless, we can still reach a contradiction as \autoref{lem:xor_ms_lb_multi} applies to $\PostBPP$ communication protocols as well.

See \autoref{pro:postbpp_xor_ms} for a formal description of the $\PostBPP$ communication protocol $Q$.

Let $K$ denote the number of bits encoded in a truncated oracle of size $t-1$, formally, $K=\sum_{i<t}2m_in_i$. Note that $K=O(t\cdot n_{t-1}\cdot m_{t-1})=O(n_t^{30})\ll m_t$.

\begin{algorithm2e}[ht]
  \caption{The protocol $Q$}\label{pro:postbpp_xor_ms}
    \SetKwProg{Fn}{Q}{:}{}

    \KwIn{Alice gets $X_t$ and Bob gets $Y_t$, where $(X_t,Y_t)$ is an input to $\XorMissingString(n_t,m_t)$.}
    \KwOut{A list of $t$ $n_t$-bit strings, or $\bot$.}

    \For{$i\gets 1$ \KwTo $t-1$}{
        $(X_i,Y_i)\gets \text{uniformly random input of $\XorMissingString(n_i,m_i)$}$\;
    }
    $A\gets \text{the truncated oracle at level $t$ that encodes } (X_{\le t},Y_{\le t})$\;

    $p\gets (p_{t-1}+p_t)/2$\;
    
    \For{$i\gets 1$ \KwTo $t-1$}{\label{line: check start}
        \For(\tcp*[f]{Check whether $Q_{i,j}$ fails}){$j\gets 1$ \KwTo $i$}{
            Simulate the execution of $Q_{i,j}$ on input $A$, \\ \quad repeat for $K\cdot t^{10}$ times using independent randomness\;
            $s\gets \text{the empirical majority answer}$\;
            $p_{\sf emp}\gets \text{empirical probability of outputting }s$\;
            \uIf{$p_{\sf emp} > p$ and $s$ is a solution to $(X_i,Y_i)$}{
                \Return $\bot$\;\label{line: check end}
            }
        }
    }

    \For(\tcp*[f]{If all protocols $Q_{<t,j}$ fail, run $Q_{t,\le t}$ for answers}){$j\gets 1$ \KwTo $t$}{
        Simulate the execution of $Q_{t,j}$ on input $A$, \\ \quad repeat for $(n_t)^2$ times using independent randomness\;
        $s_j\gets \text{the empirical majority answer}$\;
    }
    \Return{$\langle s_1,\dots,s_t\rangle$}.
\end{algorithm2e}

\paragraph{Complexity of $Q$.} Recall that the complexity of a $\PostBPP$ communication protocol is defined as $k+c$, where $k$ is the length of the public randomness and $c$ is the number of bits communicated (\autoref{def: PostBPPcc}). The complexity of $Q$ consists of the following parts:

\begin{itemize}
    \item Sampling $(X_{<t},Y_{<t})$: This costs at most $K=O(n_t^{30})$ random bits.
    \item Simulating the protocols $Q_{i,j}$ for each $1\le j\le i\le t$: This requires communicating at most $t^2\cdot (n_t)^4\cdot K\cdot t^{10}=O(n_t^{46})$ bits.
\end{itemize}

Hence, the complexity of $Q$ is $O(n_t^{46}) \ll m_t$.

We remark that $Q$ also needs to check whether the outputs of $Q_{i,j}$ is a solution of $(X_i, Y_i)$ for each $1\le j\le i \le t-1$. However, since $(X_{<t},Y_{<t})$ is sampled using public randomness, this costs no communication.

\paragraph{Success probability of $Q$.} For a fixed input $(X_t,Y_t)$, let $E_{t-1}$ denote the following event: for the oracle $A$ constructed in \autoref{pro:postbpp_xor_ms}, all the protocols $Q_{i,j}$ with $1\le j\le i\le t-1$ fail on $A$ with probability threshold $p_{t-1}$. Let $E_t$ denote the event that all these protocols (i.e., $\{Q_{i,j}\}_{1\le j\le i\le t-1}$) fail with probability threshold $p_t$. Note that $E_{t-1}$ implies $E_{t}$, which means that $\Pr[E_{t-1}]\le \Pr[E_t]$.%

In the following, we lower bound $\Pr[Q\text{ is correct}\mid Q(X,Y)\ne \bot]$ by considering the behavior of $Q$ under three events:
\begin{enumerate}
    \item \textbf{Bad case:} Conditioned on $\lnot E_{t}$, $Q$ outputs $\bot$ with high probability (\autoref{clm:semantics_check}). Therefore, this case only negligibly affects $\Pr[Q\text{ is correct}\mid Q(X,Y)\ne \bot]$.
    \item \textbf{Good case:} Conditioned $E_{t-1}$, with high probability, $Q$ does not output $\bot$ (\autoref{clm:semantics_check}), and is correct (\autoref{clm:correctness_bound}). We also show that $\Pr[E_{t-1}]$ is large (\autoref{clm:at_least_one}), so that this case has a big contribution to $\Pr[Q\text{ is correct}\mid Q(X,Y)\ne \bot]$.
    \item \textbf{Intermediate case:} Conditioned on $E_{t}\land \lnot E_{t-1}$, $Q$ behaves abnormally, in that we cannot bound $\Pr[Q(X,Y)\ne \bot\mid E_{t}\land \lnot E_{t-1}]$. However, we know that conditioned on $E_{t}\land \lnot E_{t-1}$ and $Q(X,Y)\ne \bot$, $Q$ is correct with high probability (\autoref{clm:correctness_bound}). Therefore, regardless of $\Pr[Q(X,Y)\ne \bot\mid E_{t}\land \lnot E_{t-1}]$, this case does not hurt $\Pr[Q\text{ is correct}\mid Q(X,Y)\ne \bot]$.
\end{enumerate}

Let $S$ denote the event that, when running the simulation, one of the output strings $s_1,\dots,s_t$ solves $(X_t,Y_t)$. We define $S$ in such a way that $S$ is independent from whether $Q(X_t,Y_t)$ outputs $\bot$. That is, it is possible that although $Q(X_t,Y_t)$ outputs $\bot$, had $Q$ performed the simulation, it would have solved $(X_t,Y_t)$. In this case, we say that $S$ holds.

We start by proving that the good case happens with significant probability:

\begin{claim}
    \label{clm:at_least_one}
    $\Pr[E_{t-1}]\ge 2^{-K}$.
\end{claim}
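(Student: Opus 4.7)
\medskip

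The plan is to invoke the inductive hypothesis on \autoref{lemma: finite barrier} at level $t-1$, applied not to the original family $\{Q_{i,j}\}$ but to the derived family $\{Q'_{i,j}\}_{1\le j\le i\le t-1}$ obtained by hardwiring the fixed input $(X_t, Y_t)$ that $Q$ is given. Recall from the construction of $Q'_{i,j}$ that it takes a truncated oracle at level $t-1$, appends $(X_t, Y_t)$ to form a truncated oracle at level $t$, and then runs $Q_{i,j}$. In particular, $Q'_{i,j}$ has the same communication complexity as $Q_{i,j}$ (since $(X_t, Y_t)$ is built in at no communication cost), hence complexity at most $n_i^4$, and still outputs an $n_i$-bit string. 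Thus the family $\{Q'_{i,j}\}_{1\le j\le i\le t-1}$ satisfies the hypotheses of \autoref{lemma: finite barrier} at level $t-1$.

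Applying the inductive hypothesis, there exists a sequence $(X_1^*, Y_1^*), \dots, (X_{t-1}^*, Y_{t-1}^*)$ — where each $(X_i^*, Y_i^*)$ is a $\XorMissingString(n_i, m_i)$ instance — on which every $Q'_{i,j}$ with $1\le j\le i\le t-1$ fails with probability threshold $p_{t-1}$. By unwinding the definition of $Q'_{i,j}$, this is exactly saying that if we set $(X_{<t}, Y_{<t}) = (X_{<t}^*, Y_{<t}^*)$ and form the corresponding truncated oracle $A$ at level $t$, then every original protocol $Q_{i,j}$ with $1\le j\le i\le t-1$ fails on $A$ with probability threshold $p_{t-1}$. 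This is precisely the event $E_{t-1}$.

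To finish, observe that in \autoref{pro:postbpp_xor_ms}, the random bits $(X_{<t}, Y_{<t})$ are drawn uniformly from $\{0,1\}^K$ (where $K = \sum_{i<t} 2 m_i n_i$ is the total length of the encoding). Hence the specific outcome $(X_{<t}^*, Y_{<t}^*)$ is hit with probability exactly $2^{-K}$, which gives
\[
\Pr[E_{t-1}] \;\ge\; \Pr\!\bigl[(X_{<t}, Y_{<t}) = (X_{<t}^*, Y_{<t}^*)\bigr] \;=\; 2^{-K},
\]
as required.

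There is no real obstacle here: the proof is essentially a direct re-indexing of the inductive hypothesis. The only point deserving attention is to check that hardwiring $(X_t, Y_t)$ into each $Q_{i,j}$ preserves both the communication complexity bound $n_i^4$ and the output length $n_i$, so that \autoref{lemma: finite barrier} is genuinely applicable at level $t-1$; this is immediate from the definitions. The more substantive technical work of the induction step — namely deducing $\Pr[Q(X_t,Y_t)\ne\bot \wedge Q\text{ correct}]$ bounds from this claim and colliding with \autoref{lem:xor_ms_lb_multi} — is handled by the later claims (\autoref{clm:semantics_check} and \autoref{clm:correctness_bound}), not by \autoref{clm:at_least_one}.
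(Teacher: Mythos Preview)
Your proof is correct and follows essentially the same approach as the paper: apply the inductive hypothesis (\autoref{lemma: finite barrier} at level $t-1$) to the hardwired protocols $\{Q'_{i,j}\}$ to obtain a single truncated oracle at level $t-1$ on which they all fail with threshold $p_{t-1}$, then note that this oracle is sampled with probability $2^{-K}$. Your write-up is in fact more explicit than the paper's, spelling out the verification that $\{Q'_{i,j}\}$ meets the complexity hypotheses and the final probability calculation that the paper leaves implicit.
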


\begin{claimproof}
    Consider the protocols $\{Q'_{i,j}\}$ $(1\le j\le i<t)$, where $Q'_{i,j}$ takes a truncated oracle $A'$ of size $t-1$, combines it with $(X_t,Y_t)$, and then simulates $Q_{i,j}$ on $A$. Since \autoref{lemma: finite barrier} holds for $t-1$, applying it on $\{Q'_{i,j}\}$ implies that there exists at least one truncated oracle $A'$ of size $t-1$, on which all $Q'_{i,j}$ fail with probability threshold $p_{t-1}$.
\end{claimproof}

Next, we prove bounds for $\Pr[Q(X_t,Y_t)\ne\bot]$ in the good and bad cases:

\begin{claim}
    \label{clm:semantics_check}
    The following items hold:
    \begin{itemize}
        \item \textbf{Bad case:} $\Pr[Q(X_t,Y_t)\ne\bot \mid \lnot E_t]\le 2^{-2K}$.
        \item \textbf{Good case:} $\Pr[Q(X_t,Y_t)\ne\bot \mid E_{t-1}]\ge 1-2^{-2K}$.
    \end{itemize}
\end{claim}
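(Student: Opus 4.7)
The plan is to derive both bounds from a single concentration event on the $N := K\cdot t^{10}$ independent simulations performed in each inner-loop iteration of \autoref{pro:postbpp_xor_ms}. Fix any truncated oracle $A$ (constructed from the public random $(X_{<t},Y_{<t})$ together with the input $(X_t,Y_t)$) and any pair $(i,j)$ with $1\le j\le i\le t-1$. Let $\mu_s:=\Pr[Q_{i,j}(A)=s]$ denote the true output distribution and $\hat\mu_s$ the empirical frequency of $s$ over the $N$ simulations. The threshold $p=(p_{t-1}+p_t)/2$ is equidistant from $p_{t-1}$ and $p_t$ with gap $\delta:=1/(20t^2)$, so Hoeffding's inequality yields $\Pr[|\hat\mu_s-\mu_s|>\delta]\le 2\exp(-2N\delta^2)=2\exp(-Kt^6/200)$ for each fixed $s$. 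Union-bounding over the $O(t^2)$ pairs $(i,j)$ and the at most $2^{n_{t-1}}$ possible strings $s$ defines a good event $\mathcal{G}$ on the simulation randomness with $\Pr[\overline{\mathcal{G}}\mid A]\le 2^{-2K}$ uniformly in $A$, using $K\ge 2m_{t-1}n_{t-1}=\Omega(n_{t-1}^{101})\gg n_{t-1}$ to absorb the $2^{n_{t-1}}$ factor (the factor $t^{10}$ in the repetition count can be enlarged to a larger polynomial if needed to absorb remaining constants).

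Both halves of the claim then follow by checking, under $\mathcal{G}$, whether the trigger condition ``$p_{\sf emp}>p$ and $s$ is a solution'' on line~10 of the algorithm fires. For the bad case, $\neg E_t$ supplies a pair $(i^*,j^*)$ and a string $s^*$ with $\mu_{s^*}>p_t$ that is a solution to $(X_{i^*},Y_{i^*})$. On $\mathcal{G}$ we have $\hat\mu_{s^*}>p_t-\delta=p>1/2$, so $s^*$ is the empirical majority at $(i^*,j^*)$, $p_{\sf emp}>p$, and the check fires, forcing $Q$ to return $\bot$. This yields $\Pr[Q\ne\bot\mid \neg E_t]\le \Pr[\overline{\mathcal{G}}\mid \neg E_t]\le 2^{-2K}$. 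For the good case, $E_{t-1}$ forces each $(i,j)$ with $i\le t-1$ into one of two scenarios: either (A) $\mu_s\le p_{t-1}$ for all $s$, or (B) a unique $s^*$ has $\mu_{s^*}>p_{t-1}$ and $s^*$ is not a solution. On $\mathcal{G}$, Case A yields $\hat\mu_s\le p_{t-1}+\delta=p$ for every $s$, so $p_{\sf emp}\le p$ and the check fails; Case B yields $\hat\mu_{s^*}>p$ together with $\hat\mu_s<1-p_{t-1}+\delta<p$ for every $s\ne s^*$ (using $p_{t-1}\ge 0.6$), so $s^*$ remains the empirical majority but is not a solution and the check fails again. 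Thus no iteration returns $\bot$, giving $\Pr[Q\ne\bot\mid E_{t-1}]\ge 1-2^{-2K}$.

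The main obstacle I foresee is the union bound over candidate outputs: a priori there are up to $2^{n_{t-1}}$ distinct strings to rule out as spurious empirical majorities, and the Chernoff exponent $\Omega(Kt^6)$ must dominate both $n_{t-1}$ and the target $2K$. This is precisely why the parameter relation $m_i=n_i^{100}$ in the setup matters: it makes $K=\Theta(n_{t-1}^{101})$ polynomially larger than $n_{t-1}$, so the factor $2^{n_{t-1}}$ is swamped by $\exp(-\Omega(Kt^6))$ and the desired bound $2^{-2K}$ follows for $t\ge 2$ and large enough parameters. Once $\mathcal{G}$ is in hand, the remaining casework is purely mechanical.
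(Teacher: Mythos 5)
Your proof is correct and takes essentially the same route as the paper's, which simply invokes ``a straightforward Chernoff bound'' for the bad case and declares the good case similar: both arguments rest on concentration of the empirical output frequencies around the true ones, with the threshold $p=(p_{t-1}+p_t)/2$ placed in the middle of the gap of width $1/(10t^2)$ (your explicit union bound over all $2^{n_{t-1}}$ candidate strings, absorbed by $K\gg n_{t-1}$, is a legitimate way to handle the good case, and the constant-factor looseness for very small $t$ that you flag is shared by the paper's own calculation). One local slip: in Case B the assertion $\hat\mu_{s^*}>p$ does not follow from $\mu_{s^*}>p_{t-1}$ (you only get $\hat\mu_{s^*}>p_{t-1}-\delta$, which may be below $p=p_{t-1}+\delta$), but this is harmless—$p_{t-1}-\delta>1/2$ already makes $s^*$ the empirical majority, and since $s^*$ is not a solution the check cannot fire, while your bound $\hat\mu_s<p$ for $s\ne s^*$ covers the remaining possibility.
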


\begin{claimproof}
    If $E_t$ does not happen, then there exists $1\le j\le i<t$ such that when running $Q_{i,j}$ on $A$, there exists a (unique) string $s$ outputted with probability $>p_t$, and $s$ is a valid solution to $(X_i, Y_i)$.
    
    Using a straightforward Chernoff bound, when running Lines~\ref{line: check start}-\ref{line: check end} in \autoref{pro:postbpp_xor_ms}, with probability $1-2^{-2K}$, the empirical probability $p_{\sf emp}$ of outputting $s$ will be greater than $p_t\cdot(1-0.01/t^2)>(p_t+p_{t-1})/2$. This means that $Q$ outputs $\bot$ with probability $\ge 1-2^{-2K}$.

    The second bullet can be proved similarly.
\end{claimproof}

Finally, in the good and intermediate case, we show that $S$ holds with high probability.

\begin{claim}
    \label{clm:correctness_bound}
    In \autoref{pro:postbpp_xor_ms}, for a truncated oracle $A$ at level $t-1$, if $E_{t}$ holds for $A$, we have that
    \[\Pr[S\mid A\text{ is sampled}]\ge 1-2^{-\Omega((n_t)^2)}.\]
\end{claim}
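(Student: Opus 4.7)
The plan is to unpack what $E_t$ gives us, combined with the standing hypothesis that $\{Q_{i,j}\}_{1\le j\le i\le t}$ is a counterexample to \autoref{lemma: finite barrier} at level $t$. Under that hypothesis, for \emph{every} input sequence $\{(X_i,Y_i)\}_{i\le t}$, at least one protocol $Q_{i,j}$ must succeed at threshold $p_t$ in the sense of \autoref{lemma: finite barrier}: its distribution over outputs puts mass $>p_t$ on some unique string, and that string is a valid solution to the corresponding instance. Conditioning on the event $E_t$ (which, recall, says that every $Q_{i,j}$ with $i\le t-1$ \emph{fails} at threshold $p_t$ on the sampled oracle $A$), the successful protocol must therefore be some $Q_{t,j^\ast}$ with $j^\ast\in [t]$.

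Next I would translate this into a Chernoff-bound argument for the simulation phase of \autoref{pro:postbpp_xor_ms}. Let $s^\ast$ denote the unique answer that $Q_{t,j^\ast}$ outputs with probability $>p_t>1/2$ on the oracle $A$, and recall that $s^\ast$ is guaranteed to solve $(X_t,Y_t)$. In the second loop of the protocol, $Q_{t,j^\ast}$ is simulated independently $(n_t)^2$ times, and $s_{j^\ast}$ is set to the empirical majority of these runs. A standard Chernoff bound, using that the true probability of outputting $s^\ast$ exceeds $p_t\ge 1/2+\Omega(1)$ while every other answer appears with probability at most $1-p_t<1/2-\Omega(1)$, gives
\[
\Pr[s_{j^\ast}=s^\ast\mid A\text{ is sampled and }E_t\text{ holds}]\ge 1-2^{-\Omega((n_t)^2)}.
\]
Whenever $s_{j^\ast}=s^\ast$, the event $S$ (that at least one of $s_1,\dots,s_t$ solves $(X_t,Y_t)$) holds, so $\Pr[S\mid A,E_t]\ge 1-2^{-\Omega((n_t)^2)}$ as claimed.

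The only subtlety, which I would make sure to state carefully, is that $S$ is defined as the event that the \emph{simulated} outputs $s_1,\dots,s_t$ contain a solution, independently of whether the postselection step in Lines~\ref{line: check start}--\ref{line: check end} causes the protocol to return $\bot$. This decoupling is what makes the claim usable in the overall success-probability bound for $Q$: the conditioning is only on the sampled oracle $A$ and on $E_t$, not on $Q(X_t,Y_t)\neq\bot$. There is no real obstacle here; the proof is essentially a one-line deduction from the failure of \autoref{lemma: finite barrier} at level $t$ followed by a Chernoff-style amplification estimate, and the main thing to be careful about is cleanly separating the event $E_t$ (about the true output distribution of the $Q_{i,j}$'s on $A$) from the postselection event (about the empirical estimates computed by $Q$).
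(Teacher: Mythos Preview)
Your proposal is correct and follows essentially the same approach as the paper: use the assumed failure of \autoref{lemma: finite barrier} at level $t$ to conclude some $Q_{i,j}$ succeeds at threshold $p_t$, use $E_t$ to force $i=t$, and apply a Chernoff bound over the $(n_t)^2$ repetitions (noting $p_t\ge 0.6>1/2$) to show the empirical majority recovers the valid solution with probability $1-2^{-\Omega(n_t^2)}$. Your additional remark about $S$ being decoupled from the postselection event is accurate and matches how the paper uses the claim downstream.
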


\begin{claimproof}
    By our assumption on $t$, for any possible $A$, at least one of the protocols $Q_{i,j}$ ($1\le j\le i\le t$) does not fail with probability threshold $p_t$. If $E_{t}$ happens, then all the protocols $Q_{i,j}$ with $i<t$ (i.e., those that operate on smaller input lengths) fail on $A$ with probability threshold $p_t$. Therefore, at least one of the protocols $Q_{t,j}$ succeeds on $A$. In this case, when we simulate $Q_{t,j}$ in $Q$, with probability $1-2^{-\Omega(n_t^2)}$, the empirical majority answer $s_j$ is indeed the majority answer of $Q_{t, j}$, which is a valid solution for $(X_t,Y_t)$.
\end{claimproof}

Combining the previous results proves that $Q$ succeeds with high probability.

\begin{claim}
    When running the combined protocol $Q$ on any input $(X_t,Y_t)$, 
    \begin{align*}
        \Pr[S\mid Q(X_t,Y_t)\ne \bot]\ge 1-2^{-\Omega(n_t^2)}.
    \end{align*}
\end{claim}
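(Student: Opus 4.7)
The plan is to extract the claim from the three previously established bounds (\autoref{clm:at_least_one}, \autoref{clm:semantics_check}, \autoref{clm:correctness_bound}) by a careful conditioning argument. The structural observation that makes everything work is the following: given the randomly sampled oracle $A$ (i.e., the sampled $(X_{<t}, Y_{<t})$), the event $\{Q(X_t,Y_t) \ne \bot\}$ depends only on the fresh randomness used in the ``check phase'' (Lines~\ref{line: check start}--\ref{line: check end}), whereas $S$ is deliberately defined to depend only on the randomness of the ``answer phase''. Because the two phases draw their randomness independently, $S$ and $\{Q \ne \bot\}$ are conditionally independent given $A$.

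First I would lower bound $\Pr[S \land Q \ne \bot]$ by restricting to the event $E_t$. Using conditional independence given $A$,
\[\Pr[E_t \land S \land Q \ne \bot] \;=\; \sum_{A : E_t(A)} \Pr[A]\,\Pr[S \mid A]\,\Pr[Q\ne\bot \mid A] \;\ge\; \bigl(1 - 2^{-\Omega(n_t^2)}\bigr)\,\Pr[E_t \land Q \ne \bot],\]
where the inequality applies \autoref{clm:correctness_bound} uniformly over oracles $A$ with $E_t$. Dropping the non-$E_t$ contribution and invoking the bad-case bound in \autoref{clm:semantics_check} (which gives $\Pr[\lnot E_t \land Q \ne \bot] \le 2^{-2K}$, hence $\Pr[E_t \land Q \ne \bot] \ge \Pr[Q \ne \bot] - 2^{-2K}$), this yields
\[\Pr[S \land Q \ne \bot] \;\ge\; \bigl(1 - 2^{-\Omega(n_t^2)}\bigr)\bigl(\Pr[Q \ne \bot] - 2^{-2K}\bigr).\]

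Next I would lower bound the denominator: combining \autoref{clm:at_least_one} ($\Pr[E_{t-1}] \ge 2^{-K}$) with the good-case bound in \autoref{clm:semantics_check} ($\Pr[Q\ne\bot \mid E_{t-1}] \ge 1 - 2^{-2K}$) gives $\Pr[Q \ne \bot] \ge 2^{-K-1}$. Dividing,
\[\Pr[S \mid Q \ne \bot] \;\ge\; \bigl(1 - 2^{-\Omega(n_t^2)}\bigr)\left(1 - \frac{2^{-2K}}{\Pr[Q \ne \bot]}\right) \;\ge\; \bigl(1 - 2^{-\Omega(n_t^2)}\bigr)\bigl(1 - 2^{-K+1}\bigr).\]
Since $K \gg n_t^2$ by the parameter choices recorded just above \autoref{pro:postbpp_xor_ms}, the factor $(1 - 2^{-K+1})$ is absorbed into $(1 - 2^{-\Omega(n_t^2)})$, proving the claim.

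The main subtlety — rather than a real technical obstacle — is justifying the conditional independence of $S$ and $\{Q\ne\bot\}$ given $A$. This rests on defining $S$ through a hypothetical execution of the answer phase (performed regardless of whether the check phase already aborted) and on using fresh, independent randomness in the two phases. Once this is in place, the awkward intermediate case $E_t \land \lnot E_{t-1}$, where we have no handle on $\Pr[Q\ne\bot \mid \cdot]$, requires no separate treatment: the correctness bound on $S$ applies on the whole $E_t$-slice through independence, and the calculation above already accounts for it.
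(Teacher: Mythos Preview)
Your argument is correct and uses the same three supporting claims as the paper, but organizes the calculation differently. The paper explicitly splits on the event $G = E_{t-1}\lor\lnot E_t$ and bounds $\Pr[S\mid Q\ne\bot\land G]$ and $\Pr[S\mid Q\ne\bot\land\lnot G]$ separately, with the intermediate case $\lnot G = E_t\land\lnot E_{t-1}$ handled via \autoref{clm:correctness_bound} alone and the $G$ case via a longer chain of inequalities. You instead condition on $E_t$ in one shot, use conditional independence of $S$ and $\{Q\ne\bot\}$ given $A$ to pull out the uniform correctness bound on the whole $E_t$-slice, and then bound the denominator directly via $E_{t-1}\subseteq E_t$. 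This is cleaner: it makes the role of conditional independence explicit (the paper invokes ``$S$ and $Q\ne\bot$ are independent'' somewhat loosely, whereas your version is precise about what is independent given what), and it avoids the three-way case split entirely. One small remark: the parameter note you cite records only the \emph{upper} bound $K=O(n_t^{30})$; the lower bound $K\gg n_t^2$ you need follows from $K\ge 2m_{t-1}n_{t-1}=2n_{t-1}^{101}=2n_t^{101/4}$, which you should state rather than point to.
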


\begin{claimproof}
    Let $G=E_{t-1}\lor\lnot E_{t}$ (note that $E_{t-1}$ and $\lnot E_{t}$ are disjoint). We have that
    \begin{align}
        \label{equ:split_cases}
        &\Pr[S\mid Q(X_t,Y_t)\ne \bot]\nonumber \\
        \ge{}&\min \mleft\{\Pr[S\mid Q(X_t,Y_t)\ne \bot\land G],\Pr[S\mid Q(X_t,Y_t)\ne \bot\land \lnot G]\mright\}.
    \end{align}

    The second term is at least
    \begin{align*}
        \min_{A:G\text{ does not hold for the oracle } A}\Pr[S\mid Q(X_t,Y_t)\ne \bot\land A\text{ is sampled}],
    \end{align*}
    which is at least $1-2^{-\Omega((n_t)^2)}$ by \autoref{clm:correctness_bound} (note that $\lnot G=\lnot E_{t-1}\land E_{t}$, which implies that $E_{t}$ holds for $A$).

    For the first term, we have that
    \begin{align*}
        &\Pr[S\mid Q(X_t,Y_t)\ne \bot\land G] \\
        \ge{}&\frac{\Pr[S\land Q(X_t,Y_t)\ne \bot\mid G]}{ \Pr[Q(X_t,Y_t)\ne \bot\mid G]} \\
        \ge{}& \frac{\Pr[S\land Q(X_t,Y_t)\ne \bot\land G]}{\Pr[Q(X_t,Y_t)\ne \bot\mid E_{t-1}]\cdot \Pr[E_{t-1}\mid G]+\Pr[Q(X_t,Y_t)\ne \bot\mid \lnot E_{t}]\cdot \Pr[\lnot E_{t}\mid G]} \\
        \ge{}& \frac{\Pr[S\land Q(X_t,Y_t)\ne \bot\land E_{t-1}]}{\Pr[Q(X_t,Y_t)\ne \bot\mid E_{t-1}]\cdot \Pr[E_{t-1}]+\Pr[Q(X_t,Y_t)\ne \bot\mid \lnot E_{t}]}\tag{$E_{t-1}$ is contained in $G$} \\
        \ge{}& \frac{\Pr[S\mid E_{t-1}]\cdot \Pr[ Q(X_t,Y_t)\ne\bot\mid E_{t-1}]\cdot \Pr[E_{t-1}]}{\Pr[Q(X_t,Y_t)\ne \bot\mid E_{t-1}]\cdot \Pr[E_{t-1}]+\Pr[Q(X_t,Y_t)\ne \bot\mid \lnot E_{t}]},
    \end{align*}
    where the last line uses the fact that $S$ and $Q(X_t,Y_t)\ne \bot$ are independent.
    
    Using \autoref{clm:semantics_check} and \autoref{clm:correctness_bound}, we have that
    \begin{align*}
        \Pr[S\mid Q(X_t,Y_t)\ne \bot \land G]&\ge \frac{(1-2^{-\Omega(n_t^2)})\cdot (1-2^{-2K})\cdot \Pr[E_{t-1}]}{\Pr[E_{t-1}]+2^{-2K}}.
    \end{align*}
    
    Since $\Pr[E_{t-1}]\ge 2^{-K}$ by \autoref{clm:at_least_one}, we have that
    \begin{align*}
        \Pr[S\mid Q(X_t,Y_t)\ne \bot\mid G]&\ge \frac{(1-2^{-\Omega(n_t^2)})\cdot (1-2^{-2K})\cdot 2^{-K}}{2^{-K} + 2^{-2K}}\ge 1-2^{-\Omega(n_t^2)}.
    \end{align*}
    Therefore, both terms in \eqref{equ:split_cases} are at least $1-2^{-\Omega(n_t^2)}$.
\end{claimproof}

\paragraph{Putting everything together.} Finally, we prove \autoref{lemma: finite barrier}:

\begin{proof}[Proof of \autoref{lemma: finite barrier}]
    The base case of $t=1$ follows from \autoref{cor: LB for pseudodet PostBPP}.

    For $t>1$, assume that the lemma holds for $t-1$ but not for $t$. This implies that \autoref{pro:postbpp_xor_ms} is a $\PostBPP$ communication protocol $Q$ that:
    \begin{itemize}
        \item takes an input of $\XorMissingString(n_t,m_t)$, and outputs $t$ $n_t$-bit strings,
        \item has complexity $\le O(n_t^{46})=o(m_t)$, and
        \item succeeds with probability $\ge 1-2^{-\Theta((n_t)^2)}\ge 1-2^{-\omega(n_t\cdot t)}$ on any input.
    \end{itemize}

    Such a protocol is impossible by \autoref{lem:xor_ms_lb_multi}.
\end{proof}

\section{Almost-Everywhere Algebrization Barrier for Robust \texorpdfstring{$\MA_\E$}{MAE}}

\label{sec:ma_barrier}

\subsection{Definitions}

\begin{definition}
    \label{def:Socratic_alg}
    An \defn{$\MA\cap\co\MA$ algorithm} $M$ is associated with two verifiers $V_0,V_1$, which are randomized algorithms that take an input $x$ and a proof $\pi$ from Merlin. We say that 
    \begin{itemize}
        \item A verifier $V_k$ \defn{accepts} $x$, if there exists a proof $\pi$ such that $\Pr[V_k(x,\pi)=1]\ge 2/3$.
        \item A verifier $V_k$ \defn{rejects} $x$, if for any proof $\pi$, we have that $\Pr[V_k(x,\pi)=1]\le 1/3$.
    \end{itemize}

    We define the notion of \defn{robust} algorithms. Intuitively, an $\MA$ algorithm is robust if its truth table is never easy for $A$-oracle circuits. We do however allow an algorithm to be semantically incorrect on some oracle, which is not considered as an error.

    Formally, we say that $M$ is \defn{robust w.r.t.~$h(n)$ on length $n$} if, for \emph{any} oracle $A$ and any $2^n$-bit string $w$ that has $h(n)$ size $A$-oracle circuits, there exists at least one input $x$ of length $n$, such that $V_{w_x}^{\widetilde{A}}$ rejects $x$. That is, Merlin cannot convince Arthur to output the truth table $w$.
\end{definition}

The following definitions characterize $\FMA$ communication protocols. Such protocols are able to compute the truth tables of $\MA\cap\co\MA$ algorithms.

\begin{definition}
    An \defn{$\FMA$ communication protocol} $P$ for a search problem $f$ (over domain $\mathcal{X\times Y}$ and range $\mathcal{O}$) is defined as a set of randomized protocols $P_{w,\pi}$, for each pair of answer $w\in \mathcal{O}$ and proof $\pi$ from Merlin. Let $k$ denote the length of the proof. That is, $\pi\in \{0,1\}^k$.

    We say that $P$ \defn{solves $(X,Y)$}, if
    \begin{itemize}
        \item For some $w,\pi$ where $w$ is a solution to $(X,Y)$, $\Pr[P_{w,\pi}(X,Y)=1]\ge 2/3$.
        \item For any $w,\pi$ where $w$ is \emph{not} a solution to $(X,Y)$, $\Pr[P_{w,\pi}(X,Y)=1]\le 1/3$.
    \end{itemize}

    The \defn{complexity} of $P$ is defined as $k+C$, where $C$ is the maximum complexity of any randomized protocol $P_{w,\pi}$. Here, the complexity of a randomized protocol is defined as the maximum number of bits communicated on any input, plus the number of random bits used by the protocol. Note that we only allow public randomness (which is available to both Alice and Bob but not Merlin). %

    A protocol $P$ is \defn{robust}, if for any $(X,Y)$ and for any $w,\pi$ where $w$ is \emph{not} a solution to $(X,Y)$, $\Pr[P_{w,\pi}(X,Y)=1]\le 1/3$.
\end{definition}

\subsection{\texorpdfstring{$\XorMissingString$}{XOR-Missing-String} Lower Bounds Against Robust Protocols}

The main engine of our barrier is the following lemma, which shows that robust protocols are much weaker than general protocols.

\ZeroErrorLB*
\begin{proof}
    We only prove the lemma for the case where $C<m/2$, since otherwise the bound is straightforward.
    
    As an $\FMA$ communication protocol, $P$ can be viewed as a collection of at most $2^n\cdot 2^C$ randomized protocols, where each protocol $P_{w,\pi}$ corresponds to a pair $(w,\pi)$ of candidate answer and proof, and $P_{w,\pi}$ outputs $1$ if and only if Arthur accepts Merlin's proof $\pi$, which proves that $w$ is a solution.
    
    Let $P'_{w,\pi}$ denote an error-reduced version of $P_{w,\pi}$. Specifically, $P'_{w,\pi}$ runs $P_{w,\pi}$ repeatedly for $\Theta(m/C)$ times, each time using independent randomness, and then outputs the majority answer. We choose the number of repetitions carefully, so that
    \begin{enumerate}
        \item $P'_{w,\pi}$ has complexity $<m/2$.
        \item On input $(X,Y)$, if $w$ is not a solution, then
        \begin{align}
            \label{equ:socratic}
            \Pr[P'_{w,\pi}(X,Y)=1]\le 2^{-\Omega(m/C)}.
        \end{align}
    \item \label{item: amplify correctness} If $P$ solves $\XorMissingString$ on $(X,Y)$, then for some $(w,\pi)$ where $w$ is a solution to $(X,Y)$, we have that
    \begin{align*}
        \Pr[P'_{w,\pi}(X,Y)=1]\ge 1-2^{-\Omega(m/C)}.
    \end{align*}
    \end{enumerate}

    \autoref{item: amplify correctness} means that the fraction of inputs solvable by $P$ is at most
    \begin{align*}
        (1+2^{-\Theta(m/C)})\cdot \sum_{w,\pi}\Pr_{(X,Y)}[P'_{w,\pi}(X,Y)=1],
    \end{align*}
    where the probability is over the choice of $(X,Y)$ and the internal randomness of $P'_{w,\pi}$. To bound this value, we think of $P'_{w,\pi}$ as a distribution of deterministic protocols. For each deterministic protocol, we have the following claim, which is a relation between the probabilities of true and false positives.

    \begin{claim}
        \label{clm:det_socratic}
        Let $w$ be an $n$-bit string, and let $P$ be a deterministic protocol of complexity $C$, which outputs $0$ or $1$. Let
        \begin{align*}
            p&=\Pr_{(X,Y)}[P(X,Y)=1], \\
            p_{\text{wrong}}&=\Pr_{(X,Y)}[P(X,Y)=1\land w\text{ does not solve }(X,Y)].
        \end{align*}
        We have that
        \begin{align*}
            p\le 2^{-m+2}\cdot 2^C+p_{\text{wrong}}\cdot 2^{2n+2}.
        \end{align*}
    \end{claim}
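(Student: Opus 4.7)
The plan is to use the standard fact that a deterministic protocol of complexity $C$ partitions the input space $\{0,1\}^{nm}\times\{0,1\}^{nm}$ into at most $2^C$ pairwise disjoint rectangles on each of which the protocol's output is constant, and then apply \autoref{lem:rect_error_prob_lb} to the rectangles labeled $1$. Let $\calR$ denote the collection of rectangles on which $P$ outputs $1$. Then $p\cdot 2^{2nm}=\sum_{R\in\calR}|R|$, and we will split $\calR$ into ``large'' rectangles of size $\ge 2^{2nm-m+2}$ and ``small'' rectangles of size $<2^{2nm-m+2}$.

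For the small rectangles, the total contribution is bounded crudely: since $|\calR|\le 2^C$, their combined size is at most $2^C\cdot 2^{2nm-m+2}$. For a large rectangle $R$, \autoref{lem:rect_error_prob_lb} (applied with the fixed string $s:=w$) gives a subrectangle $R'\subseteq R$ of size at least $2^{-2n-2}|R|$ on which $w$ is \emph{not} a solution to $\XorMissingString$. Since every $(X,Y)\in R'$ also satisfies $P(X,Y)=1$, each such $(X,Y)$ contributes to the event counted by $p_{\text{wrong}}$. Summing over the large rectangles (which are disjoint), we get
\[
\sum_{R\in\calR\text{ large}}|R|\;\le\;2^{2n+2}\cdot\sum_{R\in\calR\text{ large}}|R'|\;\le\;2^{2n+2}\cdot p_{\text{wrong}}\cdot 2^{2nm}.
\]

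Combining the two contributions and dividing both sides by $2^{2nm}$ yields exactly the desired bound $p\le 2^{-m+2}\cdot 2^C+p_{\text{wrong}}\cdot 2^{2n+2}$. There is no real obstacle here: the claim is essentially a bookkeeping consequence of the rectangle partition together with \autoref{lem:rect_error_prob_lb}. The only (very mild) subtlety is ensuring that the large-rectangle bound does not double-count: this is automatic because the rectangles of a deterministic protocol are pairwise disjoint, hence so are the chosen subrectangles $R'$.
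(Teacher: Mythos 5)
Your proof is correct and follows essentially the same route as the paper's: partition the protocol into at most $2^C$ monochromatic rectangles, bound the small rectangles crudely by $2^C\cdot 2^{2nm-m+2}$, and apply \autoref{lem:rect_error_prob_lb} with $s:=w$ to each large rectangle labeled $1$ to charge a $2^{-2n-2}$ fraction of it to $p_{\text{wrong}}$. The disjointness observation you add at the end is a fair (if implicit in the paper) point, but otherwise the two arguments coincide.
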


    \begin{claimproof}
        $P$ can be characterized as a collection of at most $2^C$ rectangles that are pairwise disjoint and cover the entire input space, such that the output of $P$ is the same for inputs that belong to the same rectangle. We say that a rectangle is \emph{large} if it has size $\ge 2^{2nm-m+2}$, and say that a rectangle is \emph{small} otherwise.

        We fix one large rectangle, and consider its contribution to $p$ and $p_{\text{wrong}}$. Using \autoref{lem:rect_error_prob_lb}, when a rectangle $R$ is large, there must exist a $2^{-(2n+2)}$ fraction of input in $R$, on which $w$ is not a solution. Therefore, if $P$ outputs $1$ on $R$, then $R$ contributes $|R|/2^{2nm}$ to $p$, and contributes at least $|R|\cdot 2^{-(2n+2)}/2^{2nm}$ to $p_{\text{wrong}}$. We therefore have the following relation between $p$ and $p_{\text{wrong}}$:
        \begin{align*}
            p\le 2^{-m+2}\cdot 2^C+p_{\text{wrong}}\cdot 2^{2n+2},
        \end{align*}
        where the first term upper bounds the total size of small rectangles.
    \end{claimproof}

    For each $P'_{w,\pi}$, by applying \autoref{clm:det_socratic} on each deterministic protocol in its distribution, we have that
    \begin{align}
        \label{equ:relate}
        \Pr_{(X,Y)}[P'_{w,\pi}(X,Y)=1]\le 2^{-m+2}\cdot 2^C+\Pr_{(X,Y)}[P'_{w,\pi}(X,Y)=1\land w\text{ does not solve }(X,Y)]\cdot 2^{2n+2}.
    \end{align}
    By \eqref{equ:socratic}, whenever $w$ does not solve $(X,Y)$, we have that $\Pr[P'_{w,\pi}(X,Y)=1]\le 2^{-\Omega(m/C)}$, where the probability is over the internal randomness of $P'_{w,\pi}$. Summing over all $(X,Y)$, we have that
    \begin{align*}
        \Pr_{(X,Y)}[P'_{w,\pi}(X,Y)=1\land w\text{ does not solve }(X,Y)]\le 2^{-\Omega(m/C)}.
    \end{align*}
    Plugging this into \eqref{equ:relate}, we thus have that
    \begin{align*}
        \Pr_{(X,Y)}[P'_{w,\pi}(X,Y)=1]\le 2^{-\Omega(m/C)+O(C+n)}.
    \end{align*}
    Summing this over for every $P'_{w,\pi}$, we have that (note that the number of pairs $w,\pi$ is at most $2^{O(C+n)}$)
    \begin{align*}
        (1+2^{-\Theta(m/C)})&\cdot \sum_{w,\pi}\Pr_{(X,Y)}[P'_{w,\pi}(X,Y)=1]\le 2^{-\Omega(m/C)+O(C+n)}.\qedhere
    \end{align*}
\end{proof}

\subsection{The Barrier}

We say that a function $h(n)$ is \defn{super-half-exponential}, if $h$ is strictly monotonically increasing, $h(n) = \omega(n)$, and $h(h(n)/n) \ge 2^n$. In the rest of this section, let $h(n)$ be any fixed super-half-exponential function.

Let $\RobMAE$ denote the class of languages decidable by algorithms that are in $\MA_\E \cap \co\MA_\E$ and are robust w.r.t.~$h(n)$ on every input length. We present the following barrier result.

\begin{theorem}
\label{thm:ma_ae_barrier}
There exists an oracle $A$ such that
$     (\E^{\widetilde{A}} \cup (\RobMAE)^{\widetilde{A}}) \subseteq \SIZE^A[h(n)],
    $
where $\widetilde{A}$ denotes the multilinear extension of $A$.
\end{theorem}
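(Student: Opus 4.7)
The plan is to build $A$ inductively, diagonalizing against an enumeration of the deterministic protocols (coming from $\E$ algorithms) and the robust $\FMA$ protocols (coming from $\RobMAE$ algorithms) obtained by applying (a generalization of) \autoref{thm:transfer} to the $\E$- and $\MA_\E$-machines that attempt to output their own truth tables on each input length. As in the proofs of \autoref{thm:io_postbpp_barrier} and \autoref{thm: intro BPE}, the oracle encodes one instance $(X_n, Y_n)$ of $\XorMissingString(2^n, 2^{h(n)})$ at each input length $n$ (suitably padded), and at each stage we maintain a combinatorial rectangle $R$ of candidate partial oracles on which every protocol processed so far has been refuted.

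Refutation is carried out by one of two strategies, corresponding to the false-positive / no-positive dichotomy. For a deterministic protocol $D$ targeting length $n$, $D$ partitions the current rectangle into at most $2^{C_D}$ output-constant sub-rectangles, and applying \autoref{cor:rect_error_prob_lb_aux} (with the previously fixed instances serving as the auxiliary input) inside the largest such sub-rectangle yields a further sub-rectangle in which the outputted truth table is a \emph{non-}solution to $\XorMissingString(2^n, 2^{h(n)})$, hence of the form $(X_n)_a \oplus (Y_n)_b$, hence computable by an $A$-oracle circuit of size $O(n) \le h(n)$. For a robust $\FMA$ protocol $Q$ targeting length $n$, \autoref{lem:socratic_lb} bounds the fraction of $\XorMissingString(2^n, 2^{h(n)})$ instances solvable by $Q$ by $2^{-\Omega(2^{h(n)}/C_Q) + O(C_Q + 2^n)}$; provided the current rectangle has density comfortably above this fraction, we can pass to a sub-rectangle lying outside the solvable set, which means that on every oracle in this sub-rectangle the underlying robust $\MA_\E$ machine cannot certify any candidate length-$n$ truth table, so it fails the $\MA$ promise on some input of length $n$, and therefore contributes nothing to $(\RobMAE)^{\widetilde{A}}$ at that length.

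The main obstacle, and the reason the super-half-exponential hypothesis is needed, is scheduling these two kinds of refutations so the density budget stays feasible. After handling all deterministic protocols at lengths $< k$, the rectangle has density only about $2^{-2^{O(k)}}$, whereas a robust protocol $Q_i$ of effective complexity $2^{O(i)}$ behaves (once its randomness and Merlin's proof are enumerated) like a deterministic protocol of communication $2^{2^{O(i)}}$, so refuting it brings the density down to about $2^{-2^{2^{O(i)}}}$. Meanwhile, applying \autoref{lem:socratic_lb} to $Q_i$ requires the pre-refutation density $2^{-2^{O(k)}}$ to exceed its solvable fraction $2^{-2^{h(i)}}$, i.e.\ $k < h(i)$; and subsequent deterministic refutation at length $k$ via \autoref{cor:rect_error_prob_lb_aux} requires the post-refutation density $2^{-2^{2^{O(i)}}}$ to exceed $2^{-2^{h(k)}}$, i.e.\ $h(k) > 2^i$. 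The super-half-exponential hypothesis $h(h(n)/n) \ge 2^n$ is precisely what lets us pick, for each $Q_i$, a delayed refutation time $k(i) \approx h(i)/i$ satisfying both $k(i) < h(i)$ and $h(k(i)) \ge 2^i$. A final compactness argument as in the proof of \autoref{thm: intro BPE}, extending the finite inductive stages to an infinite oracle $A$, then completes the construction.
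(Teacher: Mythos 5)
Your high-level skeleton (the two refutation strategies, the scheduling of robust-protocol refutations at a delayed time $k(i)$, and the half-exponential arithmetic) matches the paper's proof. However, there is a genuine gap in your mechanism for refuting a robust protocol $Q$: you assert that once the rectangle's density exceeds the bound of \autoref{lem:socratic_lb}, ``we can pass to a sub-rectangle lying outside the solvable set.'' This step is unjustified, and it is precisely the difficulty the paper's construction is built to avoid: the set of oracles in $R_i$ on which $Q$ fails is large but is \emph{not} a rectangle, and a small solvable set can intersect every large subrectangle, so a density comparison alone does not yield a solvable-set-avoiding subrectangle. Without a rectangle at the end of this step, the induction breaks, since \autoref{cor:rect_error_prob_lb_aux} (needed to refute the next deterministic protocol) only applies to rectangles.

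The paper's actual mechanism is different in a way you would need to supply. For a robust protocol $Q_{i',j}$ it merely \emph{maintains a probability bound} --- that $Q_{i',j}$ solves its instance on at most a $2^{-c_2 m_{i'}/n_{i'}^{c_1}+n_i^{c_4}}$ fraction of oracles in the current rectangle --- and tracks how this bound degrades as the rectangle shrinks. Separately, at each stage it \emph{extends a common prefix} $(X^{\text{fixed}}_{\le i}, Y^{\text{fixed}}_{\le i})$ of all oracles in the rectangle, choosing the new instance by an averaging/Markov argument (\autoref{lem:extend_prefix}) so that both largeness and all the probability bounds survive. The conversion from ``solves with probability $<1$ over the rectangle'' to ``fails on \emph{every} oracle in the rectangle'' happens only once $2^{i'}\le h(i)$, i.e., once the common prefix determines everything $Q_{i',j}$ can query; at that point $Q_{i',j}$ behaves identically on all oracles in the rectangle, so the sub-unit probability forces universal failure (\autoref{clm:final_temp}). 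Your writeup contains the scheduling constraint ($k<h(i)$ and $h(k)>2^{i'}$) but not the prefix-fixing step or the constancy argument that makes the delayed refutation actually go through, so as written the robust-protocol case does not close.
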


\subsubsection{Oracle Structure} In this proof, we say that an oracle $A$ is \defn{well-structured}, if it encodes one instance of $\XorMissingString$ for each input length $i$. Formally, for every $i\in \mathbb N$, the truth table of $A$ on inputs of length $h(i)$ encodes an instance $(X_i,Y_i)$ of $\XorMissingString(n_i=2^i,m_i=2^{h(i)/10})$ (note that this instance can be described in $2n_im_i\le 2^{h(i)}$ bits), where $X_i$ (resp.~$Y_i$) is encoded in the truth table for $A$ on inputs that start with $0$ (resp.~$1$). When querying $A$ on an input of length other than $h(i)$ for some $i$, the oracle returns $0$.

These parameters are chosen such that, if an algorithm $M^{\widetilde{A}}$ is semantically correct and does not solve $(X_i,Y_i)$ (i.e., its truth table on inputs of length $=i$ is not a solution), then there exists an $A$-oracle circuit of size less than $h(i)$ that computes the truth table of $M^{\widetilde{A}}$ on inputs of length $i$. Indeed, the truth table must be of the form $x\oplus y$, where $x\in X_i$ and $y\in Y_i$. Such strings can be computed by $A$-oracle circuits of size $2\log m_i+O(i)<h(i)$, by hardcoding the index of $x$ and $y$ in $X_i$ and $Y_i$ respectively.

\subsubsection{Converting Algorithms to Protocols} Similar to the previous barriers, we start by enumerating the algorithms in $\E$ and in $\RobMAE$. The two types of algorithms are enumerated separately. For $\E$, let $M_1,M_2,\ldots$ be an enumeration of deterministic algorithms in $\DTIME[2^n]$. Let $P_{i,j}$ be a deterministic protocol that outputs the truth table of $M_j^{\widetilde{A}}$ on inputs of length $i$. Note that $P_{i,j}$ runs in time $2^{O(i)}$. We aim to ensure that, for any sufficiently large $i$ and $1\le j\le i$, $P_{i,j}$ does not solve $(X_i,Y_i)$.

For $\RobMAE$, let $(V'_{j,0},V'_{j,1})_{j\in \mathbb N}$ be a syntactic enumeration of pairs of verifiers, where each verifier is a randomized algorithm that receives an input $x$ and a proof $\pi$ and runs in time $2^{|x|}$ (in particular, the length of $\pi$ is at most $2^{|x|}$). Let $M'_j$ be the $\MA\cap\co\MA$ algorithm associated with $(V'_{j,0},V'_{j,1})$. For $i\in \mathbb N$ and $1\le j\le i$, let $Q_{i,j}$ be the $\FMA$ protocol of complexity $2^{O(i)}$ that outputs the truth table of $(M'_j)^{\widetilde{A}}$ on inputs of length $i$, in the case that the algorithm is robust. More formally, $Q_{i,j}$ is defined as follows.
\begin{itemize}
    \item If $(M'_j)^{(-)}$ is not robust w.r.t.~$h(i)$ on inputs of length $i$, then $Q_{i,j}$ rejects every pair $(w,\pi)$ of candidate answer and proof (i.e., it is the trivial robust protocol).
    \item Otherwise, $Q_{i,j}$ simulates the execution of $(M'_j)^{\widetilde{A}}$ on inputs of length $i$ (repeatedly for $2^{O(i)}$ times on each input) and outputs its truth table. 
    
    More specifically, $Q_{i,j}$ corresponds to the following set of verifiers $\{V_{w,\pi}\}$: $V_{w,\pi}$ interprets $\pi$ as a concatenation of proofs $\{\pi_x:x\in\{0,1\}^i\}$, one for each input of length $i$. For each input $x$ of length $i$, $V_{w,\pi}$ checks whether the $x$-th bit of $w$ (denoted as $w_x$) is the output of $(M'_j)^{\widetilde{A}}(x)$, by simulating the execution of $(M'_j)^{\widetilde{A}}(x)$ for $2^{i}$ times.  In each simulation, $V_{w,\pi}$ checks whether $(M'_j)^{\widetilde{A}}(x)=w_x$ when given the proof $\pi_x$. If for every $x$, the majority of simulations accept, then $V_{w,\pi}$ outputs $1$; otherwise, it outputs $0$.

    Each $V_{w,\pi}$ has complexity $2^{O(i)}$, and the length of $\pi$ is also $2^{O(i)}$. Therefore, the complexity of $Q_{i,j}$ is $2^{O(i)}$. Moreover, $Q_{i,j}$ is a robust  protocol for $(X_i,Y_i)$. This is because $M'_j$ is robust, which means that for any non-solution $w$ of $(X_i,Y_i)$, there exists some $x$ such that $(V'_{j,s_x})^{\widetilde{A}}$ rejects $x$. Therefore, when $V_{w,\pi}$ checks $x$, the majority of simulations will reject.
\end{itemize}

In the following construction, we aim to ensure that, for any sufficiently large $i$ and $1\le j\le i$, $Q_{i,j}$ does not solve $(X_i,Y_i)$. If this holds, then for any algorithm $M'_j$ in $\RobMAE$, there exists a family of $A$-oracle circuits of size $h(i)$ that computes the truth table of $(M'_j)^{\widetilde{A}}$.

Note that for every $i\in \mathbb N$ and $1\le j\le i$, since $P_{i,j}$ and $Q_{i,j}$ both simulate algorithms that run in time $2^{i}$, both protocols only access the oracle $A$ on inputs of length $\le 2^i$. In other words, when $2^i<h(i')$, these protocols cannot access $(X_{i'},Y_{i'})$.%

\subsubsection{The Inductive Guarantee} 

We construct the oracle $A$ inductively, where at step $i$, we maintain a rectangle $R_i$ of well-structured oracles (i.e., $R_i=\mathcal{A}_i\times \mathcal{B}_i$ where $\mathcal{A}_i$ is a subset of Alice's inputs and $\mathcal{B}_i$ is a subset of Bob's inputs). The goal is to refute all protocols by shrinking the rectangle, where we hope to guarantee that for any $i$, the protocols $P_{\le i,j},Q_{\le i,j}$ fail on any oracle in $R_i$ (in reality, we achieve a weaker guarantee; see below).

The following constant parameters are used in the construction. The big-O notations will not hide any dependence on these constants.

\begin{itemize}
    \item Let $c_1\ge 1$ be a constant such that any protocol $P_{i,j}$ or $Q_{i,j}$ have complexity at most $2^{c_1i}=n_i^{c_1}$.
    \item Let $c_2\ge 1$ be a constant such that, for any $i\ge i_0$ and any robust protocol $Q$ of complexity $n_i^{c_1}$ attempting to solve $\XorMissingString(n_i,m_i)$, $Q$ solves at most $2^{-c_2m_i/n_i^{c_1}}$ fraction of all inputs. Such a constant exists by \autoref{lem:socratic_lb}. 
    \item Let $c_3,c_4$ be constants such that $c_1\ll c_3\ll c_4$.
    \item Let $i_0$ be a sufficiently large constant that depends on $c_1$ through $c_4$. We only refute the protocols $P_{i,j},Q_{i,j}$ where $i\ge i_0$.
\end{itemize}

We ensure the following properties for $R_i$. Intuitively, $R_1\supset R_2\supset \ldots$ is a sequence of rectangles representing increasingly stronger restrictions on the oracle $A$, where $R_i$ (ideally) refutes all protocols $P_{i',j},Q_{i',j}$ where $i_0\le i'\le i$. 
\begin{enumerate}
    \item \textbf{Containment:} For any $i>1$, $R_i$ is a subrectangle of $R_{i-1}$. 
    \item \textbf{Measurability:}\label{property: measurability} To decide whether an oracle $A$ is in $R_i$, we only need to look at the first $h(i)$ instances $(X_{\le h(i)},Y_{\le h(i)})$ of $A$. 
    
    Although $R_i$ is a set of uncountably many oracles, measurability implies that we can treat it as a set of finite oracles (up to length $h(i)$). Hence, we can reasonably talk about the ``size'' of $R_i$, as well as the ``uniform distribution'' over $R_i$.
    \item \textbf{Common prefix:} \label{property: common prefix} All the oracles in $R_i$ agree on the first $i$ instances $(X_{\le i},Y_{\le i})$. We denote this common prefix as $(X^{\text{fixed}}_1,Y^{\text{fixed}}_1),\ldots, (X^{\text{fixed}}_i,Y^{\text{fixed}}_i)$. Since for each $i\ge 1$, $R_{i+1}$ is a subrectangle of $R_i$, we have that $(X^{\text{fixed}}_1, Y^{\text{fixed}}_1), \dots, (X^{\text{fixed}}_i, Y^{\text{fixed}}_i)$ is also a prefix of every $R_{i'}$ ($i' > i$).
    \item \textbf{Largeness:} \label{property: largeness} $R_i$ contains at least a $2^{-n_i^{c_3}}$ fraction of oracles that agree on $(X^{\text{fixed}}_{\le i},Y^{\text{fixed}}_{\le i})$. Formally, for a random well-structured oracle $A$,
    \begin{align*}
        \Pr_A[A\in R_i\mid A\text{ agrees with }(X^{\text{fixed}}_{\le i},Y^{\text{fixed}}_{\le i})]\ge 2^{-n_i^{c_3}}.
    \end{align*}
    \item \textbf{Refuting deterministic protocols:} \label{property: refuting deterministic protocols} For $i_0\le i'\le i$, $1\le j\le i'$ and any oracle $A\in R_i$, $P_{i',j}$ does not solve $(X_{i'},Y_{i'})$ on $A$.
    \item \textbf{Refuting robust protocols:} \label{property: refuting Socratic protocols}
    For $i_0\le i'\le i$, $1\le j\le i'$:
    \begin{enumerate}
        \item \textbf{Small protocols:} \label{property 6: small} If $2^{i'}\le h(i)$, then for any oracle $A\in R_i$, $Q_{i',j}$ does not solve $(X_{i'},Y_{i'})$ on $A$. (In this case, $Q_{i', j}$ only has access to the first $i$ instances in $A$, which are equal to $\{(X^{\text{fixed}}_{i''}, Y^{\text{fixed}}_{i''})\}_{i'' \le i}$.)
        \item \textbf{Large protocols:}\label{property 6: large} If $2^{i'}> h(i)$, then
        \begin{align*}
            \Pr_A[Q_{i',j}\text{ solves }(X_{i'},Y_{i'})\text{ on }A\mid A\in R_i]\le 2^{-c_2m_{i'}/n_{i'}^{c_1}+n_i^{c_4}}.
        \end{align*}
        We provide some intuition about this requirement, in particular about the right-hand side $2^{-c_2m_{i'}/n_{i'}^{c_1} + n_i^{c_4}}$. The first term $c_2m_{i'}/n_{i'}^{c_1}$ is the exponent in the upper bound of \autoref{lem:socratic_lb}, which is the probability that $Q_{i',j}$ solves $(X_{i'},Y_{i'})$ on a random oracle. Since $R_i$ is only a subset of all oracles, the probability of $Q_{i',j}$ solving a random oracle in $R_i$ might be larger, and this probability could increase as we reduce the size of $R_i$. This behavior is captured by the second term $n_i^{c_4}$ in the exponent.

    \end{enumerate}
\end{enumerate}

\paragraph{Why half-exponential?} The probability bound of Property \ref{property 6: large} is the reason that our proof requires $h(n)$ to be super-half-exponential. In order for (\ref{property 6: large}) to be nontrivial, we require that $2^{-c_2m_{i'}/n_{i'}^{c_1}+n_i^{c_4}}<1$ for any $i',i\ge i_0$ such that $2^{i'}>h(i)$. Recall that $m_{i}=2^{h(i)/10}$ and $n_i=2^i$, this requirement loosely translates to $h(i')\ge i$ whenever $2^{i'}>h(i)$ (up to some polynomial). Assuming that $h$ is monotone, $h(i')\ge i$ is equivalent to $h(h(i'))\ge h(i)$, and our requirement becomes $h(h(i'))>2^{i'}$ for any $i'$, i.e., $h$ must be super-half-exponential.

Formally, given that $h$ is super-half-exponential, we can show that:

\begin{restatable}{claim}{halfExpMath}
    \label{clm:half_exp_math}
    For $i_0\le i'\le i$ and $1\le j\le i'$ such that $h(i-1)<2^{i'}\le h(i)$, we have that $2^{-c_2m_{i'}/n_{i'}^{c_1}+n_i^{c_4}}<1$.
\end{restatable}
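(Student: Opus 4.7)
The plan is to combine the super-half-exponential property $h(h(n)/n) \ge 2^n$ with the lower sandwich bound $h(i-1) < 2^{i'}$ to derive a lower bound of the form $h(i') > (i-1)\, i'$, which then dominates the target after routine bookkeeping. The first step is to unwind the claim: substituting $m_{i'} = 2^{h(i')/10}$ and $n_i = 2^i$, the inequality $2^{-c_2 m_{i'}/n_{i'}^{c_1}+n_i^{c_4}} < 1$ is equivalent (after taking $\log_2$) to $h(i') > 10\, c_4\, i + 10\, c_1\, i' - 10 \log_2 c_2$. So an $i \cdot i'$-sized lower bound on $h(i')$ will comfortably beat the right-hand side as long as $i \ge i' \ge i_0$ with $i_0$ taken sufficiently large in terms of $c_1, c_2, c_4$.

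The main step is to instantiate the super-half-exponential property at $n = i'$, which gives $h(h(i')/i') \ge 2^{i'}$. The hypothesis $2^{i'} > h(i-1)$ then yields $h(h(i')/i') > h(i-1)$, and strict monotonicity of $h$ collapses this to $h(i')/i' > i - 1$, i.e., $h(i') > (i-1)\, i'$. I expect this to be the cleanest route from the super-half-exponential condition to a quantitative lower bound matching the sandwich window.

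All that remains is to check that $(i-1)\, i'$ dominates $10 c_4\, i + 10 c_1\, i' - 10 \log_2 c_2$. Splitting $(i-1)\, i' = (i - 1 - 10 c_1)\, i' + 10 c_1\, i'$ absorbs the $10 c_1 i'$ term once $i \ge 10 c_1 + 2$, and then $(i - 1 - 10 c_1)\, i' \ge 10 c_4\, i$ follows from $i' \ge i_0$ once $i_0$ is taken large enough (any $i_0 \gtrsim \max(c_1, c_4)$ suffices). The main obstacle I anticipate is choosing the right instantiation of the super-half-exponential inequality: the naive choices $n = i-1$ or $n = i$ reduce the problem to the auxiliary estimate $\log h(n) \ge h(n)/n$, which fails for fast-growing $h$ such as $h(n) = 2^{\sqrt{n}}$; instantiating at $n = i'$ sidesteps this entirely by matching the lower sandwich $2^{i'} > h(i-1)$ directly via monotonicity. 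Notably, the upper sandwich $2^{i'} \le h(i)$ never enters the calculation, which serves as a sanity check that we have not misused the setup.
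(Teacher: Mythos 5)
Your proof is correct and follows essentially the same route as the paper: both reduce the claim to $h(i') \gtrsim c_1 i' + c_4 i$, instantiate the super-half-exponential property at $n=i'$, and combine $h(h(i')/i')\ge 2^{i'}>h(i-1)$ with monotonicity of $h$. The only difference is presentational — the paper argues by contradiction while you argue directly, extracting the explicit intermediate bound $h(i')>(i-1)i'$ — and your observation that one must instantiate at $i'$ (not $i$ or $i-1$) is exactly the point of the paper's argument as well.
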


\begin{proof}
    The above inequality is equivalent to
    \begin{align*}
        m_{i'}> n_{i'}^{c_1}\cdot n_i^{c_4}/c_2.
    \end{align*}
    Recall that $n_{i'}=2^{i'}$ and $m_{i'}=2^{h(i')/10}$. Taking logarithm on both sides, this is equivalent to showing that
    \begin{align*}
        h(i')/10&> c_1\cdot i'+c_4\cdot i+O(1)=c_4\cdot i\cdot (1+o(1)).
    \end{align*}
    Suppose that this does not hold. Then we have
    \begin{align*}
        h(i')/O(c_4)&\le i-1.
    \end{align*}
    Since $h$ is monotone, we have that
    \begin{align*}
        h(h(i')/O(c_4))&\le h(i-1).
    \end{align*}
    However, since $h$ is super-half-exponential, we have that $h(h(i')/i')\ge 2^{i'}$. This contradicts $2^{i'}> h(i-1)$.
\end{proof}

If we can construct such a sequence of rectangles, then the oracle $A^{\text{fixed}}=(X^{\text{fixed}}_{i},Y^{\text{fixed}}_{i})$ (formally, $A^{\text{fixed}}(x)=1$ if and only if $A'(x)=1$ for every $A'\in R_n$, where $n$ is sufficiently large) proves the theorem, because it lies in every rectangle $R_n$, and thus refutes every protocol.

\subsubsection{The Construction} When $i<i_0$, we let $(X^{\text{fixed}}_{\le i},Y^{\text{fixed}}_{\le i})$ be an arbitrary prefix, and let $R_i$ be the set of all oracles that agree with $(X^{\text{fixed}}_{\le i},Y^{\text{fixed}}_{\le i})$. %

When $i\ge i_0$, suppose that we have constructed $R_{i-1}$. To construct $R_i$, we start by letting $R^{0}_i\gets R_{i-1}$, then gradually shrink $R^0_i$ to ensure that the above properties are satisfied.

\paragraph{Step 1: Refuting the deterministic protocols.} We start by finding a large subrectangle in $R^0_i$ on which all deterministic protocols fail. Note that we only have to consider protocols $P_{i,j}$ for each $1\le j\le i$, since the other protocols $P_{<i,j}$ already fail on every oracle in $R_{i-1}$.

For each $1\le j\le i$, since the deterministic protocol $P_{i,j}$ runs in time $n_i^{c_1}$, there exists a subrectangle $R'$ of $R^0_i$, such that $R'$ contains at least $2^{-n_i^{c_1}}$ fraction of $R^0_i$, and $P_{i,j}$ outputs the same string for any oracle in $R'$.

Next, we use \autoref{cor:rect_error_prob_lb_aux} to find a subrectangle $R''$ of $R'$, such that $R''$ contains at least $2^{-O(n_i)}$ fraction of $R'$, and that for any oracle $A\in R''$, $P_{i,j}$ does not solve $(X_i,Y_i)$ on $A$.

\RectAux*

Note that \autoref{cor:rect_error_prob_lb_aux} is not na\"ively applicable to $R'$, because the oracles in $R'$ are infinitely long. However, by the property of measurability (\ref{property: measurability}), we may pretend that the oracles in $R'$ only encode finitely many instances $(X_{*},Y_{*})$. It is then possible to interpret the instances $(X_{\ne i},Y_{\ne i})$ as the auxiliary input and apply \autoref{cor:rect_error_prob_lb_aux}. Finally, we update $R^0_i\leftarrow R''$ to refute $P_{i,j}$. 

After considering all $1\le j\le i$, we have that, for any $1\le j\le i$ and any oracle $A\in R^0_i$, $P_{i,j}$ does not solve $(X_i,Y_i)$ on $A$. Let $R^1_i$ denote the resulting rectangle after refuting all $P_{i,j}$. Note that the size of $R^1_i$ is at least a $2^{-i\cdot O(n_i^{c_1})} \ge 2^{-O(n_i^{c_1})}$ fraction of the size of $R_{i-1}$. After this, the current rectangle $R^1_i$ satisfies all properties except \ref{property: common prefix} and \ref{property: refuting Socratic protocols}:

\begin{itemize}
    \item Containment is satisfied because we only shrink the rectangle.
    \item Measurability is satisfied because it was satisfied by $R_{i-1}$, and we only look at the first $h(i)$ instances of $A$ to decide the answer of each $P_{i,j}$. Also, when applying \autoref{cor:rect_error_prob_lb_aux}, it suffices to check the value of $(X_i,Y_i)$. To see this, recall that in \autoref{cor:rect_error_prob_lb_aux}, the subrectangle is constructed by fixing some $n_i$-bit string $s$, and restricting the rectangle to those oracles for which both $X_i$ and $Y_i$ contain $s$.
    \item $R^1_i$ is still large because we only shrink it by a factor of $2^{-n_i^{O(c_1)}}$:
    \begin{align}
        \label{equ:largeness_temp}
        \Pr_A[A\in R^1_i\mid A\text{ agrees with }(X^{\text{fixed}}_{\le i-1},Y^{\text{fixed}}_{\le i-1})]\ge 2^{-n_{i-1}^{c_3}-n_i^{O(c_1)}}=2^{-O(n_{i-1}^{c_3})}.
    \end{align}
    Note that this is not exactly the same as Largeness (Property \ref{property: largeness}) because we have not yet fixed $(X^{\text{fixed}}_i,Y^{\text{fixed}}_i)$.
    \item $R^1_i$ refutes all deterministic protocols because $R_{i-1}$ already refutes all $P_{i',j}$ where $i_0\le i'\le i-1$, and we have just ensured that $R^1_i$ refutes all $P_{i,j}$.
\end{itemize}

As for the robust protocols, we now have the following guarantee: For $i_0\le i'\le i$, $1\le j\le i'$,
\begin{enumerate}[label=(\alph*)]
    \item \textbf{Small protocols:} If $2^{i'}\le h(i-1)$, then for any oracle $A\in R^1_i$, $Q_{i',j}$ does not solve $(X_{i'},Y_{i'})$ on $A$. This holds by transitivity from $R_{i-1}$.
    \item \textbf{Large protocols:} If $2^{i'}> h(i-1)$, then
    \begin{align}
        \label{equ:large_socratic_temp}
        \Pr_A[Q_{i',j}\text{ solves }(X_{i'},Y_{i'})\text{ on }A\mid A\in R^1_i]\le 2^{-c_2m_{i'}/n_{i'}^{c_1}+O(n_{i-1}^{c_4})}.
    \end{align}
    
    For $i'<i$, this holds because $R_{i-1}$ satisfies Property \ref{property: refuting Socratic protocols} on $i-1$, i.e., it refutes the protocol $Q_{i',j}$ if $i'<i$. Since we only shrink $R^1_i$ by a $2^{-n_i^{O(c_1)}}$ factor, the success probability of $Q_{i',j}$ only increases by a $2^{n_i^{O(c_1)}}=2^{O(n_{i-1}^{c_4})}$ factor.
    
    For $i'=i$, our bound comes from applying \autoref{lem:socratic_lb} to the set of all oracles (that agree with $(X^{\text{fixed}}_{\le i-1},Y^{\text{fixed}}_{\le i-1})$). Note that
    \begin{align}
        \label{equ:large_socratic_temp2}
        \Pr_A[Q_{i,j}\text{ solves }(X_i,Y_i)\text{ on }A\mid A\text{ agrees with }(X^{\text{fixed}}_{\le i-1},Y^{\text{fixed}}_{\le i-1})]\le 2^{-c_2m_i/n_i^{c_1}},
    \end{align}
    which is a corollary of \autoref{lem:socratic_lb}. Indeed, if \eqref{equ:large_socratic_temp2} does not hold, then we can fix an infinite sequence $(X^{\text{bad}}_{>i},Y^{\text{bad}}_{>i})$ of instances, such that
    \begin{align*}
        \Pr_A[Q_{i,j}\text{ solves }(X_i,Y_i)\text{ on }A\mid A\text{ agrees with }(X^{\text{fixed}}_{\le i-1},Y^{\text{fixed}}_{\le i-1}) \text{ and }(X^{\text{bad}}_{>i},Y^{\text{bad}}_{>i})]> 2^{-c_2m_i/n_i^{c_1}}.
    \end{align*}
    Using this, we can construct a robust protocol $Q$ that breaks the bound of \autoref{lem:socratic_lb}: On input $(X_i,Y_i)$, $Q$ combines it with $(X^{\text{fixed}}_{\le i-1},Y^{\text{fixed}}_{\le i-1})$ and $(X^{\text{bad}}_{>i},Y^{\text{bad}}_{>i})$ to form a well-structured oracle $A$, then simulates $Q_{i,j}$ on $A$. The success probability of $Q$ is exactly the above probability, which contradicts \autoref{lem:socratic_lb}.

    Combining \eqref{equ:large_socratic_temp2} and \eqref{equ:largeness_temp}, we have
    \begin{align*}
        \Pr_A[Q_{i,j}\text{ solves }(X_i,Y_i)\text{ on }A\mid A\in R^1_i]\le 2^{-c_2m_i/n_i^{c_1}+O(n_{i-1}^{c_3})}
    \end{align*}
    for every $1\le j\le i$.
\end{enumerate}

\paragraph{Step 2: Extending the common prefix.} Next, we find an instance $(X^{\text{fixed}}_i,Y^{\text{fixed}}_i)$, and let $R_i$ be the restriction of $R^1_i$ to the set of oracles that agree with $(X^{\text{fixed}}_{\le i},Y^{\text{fixed}}_{\le i})$. In the following, we show that there exists a choice of $(X^{\text{fixed}}_i,Y^{\text{fixed}}_i)$ such that every property is satisfied. The proof is by showing that a random instance $(X_i,Y_i)$ satisfies every property with nonzero probability.

As the other properties are straightforward, we only show that Properties \ref{property: largeness} and \ref{property: refuting Socratic protocols} are satisfied. More precisely, we show that:

\begin{lemma}
    \label{lem:extend_prefix}
    There exists a choice of $(X^{\text{fixed}}_i,Y^{\text{fixed}}_i)$, such that when $R_i$ is set to be the subset of oracles in $R^1_i$ that agree with $(X^{\text{fixed}}_i,Y^{\text{fixed}}_i)$, we have that
    \begin{itemize}
        \item \textbf{Largeness:}
        \begin{align*}
            \Pr_A[A\in R_i\mid A\text{ agrees with }(X^{\text{fixed}}_{\le i},Y^{\text{fixed}}_{\le i})]\ge 2^{-n_i^{c_3}}.
        \end{align*}
        \item \textbf{Refuting robust protocols:} For $i',j$ such that $i_0\le i'\le i$, $1\le j\le i'$, we have that
        \begin{enumerate}[label=(\alph*)]
            \item \textbf{Small protocols:} If $2^{i'}\le h(i-1)$, then for any oracle $A\in R_i$, $Q_{i',j}$ does not solve $(X_{i'},Y_{i'})$ on $A$.
        \item \textbf{Large protocols:} If $2^{i'}> h(i-1)$, then
        \begin{align*}
            \Pr_A[Q_{i',j}\text{ solves }(X_{i'},Y_{i'})\text{ on }A\mid A\in R_i]\le 2^{-c_2m_{i'}/n_{i'}^{c_1}+n_i^{c_4}}.
        \end{align*}
        \end{enumerate}
    \end{itemize}
\end{lemma}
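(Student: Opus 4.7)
The plan is a probabilistic argument: sample $(X^{\text{fixed}}_i, Y^{\text{fixed}}_i)$ uniformly at random from all $\XorMissingString(n_i, m_i)$ instances and show that each required property holds simultaneously with positive probability. The Small protocols condition is essentially free for $i' \le i-1$: by Property~\ref{property 6: small} applied to $R_{i-1}$, no such $Q_{i',j}$ solves $(X_{i'}, Y_{i'})$ on any oracle in $R_{i-1} \supseteq R^1_i \supseteq R_i$, and this property is preserved under arbitrary restriction. The case $i'=i$ does not arise because $h$ being super-half-exponential forces $h(i-1) < 2^i$ for $i \ge i_0$, so the hypothesis $2^i \le h(i-1)$ fails.

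For Largeness, write $p(x,y) := \Pr_A[A \in R^1_i \mid A \text{ agrees with } (X^{\text{fixed}}_{\le i-1}, Y^{\text{fixed}}_{\le i-1}),\ (X_i, Y_i) = (x,y)]$. By \eqref{equ:largeness_temp}, $\Exp_{(x,y)}[p(x,y)] \ge 2^{-O(n_{i-1}^{c_3})}$. Since $p \in [0,1]$, the reverse Markov inequality $\Pr[p \ge t] \ge \Exp[p] - t$ applied with $t = 2^{-n_i^{c_3}}$ yields
\[
\Pr_{(x,y)}\bigl[p(x,y) \ge 2^{-n_i^{c_3}}\bigr] \ge \Omega\bigl(2^{-O(n_{i-1}^{c_3})}\bigr),
\]
where we use that $n_i = 2 n_{i-1}$, so $n_i^{c_3} = 2^{c_3} n_{i-1}^{c_3}$ and thus $t \ll \Exp[p]$.

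For Large protocols, fix $(i', j)$ with $i_0 \le i' \le i$ and $2^{i'} > h(i-1)$, and let $q_{i',j}(x,y)$ denote the joint probability that $Q_{i',j}$ solves $(X_{i'}, Y_{i'})$ and $A \in R^1_i$, conditional on the fixed prefix and $(X_i, Y_i) = (x,y)$. Combining \eqref{equ:large_socratic_temp} (for $i' < i$) and \eqref{equ:large_socratic_temp2} with \eqref{equ:largeness_temp} (for $i' = i$), we get $\Exp_{(x,y)}[q_{i',j}(x,y)] \le 2^{-c_2 m_{i'}/n_{i'}^{c_1} + O(n_{i-1}^{c_4})}$. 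Markov's inequality with multiplicative threshold $T = 2^{n_i^{c_4}/2}$ gives: with probability $\ge 1 - 2^{-n_i^{c_4}/2}$,
\[
q_{i',j}(x,y) \le 2^{-c_2 m_{i'}/n_{i'}^{c_1} + O(n_{i-1}^{c_4}) + n_i^{c_4}/2}.
\]
Intersecting with the Largeness event $p(x,y) \ge 2^{-n_i^{c_3}}$ gives the desired bound $q_{i',j}(x,y)/p(x,y) \le 2^{-c_2 m_{i'}/n_{i'}^{c_1} + n_i^{c_4}}$, using that $n_i^{c_3} + O(n_{i-1}^{c_4}) \le n_i^{c_4}/2$ whenever $c_4 \gg c_3$ and $i \ge i_0$ is large enough (recall $n_{i-1}^{c_4} = n_i^{c_4}/2^{c_4}$).

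Taking a union bound over the $O(i^2)$ pairs $(i', j)$, the failure probability for Large protocols is at most $O(i^2) \cdot 2^{-n_i^{c_4}/2}$, which is negligible compared to the Largeness success probability $\Omega(2^{-O(n_{i-1}^{c_3})})$ because $n_i^{c_4}/2 \gg n_{i-1}^{c_3}$. Hence some choice of $(X^{\text{fixed}}_i, Y^{\text{fixed}}_i)$ satisfies every property simultaneously. The main obstacle is the careful balancing of exponents: the Markov amplification factor $2^{n_i^{c_4}/2}$ must absorb both the ``inherited slack'' $O(n_{i-1}^{c_4})$ from the preceding rectangle $R^1_i$ and the ``Largeness deficit'' $n_i^{c_3}$ introduced by conditioning, while still leaving the target exponent $n_i^{c_4}$; this is precisely the role played by the convention $c_1 \ll c_3 \ll c_4$ together with sufficiently large $i_0$.
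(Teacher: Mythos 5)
Your proposal is correct and follows essentially the same route as the paper's proof: sample $(X^{\text{fixed}}_i,Y^{\text{fixed}}_i)$ uniformly, lower-bound the probability of Largeness by a reverse-Markov argument on $\mathbb{E}_{(X_i,Y_i)}[p_{(X_i,Y_i)}]$, and handle each robust protocol by bounding $\mathbb{E}_{(X_i,Y_i)}[q\cdot p]$ followed by Markov and a union bound --- the only difference is bookkeeping, since the paper (Claim~\ref{clm:socratic_union}) bounds the failure probability \emph{conditioned on} Largeness by $2^{-i}$ via contradiction, whereas you apply Markov to the joint quantity $q\cdot p$ unconditionally with threshold $2^{n_i^{c_4}/2}$ and divide by $p\ge 2^{-n_i^{c_3}}$ at the end, and both union bounds close. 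One minor inaccuracy: super-half-exponentiality does not by itself force $h(i-1)<2^{i}$ (e.g.\ $h(n)=2.4\cdot 2^{n}$ is super-half-exponential and satisfies $h(i-1)\ge 2^{i}$ while still meeting the implicit constraint $m_i<2^{n_i/2}$), so your dismissal of the small case at $i'=i$ is not justified as stated; the paper glosses over the same corner, and the clean fix is to observe that the large-protocol analysis for $i'=i$ never actually uses $2^{i}>h(i-1)$, so $Q_{i,j}$ can always be routed through it.
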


Note that the property stated for robust protocols is different from Property \ref{property: refuting Socratic protocols}, in that the large protocols are defined to be those that satisfy $2^{i'}>h(i-1)$ instead of $2^{i'}>h(i)$. However, as we will show in \autoref{clm:final_temp}, \autoref{lem:extend_prefix} implies Property \ref{property: refuting Socratic protocols}.

\begin{proof}

    We first show that largeness is satisfied with nontrivial probability, then show that conditioned on largeness being satisfied, each robust protocol is refuted with high probability.

    \paragraph{Largeness.} For an instance $(X_i,Y_i)$, let $p_{(X_i,Y_i)}$ denote
    \begin{align*}
        \Pr_A[A\in R^1_i\mid A\text{ agrees with }(X^{\text{fixed}}_{\le i-1},Y^{\text{fixed}}_{\le i-1})\land A\text{ agrees with }(X_i,Y_i)],
    \end{align*}
    i.e., the fraction of $R^1_i$ among the oracles that agree with $(X^{\text{fixed}}_{\le i-1},Y^{\text{fixed}}_{\le i-1})$ and $(X_i,Y_i)$. In order to satisfy Property \ref{property: largeness}, we must choose an instance $(X_i,Y_i)$ such that $p_{(X_i,Y_i)}\ge 2^{-n_i^{c_3}}$. Let
    \begin{align*}
        p_{\text{large}}=\Pr_{(X_i,Y_i)}[p_{(X_i,Y_i)}\ge 2^{-n_i^{c_3}}].
    \end{align*}
    Intuitively, since $\Pr[A\in R^1_i\mid A\text{ agrees with }(X^{\text{fixed}}_{\le i-1},Y^{\text{fixed}}_{\le i-1})]$ is guaranteed to be large, $p_{(X_i,Y_i)}$ should be large on average. By applying a simple Markov bound, we can show that $p_{(X_i,Y_i)}$ is also large with decent probability:
    \begin{claim}
        \label{clm:large_fraction}
        $p_{\text{large}}\ge 2^{-n_i^{c_3}}$.
    \end{claim}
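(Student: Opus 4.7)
The plan is a straightforward averaging/reverse-Markov argument, exploiting the gap between $n_i^{c_3}$ and $n_{i-1}^{c_3}$. First, I would compute the expectation of $p_{(X_i,Y_i)}$ over a uniformly random instance $(X_i, Y_i)$. By the law of total probability, this expectation is exactly
\[
\Exp_{(X_i,Y_i)}\bigl[p_{(X_i,Y_i)}\bigr]
\;=\;\Pr_A\bigl[A\in R^1_i\,\bigm|\,A\text{ agrees with }(X^{\text{fixed}}_{\le i-1},Y^{\text{fixed}}_{\le i-1})\bigr],
\]
since conditioning on agreement with a random $(X_i, Y_i)$ and then averaging over $(X_i, Y_i)$ is the same as only conditioning on agreement with the fixed prefix $(X^{\text{fixed}}_{\le i-1}, Y^{\text{fixed}}_{\le i-1})$. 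By the bound \eqref{equ:largeness_temp} established in Step~1, this expectation is at least $2^{-O(n_{i-1}^{c_3})}$.

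Next, I would apply the standard reverse Markov inequality for $[0,1]$-valued random variables: for any threshold $t\in[0,1]$,
\[
\Exp[p_{(X_i,Y_i)}]\;\le\;t + \Pr_{(X_i,Y_i)}\bigl[p_{(X_i,Y_i)}\ge t\bigr],
\]
which rearranges to $\Pr[p_{(X_i,Y_i)}\ge t]\ge \Exp[p_{(X_i,Y_i)}]-t$. Setting $t := 2^{-n_i^{c_3}}$, this yields
\[
p_{\text{large}}\;\ge\;2^{-O(n_{i-1}^{c_3})}-2^{-n_i^{c_3}}.
\]

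Finally, I would invoke the relation $n_i = 2n_{i-1}$, which gives $n_i^{c_3} = 2^{c_3}\cdot n_{i-1}^{c_3}$. Since $c_3$ is chosen to be a sufficiently large constant (in particular, large enough so that $2^{c_3}$ dominates the hidden constant inside $O(\cdot)$), we obtain $n_i^{c_3}\ge 2\cdot O(n_{i-1}^{c_3})+1$ for all $i\ge i_0$. Therefore $2^{-n_i^{c_3}}\le \tfrac{1}{2}\cdot 2^{-O(n_{i-1}^{c_3})}$, so the subtraction above leaves at least $\tfrac{1}{2}\cdot 2^{-O(n_{i-1}^{c_3})}$, which is in turn much larger than the target $2^{-n_i^{c_3}}$. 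This proves the claim.

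No step here looks like a genuine obstacle; the entire argument is an elementary Markov-type computation. The one point to be careful about is the choice of $c_3$, but this is a purely combinatorial constraint (ensuring $n_i^{c_3}\gg n_{i-1}^{c_3}$ by a wide enough margin to absorb the $O(\cdot)$ constant from \eqref{equ:largeness_temp}), and the constants in the setup have already been ordered $c_1\ll c_3\ll c_4$ precisely to make such gaps available. The subtler work lies elsewhere in the construction — namely, in the subsequent claims of \autoref{lem:extend_prefix} showing that conditioning on largeness does not ruin the refutation of robust protocols — but \autoref{clm:large_fraction} itself is essentially the ``easy half'' of the probabilistic argument.
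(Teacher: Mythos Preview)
Your proposal is correct and matches the paper's proof essentially line-for-line: both compute $\Exp_{(X_i,Y_i)}[p_{(X_i,Y_i)}]\ge 2^{-O(n_{i-1}^{c_3})}$ from \eqref{equ:largeness_temp}, apply the same reverse-Markov bound $\Exp[p]\le p_{\text{large}}+2^{-n_i^{c_3}}$, and conclude using $n_i=2^i$ (so $n_i=2n_{i-1}$) together with $c_3$ being sufficiently large. The paper's writeup is just slightly terser on the final numerical comparison.
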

    \begin{claimproof}
        Using the notation $p_{(X_i,Y_i)}$, \eqref{equ:largeness_temp} directly translates to
        \begin{align*}
            \mathbb{E}_{(X_i,Y_i)}[p_{(X_i,Y_i)}]&\ge 2^{-O(n_{i-1}^{c_3})}.
        \end{align*}
        Since $\Pr_{(X_i,Y_i)}[p_{(X_i,Y_i)}\le 1]=1$ by definition, we have that
        \begin{align*}
            \mathbb{E}_{(X_i,Y_i)}[p_{(X_i,Y_i)}]&\le 
            p_{\text{large}}\cdot 1+(1-p_{\text{large}})\cdot 2^{-n_i^{c_3}} \nonumber \\
            &\le p_{\text{large}}+2^{-n_i^{c_3}}.
        \end{align*}
        The claim holds because $c_3$ is sufficiently large (note that $n_i=2^i$).
    \end{claimproof}
    \paragraph{Refuting robust protocols.} Next, we apply a union bound over the robust protocols, and show that conditioned on $p_{(X_i,Y_i)}\ge 2^{-n_i^{c_3}}$, with high probability, every protocol satisfies the property stated in the lemma.
    
    Let $i_0\le i'\le i$ and $1\le j\le i'$. In the small case (i.e., $2^{i'}\le h(i-1)$), $Q_{i',j}$ does not solve $(X_{i'},Y_{i'})$ on any oracle in $R_i$, and this property is preserved regardless of $(X^{\text{fixed}}_i,Y^{\text{fixed}}_i)$. Therefore, we only need to consider the case where $2^{i'}> h(i-1)$. For any $Q_{i',j}$ and $(X_i,Y_i)$, let $q_{i',j,(X_i,Y_i)}$ denote
    \begin{align*}
        \Pr_A[Q_{i',j}\text{ solves }(X_{i'},Y_{i'})\text{ on }A\mid A\in R^1_i\land A\text{ agrees with }(X_i,Y_i)],
    \end{align*}
    i.e., the fraction of oracles in $R_i$ that $Q_{i',j}$ solves, if $(X^{\text{fixed}}_i,Y^{\text{fixed}}_i)$ is chosen to be $(X_i,Y_i)$. We show that for a random instance $(X_i,Y_i)$, conditioned on $p_{(X_i,Y_i)}\ge 2^{-n_i^{c_3}}$, $q_{i',j,(X_i,Y_i)}$ is small with high probability. Again, the proof holds by a simple Markov bound, where we use the fact that $Q_{i',j}$ only solves a small fraction of oracles in $R^1_i$.
    
    \begin{claim}
        \label{clm:socratic_union}
        Let $i_0\le i'\le i$, $1\le j\le i'$, be such that $2^{i'}> h(i-1)$. We have that
        \begin{align*}
            \Pr_{(X_i,Y_i)}[q_{i',j,(X_i,Y_i)}> 2^{-c_2m_{i'}/n_{i'}^{c_1}+n_i^{c_4}}\mid p_{(X_i,Y_i)}\ge 2^{-n_i^{c_3}}]\le 2^{-i}.
        \end{align*}
    \end{claim}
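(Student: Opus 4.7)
The plan is to apply Markov's inequality to the product $p_{(X_i,Y_i)}\cdot q_{i',j,(X_i,Y_i)}$, leveraging the success-probability bound \eqref{equ:large_socratic_temp} that is already established for $R^1_i$. The first step is to recognize, by conditioning on the value of $(X_i,Y_i)$, the identity
\[
\mathbb{E}_{(X_i,Y_i)}\bigl[p_{(X_i,Y_i)}\cdot q_{i',j,(X_i,Y_i)}\bigr]
= \Pr_A\bigl[A\in R^1_i\ \land\ Q_{i',j}\text{ solves }(X_{i'},Y_{i'})\text{ on }A\bigr],
\]
where the probability is over uniformly random oracles $A$ that agree with $(X^{\text{fixed}}_{\le i-1},Y^{\text{fixed}}_{\le i-1})$. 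This equals $\Pr_A[A\in R^1_i]\cdot \Pr_A[Q_{i',j}\text{ solves}\mid A\in R^1_i]$; combining the trivial bound $\Pr_A[A\in R^1_i]\le 1$ with \eqref{equ:large_socratic_temp} yields $\mathbb{E}[p\cdot q]\le 2^{-c_2 m_{i'}/n_{i'}^{c_1}+O(n_{i-1}^{c_4})}$.

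Next I would apply Markov. Writing $T:=2^{-c_2 m_{i'}/n_{i'}^{c_1}+n_i^{c_4}}$, the event $\{p\ge 2^{-n_i^{c_3}}\}\cap\{q>T\}$ is contained in $\{p\cdot q> 2^{-n_i^{c_3}}\cdot T\}$, so
\[
\Pr_{(X_i,Y_i)}\bigl[p_{(X_i,Y_i)}\ge 2^{-n_i^{c_3}}\ \land\ q_{i',j,(X_i,Y_i)}>T\bigr]
\le \frac{\mathbb{E}[p\cdot q]}{2^{-n_i^{c_3}}\cdot T}
\le 2^{n_i^{c_3}-n_i^{c_4}+O(n_{i-1}^{c_4})}.
\]
Since $n_i=2^i$ gives $n_{i-1}^{c_4}=n_i^{c_4}/2^{c_4}$, taking $c_4$ large enough relative to the absolute constant in the $O(\cdot)$ absorbs that error term and leaves the exponent at most $n_i^{c_3}-n_i^{c_4}/2$.

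Finally, \autoref{clm:large_fraction} guarantees $\Pr[p_{(X_i,Y_i)}\ge 2^{-n_i^{c_3}}]=p_{\text{large}}\ge 2^{-n_i^{c_3}}$, so dividing the joint bound by the probability of the conditioning event gives
\[
\Pr\bigl[q_{i',j,(X_i,Y_i)}>T\ \big|\ p_{(X_i,Y_i)}\ge 2^{-n_i^{c_3}}\bigr]
\le 2^{2n_i^{c_3}-n_i^{c_4}/2},
\]
which is comfortably below $2^{-i}$ for $i\ge i_0$ because $c_3\ll c_4$ and $n_i=2^i$ grows doubly-geometrically in $i$. I do not anticipate a serious obstacle: the argument is a one-line Markov on a product, and the generous separation $c_1\ll c_3\ll c_4$, together with the factor $2^{c_4}$ improvement going from $n_{i-1}$ to $n_i$, provides ample slack. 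The one subtlety to verify carefully is that every probability on $A$ is taken over oracles agreeing with the same fixed prefix $(X^{\text{fixed}}_{\le i-1},Y^{\text{fixed}}_{\le i-1})$, so that the unpacking identity and the previously established bound \eqref{equ:large_socratic_temp} refer to a consistent probability space.
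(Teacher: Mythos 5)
Your proposal is correct and follows essentially the same route as the paper: both arguments hinge on the identity $\mathbb{E}_{(X_i,Y_i)}[p_{(X_i,Y_i)}\cdot q_{i',j,(X_i,Y_i)}]=\Pr_A[A\in R^1_i\land Q_{i',j}\text{ solves}]$, the upper bound \eqref{equ:large_socratic_temp}, the lower bound on $p_{\text{large}}$ from \autoref{clm:large_fraction}, and a Markov-style averaging step. The only difference is presentational — you apply Markov's inequality directly to the product $p\cdot q$, whereas the paper phrases the same estimate as a proof by contradiction — and your parameter bookkeeping ($O(n_{i-1}^{c_4})$ absorbed by $n_i^{c_4}/2$ via $n_{i-1}^{c_4}=n_i^{c_4}/2^{c_4}$, then $2n_i^{c_3}-n_i^{c_4}/2\ll -i$) is sound.
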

    \begin{claimproof}
        The proof is by investigating the value $\mathbb{E}_{(X_i,Y_i)}[q_{i',j,(X_i,Y_i)}\cdot p_{(X_i,Y_i)}]$, which is the probability that $Q_{i',j}$ solves a random oracle in $R^1_i$ (conditioned on the common prefix). Indeed, we can show that this is equal to
        \begin{align*}
            \Pr_A[Q_{i',j}\text{ solves }(X_{i'},Y_{i'})\text{ on }A\land A\in R^1_i\mid A\text{ agrees with }(X^{\text{fixed}}_{\le i-1},Y^{\text{fixed}}_{\le i-1})].
        \end{align*}
        \begingroup \allowdisplaybreaks
        Formally, we have that (for simplicity, let $B$ denote the event $[Q_{i',j}\text{ solves }(X_{i'},Y_{i'})\text{ on }A\land A\in R^1_i]$)
        \begin{align*}
            &\mathbb{E}_{(X_i,Y_i)}[q_{i',j,(X_i,Y_i)}\cdot p_{(X_i,Y_i)}] \\
            ={}&2^{-2n_im_i}\sum_{(X_i,Y_i)}\biggr(\Pr[B\mid A\text{ agrees with }(X^{\text{fixed}}_{\le i-1},Y^{\text{fixed}}_{\le i-1})\land A\text{ agrees with }(X_i,Y_i)]\biggr) \\
            ={}&2^{-2n_im_i}\sum_{(X_i,Y_i)}\frac{\Pr[B \land A\text{ agrees with }(X_i,Y_i)\mid A\text{ agrees with }(X^{\text{fixed}}_{\le i-1},Y^{\text{fixed}}_{\le i-1})]}{\Pr[A\text{ agrees with }(X_i,Y_i)\mid A\text{ agrees with }(X^{\text{fixed}}_{\le i-1},Y^{\text{fixed}}_{\le i-1})]} \\
            ={}&2^{-2n_im_i}\sum_{(X_i,Y_i)}\frac{\Pr[B\land A\text{ agrees with }(X_i,Y_i)\mid A\text{ agrees with }(X^{\text{fixed}}_{\le i-1},Y^{\text{fixed}}_{\le i-1})]}{2^{-2n_im_i}} \\
            ={}&\sum_{(X_i,Y_i)}\Pr[B\land A\text{ agrees with }(X_i,Y_i)\mid A\text{ agrees with }(X^{\text{fixed}}_{\le i-1},Y^{\text{fixed}}_{\le i-1})] \\
            ={}&\Pr[B\mid A\text{ agrees with }(X^{\text{fixed}}_{\le i-1},Y^{\text{fixed}}_{\le i-1})].
        \end{align*}
        \endgroup
        
        Using \eqref{equ:largeness_temp} and \eqref{equ:large_socratic_temp}, we can upper bound $\Pr[B\mid A\text{ agrees with }(X^{\text{fixed}}_{\le i-1},Y^{\text{fixed}}_{\le i-1})]$ by
        \begin{align}
            \label{equ:socratic_temp2}
            &\Pr[B\mid A\text{ agrees with }(X^{\text{fixed}}_{\le i-1},Y^{\text{fixed}}_{\le i-1})] \nonumber\\
            ={}&\Pr_A[Q_{i',j}\text{ solves }(X_{i'},Y_{i'})\text{ on }A\land A\in R^1_i\mid A\text{ agrees with }(X^{\text{fixed}}_{\le i-1},Y^{\text{fixed}}_{\le i-1})] \nonumber\\
            ={}&\Pr[Q_{i',j}\text{ solves }(X_{i'},Y_{i'})\text{ on }A\mid A\in R^1_i]\cdot \Pr[A\in R^1_i\mid A\text{ agrees with }(X^{\text{fixed}}_{\le i-1},Y^{\text{fixed}}_{\le i-1})] \nonumber\\
            \le{}&2^{-c_2m_i/n_i^{c_1}+O(n_{i-1}^{c_3})}\cdot 1.\tag{by \eqref{equ:large_socratic_temp}}
        \end{align}
        
        Now assume that the claim does not hold. Then we have that
        \begin{align*}
            &\mathbb{E}_{(X_i,Y_i)}[q_{i',j,(X_i,Y_i)}\cdot p_{(X_i,Y_i)}] \\
            &\ge \mathbb{E}_{(X_i,Y_i)}[q_{i',j,(X_i,Y_i)}\cdot 2^{-n_i^{c_3}}\mid p_{(X_i,Y_i)}\ge 2^{-n_i^{c_3}}]\cdot \Pr_{(X_i,Y_i)}[p_{(X_i,Y_i)}\ge 2^{-n_i^{c_3}}] \\
            &\ge \Pr_{(X_i,Y_i)}[q_{i',j,(X_i,Y_i)}> 2^{-c_2m_{i'}/n_{i'}^{c_1}+n_i^{c_4}}\mid p_{(X_i,Y_i)}\ge 2^{-n_i^{c_3}}]\cdot 2^{-n_i^{c_3}}\cdot 2^{-c_2m_{i'}/n_{i'}^{c_1}+n_i^{c_4}}\cdot p_{\text{large}} \\
            &\ge 2^{-c_2m_{i'}/n_{i'}^{c_1}+n_i^{c_4}\cdot(1-o(1))}.
        \end{align*}
        Since $c_4$ is sufficiently large, this contradicts \eqref{equ:socratic_temp2}.
    \end{claimproof}

    \paragraph{The union bound.} By applying an union bound over all $Q_{i',j}$, \autoref{clm:socratic_union} show that, when we randomly select an instance $(X_i,Y_i)$, conditioned on $p_{(X_i,Y_i)}\ge 2^{-n_i^{c_3}}$, with high probability, $q_{i',j,(X_i,Y_i)}$ is small for every $i_0\le i'\le i$ and $1\le j\le i'$. Combining this with \autoref{clm:large_fraction}, which shows that there exist instances that satisfy $p_{(X_i,Y_i)}\ge 2^{-n_i^{c_3}}$, we conclude the proof of \autoref{lem:extend_prefix}.
\end{proof}

\paragraph{Analysis for step $2$.} Finally, we show that the rectangle obtained from \autoref{lem:extend_prefix} satisfies Property \ref{property: refuting Socratic protocols}. Note that the property stated in \autoref{lem:extend_prefix} is only different from Property \ref{property: refuting Socratic protocols} on the requirement for protocols $Q_{i',j}$ with $h(i-1)<2^{i'}\le h(i)$. It then remains to show that:

\begin{claim}
    \label{clm:final_temp}
    Let $R_i$ be as constructed in \autoref{lem:extend_prefix}. For $i_0\le i'\le i$ and $1\le j\le i'$ such that $h(i-1)<2^{i'}\le h(i)$, and for any oracle $A\in R_i$, $Q_{i',j}$ does not solve $(X_{i'},Y_{i'})$ on $A$.
\end{claim}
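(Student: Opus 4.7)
The strategy is to combine a structural observation about which oracle bits $Q_{i',j}$ can possibly see with the quantitative probability bound just established in \autoref{lem:extend_prefix}. The key point is that in the regime $h(i-1)<2^{i'}\le h(i)$, every oracle location that the protocol $Q_{i',j}$ can access is already fixed by the common prefix defining $R_i$, so the 0/1 event ``$Q_{i',j}$ solves $(X_{i'},Y_{i'})$ on $A$'' is actually constant on $R_i$; the fractional bound from \autoref{lem:extend_prefix} together with \autoref{clm:half_exp_math} then forces this constant to be $0$.

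First I would spell out the access bound. Recall that $Q_{i',j}$ is obtained from an $\MA$ algorithm $M'_j$ running in time $2^{i'}$ on inputs of length $i'$ (via the $\MA$ version of \autoref{thm:transfer}), so every query it can possibly make to $\widetilde{A}_{m,\mathbb{F}}$ satisfies $m\le 2^{i'}$. In our well-structured oracles, $A$ is nonzero only on input lengths of the form $h(k)$, and such a length contributes exactly the instance $(X_k,Y_k)$. Thus the answer (including acceptance probabilities of every $V_{w,\pi}$) of $Q_{i',j}$ on $A$ is a function only of those $(X_k,Y_k)$ with $h(k)\le 2^{i'}$. Monotonicity of $h$ and the hypothesis $2^{i'}\le h(i)$ give $k\le i$, so $Q_{i',j}$ depends on $A$ only through $(X_{\le i},Y_{\le i})$.

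Second, I invoke Property~\ref{property: common prefix} for $R_i$: every oracle $A\in R_i$ has $(X_k,Y_k)=(X^{\text{fixed}}_k,Y^{\text{fixed}}_k)$ for all $k\le i$. Combined with the previous paragraph, this means that the event ``$Q_{i',j}$ solves $(X_{i'},Y_{i'})$ on $A$''---a property of $A$ alone, since acceptance probabilities over public randomness and existence of a good Merlin proof depend only on the part of $A$ the protocol can see---takes the same value on every $A\in R_i$. In particular,
\[
\Pr_A[Q_{i',j}\text{ solves }(X_{i'},Y_{i'})\text{ on }A\mid A\in R_i]\in\{0,1\}.
\]

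Finally, I combine this dichotomy with the quantitative bound. Since $h(i-1)<2^{i'}$, the ``large protocols'' case of \autoref{lem:extend_prefix} applies to $R_i$ and gives
\[
\Pr_A[Q_{i',j}\text{ solves }(X_{i'},Y_{i'})\text{ on }A\mid A\in R_i]\le 2^{-c_2 m_{i'}/n_{i'}^{c_1}+n_i^{c_4}}.
\]
By \autoref{clm:half_exp_math} (which uses precisely the hypothesis $h(i-1)<2^{i'}\le h(i)$ and that $h$ is super-half-exponential), this upper bound is strictly less than $1$. Combined with the 0/1 dichotomy, the probability must equal $0$, i.e., $Q_{i',j}$ fails to solve $(X_{i'},Y_{i'})$ on every $A\in R_i$. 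The main delicate point in this plan is ensuring that the access-length argument is tight: the hypothesis $2^{i'}\le h(i)$ only gives access up to instance $i$, which is exactly what is already fixed---had it read $2^{i'}\le h(i+1)$, the argument would break, which is precisely why Property~\ref{property: refuting Socratic protocols} is split at the threshold $h(i)$ rather than $h(i+1)$.
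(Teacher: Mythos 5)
Your proposal is correct and follows essentially the same route as the paper's proof: observe that $Q_{i',j}$ only reads $(X_{\le i},Y_{\le i})$, which is fixed on $R_i$ by the common-prefix property, so the success event is constant on $R_i$, and then the bound from the large-protocols case of \autoref{lem:extend_prefix} together with \autoref{clm:half_exp_math} forces that constant to be $0$. Your write-up is somewhat more explicit than the paper's about the access-length argument and the $0/1$ dichotomy, but the substance is identical.
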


\begin{proof}

    Note that, since $Q_{i',j}$ is simulating an algorithm $M$ that runs in time $2^{i'}$, it only uses the value of the oracle on the instances $(X_{\le i},Y_{\le i})$. Since all oracles in $R_i$ agree on $(X_{\le i},Y_{\le i})$, if $Q_{i',j}$ solves $(X_{i'},Y_{i'})$ on some oracle in $R_i$, then it does so on \emph{every} oracle in $R_i$. Therefore, we only have to show that
    \begin{align*}
        \Pr_A[Q_{i',j}\text{ solves }(X_{i'},Y_{i'})\text{ on }A\mid A\in R_i]\le 2^{-c_2m_{i'}/n_{i'}^{c_1}+n_i^{c_4}}<1.
    \end{align*}
    This holds by \autoref{clm:half_exp_math}.
\end{proof}

\appendix

\bibliographystyle{alphaurl}

{\small \bibliography{refs}}

\section{On the Circuit Lower Bound of Buhrman--Fortnow--Thierauf}
\label{app:bft}

We verify that the proof of~\cite{BuhrmanFT98} indeed gives a sub-half-exponential circuit lower bound for $\E \cup \RobMAE$ that algebrizes. We also interpret this proof as a ``win-win'' algorithm for $\MissingString$ in the algebraic decision tree model of~\cite{DBLP:journals/toct/AaronsonW09}.

\subsection{Some Preliminaries}
In this section, when we relativize space-bounded computations, we take the query tape into account when measuring space complexity. That is, a $\SPACE^A[s(n)]$ machine can only make queries of length at most $s(n)$ to its oracle $A$. The class $\PSPACE^A$ in our context equals to the class $\PSPACE^{A[\poly]}$ in~\cite{DBLP:journals/toct/AaronsonW09}.

We say that $h(n)$ is a \emph{sub-half-exponential} function, if $h(h(n)^c) \le 2^n$ for every constant $c\ge 1$. We say that $h(n)$ is \emph{nice}, if there is a deterministic algorithm that on input $n$, outputs the value of $h(n)$ in $\poly(n)$ time.

We use the notation $\langle P, V\rangle(x)$ to denote the outcome of the interactive protocol with prover $P$, verifier $V$, and common input $x$ (which is a random variable). We use $\IP^A$ to denote the class of languages that can be decided by an interactive protocol where the prover is computationally unbounded and the verifier is a randomized polynomial-time machine with oracle access to $A$. Formally, $L\in\IP^A$ if there exists a randomized polynomial-time verifier $V$ with oracle access to $A$ such that for every $x\in \{0, 1\}^*$:
\begin{itemize}
    \item If $x \in L$, then there is an (unbounded) prover $P$ such that $\Pr[\langle P, V\rangle(x)\text{ accepts}] = 1$.
    \item If $x\not\in L$, then for every (unbounded) prover $P$, $\Pr[\langle P, V\rangle(x)\text{ accepts}] \le 1/3$.
\end{itemize}

We need the algebrizing version of $\IP = \PSPACE$~\cite{LundFKN92, Shamir92}.
\begin{theorem}[{\cite{DBLP:journals/toct/AaronsonW09}}]\label{thm: IP = PSPACE algebrizes}
	For every oracle $A$ and every multilinear extension $\widetilde{A}$ of $A$, 
	\[\PSPACE^A \subseteq \IP^{\widetilde{A}}.\]
    Moreover, the honest $\IP$ prover can be implemented in $\PSPACE^{\widetilde A}$.
\end{theorem}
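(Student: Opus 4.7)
The plan is to mimic Shamir's proof of $\IP = \PSPACE$ but to carry out the arithmetization in a way that is compatible with oracle queries. I would start from a $\PSPACE^A$-complete problem, such as an oracle version of $\lang{TQBF}$ in which the Boolean formula is allowed to have ``oracle gates'' that evaluate $A_m$ on a sub-tuple of its variables; equivalently, one may work directly with the configuration-reachability formula of a $\PSPACE^A$ machine. By a standard padding and reduction argument, deciding membership in any $L \in \PSPACE^A$ reduces in polynomial time to evaluating such an oracle QBF of polynomial size.

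First I would arithmetize the oracle QBF in the usual way over a sufficiently large finite field $\mathbb{F}$ (of size $\poly(n)$): Boolean variables become field variables, $\wedge$ and $\vee$ become multiplication and $1 - (1-\cdot)(1-\cdot)$, $\forall$ and $\exists$ become products and sums over $\{0,1\}$, and a query gate $A(z_1, \dots, z_m)$ becomes the polynomial evaluation $\widetilde{A}_{m,\mathbb{F}}(z_1, \dots, z_m)$. Because $\widetilde{A}_{m,\mathbb{F}}$ is the \emph{multilinear} extension and agrees with $A_m$ on $\{0,1\}^m$, this arithmetized expression evaluates, on Boolean inputs, to the original truth value of the oracle QBF. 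Crucially, each oracle gate contributes degree only $1$ in each of its input coordinates, so the standard Shamir degree-reduction operators $L_{x_i}$ can be inserted in front of every quantifier just as in the oracle-free case, keeping the individual degrees polynomially bounded throughout.

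Next I would run Shamir's sumcheck-style interactive protocol on this arithmetized expression. At each round the prover sends a univariate polynomial of degree $\poly(n)$, the verifier checks consistency with the previous round at a random field point, and at the very end the verifier evaluates the expression at a single point of $\mathbb{F}^\ast$. All the verifier's oracle accesses during this final evaluation are queries to $\widetilde{A}_{m,\mathbb{F}}$ on adaptively chosen points of $\mathbb{F}^m$, which is exactly what is permitted to an $\IP^{\widetilde{A}}$ verifier (each such query costs $O(m\log|\mathbb{F}|)$ time, as stipulated). Soundness is inherited verbatim from Shamir's analysis, since it is a Schwartz--Zippel argument over $\mathbb{F}$ that does not rely on $\widetilde{A}$ being the low-degree extension of a Boolean function.

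For the ``moreover'' clause, the honest prover at each round must compute, for any partial assignment of field points to previously-handled variables, a sum or product over $\{0,1\}$ of the arithmetized expression restricted to the remaining Boolean variables. Each term is a value of an arithmetic formula whose leaves are either constants or single evaluations of $\widetilde{A}_{m,\mathbb{F}}$; such a computation is straightforwardly carried out by a polynomial-space machine making queries of length $O(m\log|\mathbb{F}|) = \poly(n)$ to $\widetilde{A}$, placing the honest prover in $\PSPACE^{\widetilde{A}}$. The main technical obstacle, and the step I would be most careful about, is verifying that oracle gates nested inside quantifiers interact correctly with the degree-reduction operators so that the prover's round messages remain polynomials of degree $\poly(n)$ and the verifier's final test reduces cleanly to a single $\widetilde{A}$-query; once this bookkeeping is done, the rest is a transcription of Shamir's argument.
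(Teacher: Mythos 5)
Your proposal is correct and is essentially the standard Aaronson--Wigderson argument that the paper cites for this theorem (the paper imports it without proof): arithmetize the relativized TQBF with oracle gates replaced by $\widetilde{A}_{m,\mathbb{F}}$, observe that multilinearity keeps all degrees polynomially bounded, and run the LFKN/Shamir protocol, with the honest prover's partial sums computable in $\PSPACE^{\widetilde{A}}$. The only point worth making explicit is that the whole scheme relies on the paper's convention (stated just before the theorem) that $\PSPACE^A$ machines only issue polynomial-length queries, so that the oracle gates have polynomial arity.
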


We remark that \autoref{thm: BFT} can be proved by either $\IP = \PSPACE$, $\MIP = \NEXP$~\cite{DBLP:journals/cc/BabaiFL91}, or the PCP theorem~\cite{DBLP:journals/jacm/AroraS98, DBLP:journals/jacm/AroraLMSS98, DBLP:journals/jacm/Dinur07}. In fact, in \autoref{sec: BFT as algebraic query algorithms}, we will see another proof using PCP of \emph{Proximity}~\cite{DBLP:journals/siamcomp/Ben-SassonGHSV06}.

\subsection{Reviewing the BFT Proof}\label{sec: BFT is robust}

\begin{theorem}\label{thm: BFT}
	Let $h(n)$ be a nice and sub-half-exponential function. For every oracle $A$ and every multilinear extension $\widetilde{A}$ of $A$,
	\[(\E^{\widetilde{A}} \cup \RobMAE^{\widetilde{A}}) \not\subseteq \SIZE^A[h(n)].\]
\end{theorem}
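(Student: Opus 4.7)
The plan is to give a BFT-style win-win, fully relativized, using \autoref{thm: IP = PSPACE algebrizes} wherever BFT would use the plain $\IP=\PSPACE$ identity. Fix any oracle $A$ with multilinear extension $\widetilde A$, and pick a constant $c$ large enough relative to $h$ (its value will be dictated by the sub-half-exponentiality constraints below and by the polynomial blowup of PSPACE-to-TQBF reductions). The case split is on whether $\PSPACE^A \subseteq \SIZE^A[h(n)^c]$.

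\textbf{Case 1 ($\PSPACE^A$ easy).} Here \autoref{thm: IP = PSPACE algebrizes} yields the algebrizing Karp--Lipton collapse $\PSPACE^A \subseteq \MATIME^{\widetilde A}[\poly(h(n)^c)]$: on any $\PSPACE^A$ query, Merlin sends a candidate $h(n)^c$-size $A$-oracle circuit $C$ purporting to compute $\PSPACE^A$, and Arthur runs the algebrizing $\IP^{\widetilde A}=\PSPACE^A$ protocol treating $C$'s claimed outputs as the prover's commitments. Because sub-half-exponentiality forces $h(n)\le 2^{o(n)}$, we have $\poly(h(n)^c)\le 2^{O(n)}$, so $\PSPACE^A \subseteq \MA_\E^{\widetilde A}$. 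I would then define the hard language $L^*$ via a diagonalization that chooses bits at successive inputs so as to beat the majority vote of still-alive $h(n)$-size oracle circuits, carried out in a space class $\SPACE^A[s(n)]$ chosen so that (i) $s(n)$ is large enough for the diagonalization against $h(n)$-size circuits to go through (sub-half-exp-ness guarantees $h(n)\log h(n) < 2^n$, so a diagonal truth table exists), and (ii) $h(s(n))\le 2^{O(n)}$ so that the Karp--Lipton simulation still fits inside $\MA_\E^{\widetilde A}$ (this is precisely where the sub-half-exp identity $h(h(n)^c)\le 2^n$ is essential). I would package the algorithm uniformly in the oracle: on any oracle $B$, Merlin sends a PSPACE-circuit $C$ together with IP transcripts for every $\PSPACE^B$-query issued during the diagonalization, Arthur verifies via the amplified algebrizing IP protocol, and outputs the final simulation bit if every verification passes and $\bot$ otherwise. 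Robustness of this $M$ then follows because, by IP soundness (parallel-amplified to beat a union bound over the $2^{O(n)}$ queries), the only truth table Merlin can ever force on $B$ is the diagonal $L^{*,B}$, which by construction has no $h(n)$-size $B$-oracle circuit; and on oracles $B$ where $\PSPACE^B$ has no small witness circuit, no Merlin message survives verification and robustness holds vacuously. On our Case-1 oracle $A$ the $\MA\cap\coMA$ promise is satisfied, so $L^{*,A} \in \RobMAE^{\widetilde A}\setminus \SIZE^A[h(n)]$.

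\textbf{Case 2 ($\PSPACE^A$ hard).} Take $L\in\PSPACE^A$ witnessing $\PSPACE^A \not\subseteq \SIZE^A[h(n)^c]$ and chase the hardness to $\mathrm{TQBF}^A$ via PSPACE-completeness under log-space reductions; the polynomial blowup of this reduction is exactly what is absorbed by our choice of $c$ via $h(h(n)^c)\le 2^n$, yielding $\mathrm{TQBF}^A \notin \SIZE^A[h(n)]$. Since TQBF is decidable in $O(n)$ space, $\mathrm{TQBF}^A \in \SPACE^A[O(n)] \subseteq \E^A \subseteq \E^{\widetilde A}$. Thus Case 2 puts the hard language directly into $\E^{\widetilde A}$, and no use of $\widetilde A$ beyond $A$ itself is required.

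\textbf{Expected obstacle.} The most delicate step is the Case-1 parameter chasing. There is genuine tension: the space budget $s(n)$ must be large enough for a real diagonalization against $h(n)$-size circuits, yet small enough that the Karp--Lipton simulation of $\SPACE^A[s(n)]$ lands inside $\MA_\E^{\widetilde A}$ rather than the coarser $\MA_\EXP^{\widetilde A}$; balancing these two requirements is the only place the sub-half-exponential hypothesis is used, and it is what both caps the BFT-style argument at sub-half-exponential size and forces the constant $c$ to depend on $h$. A secondary subtlety is verifying the universal robustness of $M$ in the strong sense of \autoref{def:Socratic_alg}, whose quantifier ranges over \emph{all} oracles $B$ rather than only the Case-1 oracle $A$: this requires amplifying per-query IP soundness to survive a union bound over the $2^{O(n)}$ queries made by the simulator and cleanly handling the hard-PSPACE regime in which no Merlin message survives verification so that robustness holds vacuously.
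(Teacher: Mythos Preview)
Your plan follows the same BFT win-win as the paper, and your Case~1 analysis---building a single $\MA_\E$ machine whose soundness comes from the algebrizing $\IP=\PSPACE$ protocol, and checking robustness over \emph{all} oracles~$B$ via IP soundness---is right in spirit. The paper's proof differs only in the case split: it splits on whether $\E^{\widetilde A}\subseteq\SIZE^A[h(n)]$ rather than on $\PSPACE^A$. This makes the ``hard'' case trivial (the witness is already in $\E^{\widetilde A}$) and, in the ``easy'' case, directly gives Merlin a small $A$-oracle circuit for the honest prover, since that prover runs in time $2^{\poly(h(n))}$ with oracle $\widetilde A$ and hence lies in $\E^{\widetilde A}$ after padding.

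Your Case~2, as written, has a genuine quantifier bug. You fix $c$ \emph{before} the case split, then in Case~2 pick some $L\in\PSPACE^A\setminus\SIZE^A[h(n)^c]$ and try to push the hardness down to $\mathrm{TQBF}^A$. But the reduction from $L$ to $\mathrm{TQBF}^A$ has blowup $n\mapsto n^{d_L}$ with $d_L$ depending on $L$, so the contrapositive you need is ``$\mathrm{TQBF}^A\in\SIZE^A[h(n)]\Rightarrow L\in\SIZE^A[h(n^{d_L})\cdot\poly(n)]\subseteq\SIZE^A[h(n)^c]$,'' and the last containment fails for perfectly good sub-half-exponential $h$ (take $h(n)=n$: then you would need $n^{d_L}\le n^c$ with $d_L$ arbitrary). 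The fix is easy---split directly on whether $\mathrm{TQBF}^A\in\SIZE^A[h(n)]$, or do what the paper does and split on $\E^{\widetilde A}$---but your sentence ``the polynomial blowup of this reduction is exactly what is absorbed by our choice of $c$'' is circular as stated.

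A smaller point: in Case~1 you implicitly need the honest IP prover to be computable by a $\PSPACE^A$ (not merely $\PSPACE^{\widetilde A}$) machine so that your assumption on $\PSPACE^A$ yields a small $A$-oracle circuit for it. That is true (evaluating $\widetilde A$ at a non-Boolean point is a $2^m$-term sum over Boolean queries to $A$), but \autoref{thm: IP = PSPACE algebrizes} only promises $\PSPACE^{\widetilde A}$, so you should say this. The paper's choice of splitting on $\E^{\widetilde A}$ sidesteps the issue, since the honest prover is trivially in $\E^{\widetilde A}$.
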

\begin{proof}
	Assuming $\E^{\widetilde{A}} \subseteq \SIZE^A[h(n)]$, we will prove that $\RobMAE^{\widetilde{A}} \not\subseteq \SIZE^A[h(n)]$.
	
	\def\Lhard{L_{\rm hard}}
	\def\Mhard{M_{\rm hard}}
	First, let $\Lhard^A$ denote the language whose truth table on input length $n$ is the lexicographically smallest length-$2^n$ truth table that requires $A$-oracle circuit complexity more than $h(n)$. Clearly, $\Lhard \in \SPACE^A[\tilde{O}(h(n))] \setminus \SIZE^A[h(n)]$. (We stress two details here. First, there is a fixed $\SPACE[\tilde{O}(h(n))]$ oracle machine $\Mhard^{(-)}$ independent of the oracle $A$ such that $\Mhard^A$ computes $\Lhard^A$. Second, $\Mhard^A$ only makes queries of length at most $h(n)$ to the oracle $A$.) Let $(\Lhard')^A := \{(x, b): \Lhard^A(x) = b\}$, then it follows from \autoref{thm: IP = PSPACE algebrizes} and a padding argument that $(\Lhard')^A$ admits an interactive proof $\langle P, V\rangle$ where $V$ runs in randomized $\poly(h(n))$ time and the honest $P$ runs in $2^{\poly(h(n))}$ time; both the verifier and the prover need access to $\widetilde A$. %
	
	Now we consider the following $\RobMAE^{\widetilde A}$ algorithm $M$. The algorithm receives an input $x\in\{0, 1\}^n$ and a proof $\pi$ from Merlin, where $\pi$ consists of a bit $b$, as the purported value of $\Lhard(x)$, and an $A$-oracle circuit $P'$ of $2^n$ size, treated as an $\IP$ prover for $\Lhard$. (Note that the input length of $P'$ is $\poly(h(n))$.) Then, $M$ accepts if and only if $\langle P', V\rangle(x, b)$ accepts, i.e., $P'$ convinces $V$ that $\Lhard(x) = b$. Note that $M$ corresponds to the pair of verifiers $(V_0, V_1)$ where $V_b$ accepts iff $M$ accepts and the first bit of $\pi$ is equal to $b$.
	
	We can see that:\begin{itemize}
		\item First, $M$ is an $(\MA_\E\cap\coMA_\E)^{\widetilde A}$ algorithm that is robust w.r.t.~$h(n)$ on every input length $n$. To see the robustness of $M$, fix an arbitrary oracle $B$ and its multilinear extension $\widetilde B$, and consider a truth table $w$ that has $h(n)$-size $B$-oracle circuits. Then, there exists an input $x \in \{0, 1\}^n$ such that $\Lhard^B(x) \ne w_x$. By the soundness of $V$, it holds that for every circuit $P'$ that Merlin could send,
        \[\Pr[\langle P', V\rangle(x, w_x)\text{ accepts}] \le 1/3,\]
        Hence $V^{\widetilde B}_{w_x}$ rejects $x$.

		\item Second, if $\E^{\widetilde A} \subseteq \SIZE^A[h(n)]$, then $M^{\widetilde A}$ computes a function whose circuit complexity is larger than $h(n)$. This is because there is a prover $P$ running in $2^{\poly(h(n))}$ time with oracle access to $\widetilde A$ such that for every input $x\in\{0, 1\}^n$, $\Pr[\langle P(x, -), V\rangle(x, b)\text{ accepts}] \ge 2/3$. Therefore, under our assumption that $\E^{\widetilde A} \subseteq \SIZE^A[h(n)]$ and noting that the input length of $P$ is $\poly(h(n))$, $P$ can be implemented by an $A$-oracle circuit of size $h(\poly(h(n))) \le 2^n$ and Merlin can send this circuit. It follows that every $x\in\{0, 1\}^n$ is easy and $M^{\widetilde A}$ computes $\Lhard^A$.\qedhere
	\end{itemize}
\end{proof}

\subsection{BFT as Algorithms for Missing-String}\label{sec: BFT as algebraic query algorithms}

In this sub-section, we interpret the lower bound in~\cite{BuhrmanFT98} as algebraic query algorithms for $\MissingString$, more clearly illustrating its win-win analysis.%

\paragraph{PCP of proximity.} Our algebraic query algorithm makes use of \emph{PCPPs} (probabilistically checkable proofs of proximity)~\cite{DBLP:journals/siamcomp/Ben-SassonGHSV06}. Let $L \subseteq \{0, 1\}^* \times \{0, 1\}^*$ be a \emph{pair language} where the first input is called the \emph{explicit} input and the second input is called the \emph{implicit} input. A PCPP verifier for $L$ is a query algorithm $V$ with the following specification:
\begin{itemize}
    \item {\bf Inputs:} $V$ takes $(x, r)$ as inputs where $x$ is the explicit input for $L$ and $r$ is the internal randomness of $V$.
    \item {\bf Oracles:} $V$ has oracle access to the implicit input $y$ and a PCPP proof $\pi$.
    \item {\bf Completeness:} If $(x, y) \in L$, then there exists an honest PCPP proof $\pi$ such that
    \[\Pr_r[V^{y, \pi}(x, r)\text{ accepts}] = 1.\]
    \item {\bf Soundness:} If $y$ is $\delta$-far from the set $\{y': (x, y') \in L\}$, then for every PCPP proof $\pi$,
\[\Pr_r[V^{y, \pi}(x, r)\text{ accepts}] \le 1/3.\]
    Here, $\delta > 0$ is called the \defn{proximity parameter} for the PCPP.
\end{itemize}

We need the following construction of PCPP in~\cite{DBLP:journals/amai/Mie09}:
\begin{theorem}\label{thm: Mie PCPP}
    Let $L$ be a pair language with explicit input length $n$ and implicit input length $K$, such that $L\in\NTIME[T(n+K)]$ for some non-decreasing function $T(\cdot)$. For every constant $\delta > 0$, $L$ admits a PCPP verifier with proximity parameter $\delta$, randomness complexity $|r| = \log T(n) + O(\log\log T(n))$, query complexity $O(1)$, and verification time $\poly(n, \log K, \log T(n+K))$.
    
    Moreover, for every $(x, y) \in L$, given a valid witness $w$ for $(x, y)$, a valid PCPP proof $\pi$ for $(x, y)$ can be computed in deterministic polynomial time.
\end{theorem}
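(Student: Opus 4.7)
The statement is a refined PCPP construction due to Mie, building on the quasilinear-size PCP of Ben-Sasson--Sudan and the PCPP framework of~\cite{DBLP:journals/siamcomp/Ben-SassonGHSV06}. Rather than rederiving it from scratch, I would invoke Mie's construction essentially as a black box; nonetheless, the proof proceeds in roughly three conceptual stages, and it is worth recording what each stage contributes.

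First, reduce the nondeterministic pair language $L$ to a succinct circuit-SAT instance: via standard oblivious simulations of nondeterministic Turing machines, encode $L\in\NTIME[T(n+K)]$ as the question of whether a circuit $C_x$ of size $\tilde O(T(n+K))$ is satisfied by the string $y\circ w$, where $x$ is hardwired and $w$ is the nondeterministic witness. The key point is that $C_x$ admits a \emph{succinct} description derivable from $x$ in $\poly(n,\log T(n+K))$ time, which is what eventually allows verification to be sublinear in $K$ and $T$. Second, arithmetize $C_x$ and construct a quasilinear-size PCP for its satisfiability: following Ben-Sasson--Sudan, embed the computation tableau into a low-degree evaluation table over a small field so that local consistency becomes a low-degree test plus a zero-on-subset test, and upgrade this to a PCPP by appending a low-degree extension of the implicit input $y$ together with a subproof attesting that it is consistent with the claimed satisfying assignment. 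Third, reduce the query complexity to $O(1)$ via Dinur-style gap amplification followed by composition with a constant-query inner PCPP (Hadamard- or long-code-based). The ``moreover'' clause is immediate, since each of these steps---the reduction to circuit-SAT, the low-degree extension of $y\circ w$, and the auxiliary consistency tables---is explicitly computable in $\poly(T(n+K))$ time from the witness.

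The main technical obstacle, and the crux of what Mie achieves beyond the earlier PCPPs of~\cite{DBLP:journals/siamcomp/Ben-SassonGHSV06}, is obtaining randomness $\log T(n) + O(\log\log T(n))$ \emph{and} constant query complexity \emph{simultaneously}, without blowing verification time past $\poly(n,\log K,\log T(n+K))$. Naive composition of a quasilinear outer PCP with a constant-query inner PCPP tends to cost a $\polylog T$ factor in randomness; avoiding this requires a careful alphabet reduction and a recursive composition scheme whose additive randomness overhead sums to only $O(\log\log T(n))$ across all composition levels. For our application this tightness matters, because in \autoref{sec: BFT as algebraic query algorithms} we only pay $\poly$ in the PCPP parameters, so even a $\polylog T$ slack in randomness would inflate the outer bounds unacceptably.
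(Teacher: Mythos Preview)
Your proposal is correct and matches the paper's approach: the paper does not prove this theorem at all, but simply cites it as a black-box result from Mie~\cite{DBLP:journals/amai/Mie09} (the theorem statement is prefaced by ``We need the following construction of PCPP in~\cite{DBLP:journals/amai/Mie09}''). Your instinct to invoke it as a black box is exactly right; the three-stage sketch you provide is extra exposition beyond what the paper offers, but it is accurate and helpful.

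One small caveat on your final paragraph: the claim that a $\polylog T$ slack in randomness ``would inflate the outer bounds unacceptably'' is a bit overstated for the application in \autoref{sec: BFT as algebraic query algorithms}. There, $T = \poly(h(n))$ and we only need the PCPP proof length to be at most $h'(n) = h(n)^C$ for some constant $C$. A quasilinear proof length (which is what the tight randomness bound buys) gives $|\pi| = \tilde O(T) = \poly(h(n))$, so a constant $C$ suffices. With a weaker bound yielding, say, $|\pi| = T^{\polylog T}$, one would need $h'(n) = h(n)^{\polylog h(n)}$, which would force a slightly stronger notion of ``sub-half-exponential'' than the one in the paper (namely $h(h(n)^c) \le 2^n$ for every constant $c$). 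So the tightness is convenient but not strictly indispensable---a quasi-polynomial PCPP would still yield a qualitatively similar theorem with a mildly adjusted hypothesis on $h$.
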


\def\Enc{\mathsf{Enc}}

\paragraph{Algebraic query complexity.} In the \emph{algebraic query model}, the algorithm has query access to the multilinear extension of its input. Thanks to the PCPP, the only property of multilinear extensions we need in this sub-section is that it is a \emph{systematic error-correcting code}.

An error-correcting code with \defn{distance $\delta > 0$} is a function $\Enc: \{0, 1\}^n \to \{0, 1\}^\ell$ such that for every two different $n$-bit inputs $x_1, x_2$, the number of indices $i\in [\ell]$ such that $\Enc(x_1)_i \ne \Enc(x_2)_i$ is at least $\delta\cdot \ell$. The code is \defn{systematic} if the first $n$ bits of $\Enc(x)$ are always equal to $x$ itself.

Fix any systematic error-correcting code $\Enc$ (such as the one that maps an input to its multilinear extension), an algebraic query algorithm for a problem $L$ is an algorithm that receives $\Enc(x)$ as inputs and produces the output $L(x)$. Note that since $\Enc$ is systematic, the algorithm can access its input $x$ in verbatim by looking at the first $n$ bits of $\Enc(x)$.

\def\calX{\mathcal{X}}

Now we show that~\cite{BuhrmanFT98} can be interpreted as win-win algorithms for $\MissingString$ in the algebraic query model:
\begin{theorem}\label{thm: BFT as algebraic query algorithms}
	Let $h(n)$ be a nice and sub-half-exponential function, $h'(n) = h(n)^C$ for some large enough universal constant $C$. Let $\calX_1$ be an instance of $\MissingString(n, h(n))$, and $\calX_2$ be an instance of $\MissingString(h'(n), 2^n)$. There are two (query) algorithms $\calA_1$ and $\calA_2$ which both receives $\Enc(\calX_1)$ and $\Enc(\calX_2)$ as inputs, such that:
	\begin{itemize}
		\item $\calA_1$ is a robust $\MA^\dt$ algorithm with query complexity $\poly(n)$ that outputs a string of length $n$: No malicious prover (Merlin) can convince the algorithm to output a wrong answer for $\calX_1$ except with probability $\le 1/3$.
		\item $\calA_2$ is a deterministic algorithm with query complexity $\poly(h(n))$ that outputs a string of length $h'(n)$.
		\item For every inputs $\calX_1, \calX_2$, either $\calA_1(\calX_1, \calX_2)$ solves the instance $\calX_1$, or $\calA_2(\calX_1, \calX_2)$ solves the instance $\calX_2$.
	\end{itemize}
\end{theorem}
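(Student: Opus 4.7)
The plan is to recast the win-win argument underlying \autoref{thm: BFT} as a pair of algebraic query algorithms for $\MissingString$. Intuitively, $\calA_2$ corresponds to the $\E$-side of that argument (directly producing the canonical hard string $\Lhard$ on a scaled input length), while $\calA_1$ corresponds to the $\RobMAE$-side, with Merlin supplying an $\Enc(\calX_2)$-oracle circuit that implements the honest prover from \autoref{thm: IP = PSPACE algebrizes}. The instances $\calX_1$ and $\calX_2$ take over the roles played by the oracle $A$ in the proof of \autoref{thm: BFT}: $\calX_1$ serves as the oracle with respect to which hardness is measured for the $\MA$ algorithm, while $\calX_2$ plays the role of the candidate source of small circuits.

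I would first define $\calA_2(\calX_1, \calX_2)$ to ignore $\calX_2$, read $\calX_1$ fully using $O(nh(n)) \le \poly(h(n))$ queries to the systematic part of $\Enc(\calX_1)$, and then output the lex-smallest $h'(n)$-bit truth table whose $\calX_1$-oracle circuit complexity exceeds $h(\log h'(n))$, i.e.~$\Lhard^{\calX_1}$ on input length $\log h'(n)$. This enumeration is carried out in space $\tilde O(h(\log h'(n))) \le \poly(h(n))$ by sub-half-exponentiality of $h$, so $\calA_2$ makes no further oracle queries. Next I would define $\calA_1(\calX_1, \calX_2)$: Merlin sends a candidate $s \in \{0,1\}^n$ together with an $\Enc(\calX_2)$-oracle circuit $P'$ of size at most $2^n$ purported to implement the honest $\IP$ prover given by \autoref{thm: IP = PSPACE algebrizes} applied to the pair language $L = \{(s,\calX) : s \notin \calX\} \in \PSPACE$; the verifier $V$ of this protocol, which runs in randomized $\poly(n)$ time and queries $\widetilde{\calX_1}$, is simulated by $\calA_1$ using $P'$ as prover, and $\calA_1$ outputs $s$ iff $V$ accepts. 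Robustness of $\calA_1$ follows directly from the soundness of $\IP=\PSPACE$: whenever $s \in \calX_1$, no prover circuit can cause $V$ to accept with probability exceeding $1/3$.

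The win-win analysis closes the argument. Suppose $\calA_2$ fails on $(\calX_1, \calX_2)$, so that $\Lhard^{\calX_1}$ appears in the list $\calX_2$ at some index $i \in [2^n]$. Then $\Lhard^{\calX_1}$ admits an $\Enc(\calX_2)$-oracle circuit of size $O(n)$ that merely hardwires $i$ and queries $\calX_2[i]$. Since the honest $\PSPACE^{\widetilde{\calX_1}}$ prover for $L$ performs its computation on inputs of length $\poly(n)$ in time $2^{\poly(n)}$, a padding argument combined with the sub-half-exponential identity $h(h(n)^C) \le 2^n$ lets one convert this small circuit for $\Lhard^{\calX_1}$ into an $\Enc(\calX_2)$-oracle circuit of size at most $2^n$ for the prover, which Merlin then sends as $P'$; the verifier accepts a valid $s \notin \calX_1$ and $\calA_1$ succeeds. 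The main obstacle lies in this final lifting: one must verify that $\Lhard^{\calX_1}$ is $\PSPACE^{\widetilde{\calX_1}}$-hard enough (up to polynomial padding) for its circuit complexity to control the prover's circuit complexity, and then choose the exponent $C$ in $h'(n) = h(n)^C$ large enough to absorb all the polynomial blowups from the $\IP=\PSPACE$ simulation and the encoding overhead, mirroring the half-exponential bookkeeping appearing in the proof of \autoref{thm: BFT}.
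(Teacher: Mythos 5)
Your high-level plan (recasting the Buhrman--Fortnow--Thierauf win-win as a pair of query algorithms, with $\calA_2$ playing the ``$\E$ is hard'' branch and $\calA_1$ the $\MA$ branch) matches the paper's intent, but the step that closes your win-win has a genuine gap. The only consequence of ``$\calA_2$ fails'' is that the single $h'(n)$-bit string output by $\calA_2$ appears somewhere in $\mathcal{X}_2$, which yields a size-$O(n)$ $\mathsf{Enc}(\mathcal{X}_2)$-oracle circuit for \emph{that one string}. You chose that string to be $L_{\rm hard}^{\mathcal{X}_1}$ at input length $\log h'(n)$, and then assert that its small circuit can be ``converted'' into a $2^n$-size circuit for the honest $\IP$ prover. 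No such conversion exists: $L_{\rm hard}$ is merely the lexicographically first hard truth table, not a function to which the prover's computation reduces, so a small circuit for it says nothing about the prover's circuit complexity. In the original BFT argument this step is powered by the global hypothesis $\E^{\widetilde A}\subseteq\SIZE^A[h(n)]$ on \emph{all} input lengths, which in particular applies to the prover itself (an $\E$ computation after padding); in the query-algorithm setting you only learn that one specific truth table is easy, and the global hypothesis cannot be recovered from that. The obvious repair---have $\calA_2$ output the prover's truth table so that its membership in $\mathcal{X}_2$ hands Merlin the needed circuit directly---is blocked by a length mismatch: the honest prover of the algebrizing $\IP=\PSPACE$ protocol is a function on $\poly(n)$-bit inputs, so its truth table has length $2^{\poly(n)}\gg h'(n)$ and cannot be a string of $\mathcal{X}_2$.

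The paper sidesteps interactive proofs for exactly this reason and uses PCPs of proximity instead. There, the certificate passed from the failure of $\calA_2$ to $\calA_1$ is a \emph{static, short} string: $\calA_2$ outputs the PCPP proof $\pi$ (of length $\poly(h(n))\le h'(n)$) certifying that the lexicographically smallest missing string $a^*$ of $\mathcal{X}_1$ satisfies $(a^*,\mathsf{Enc}(\mathcal{X}_1))\in L$. If $\calA_2$ fails, then $\pi$ sits at some index $j$ of $\mathcal{X}_2$; Merlin sends $(a^*,j)$, and $\calA_1$ runs the $\poly(n)$-time, $O(1)$-query PCPP verifier against the oracles $\mathsf{Enc}(\mathcal{X}_1)$ and $\mathcal{X}_2[j]$, with robustness following from PCPP soundness (this also only needs $\mathsf{Enc}$ to be a systematic error-correcting code, whereas your route needs the full multilinear-extension structure). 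If you wish to keep an $\IP$-based $\calA_1$, you must redesign $\calA_2$ so that the object it plants in $\mathcal{X}_2$ is something Merlin and the verifier can actually exploit; as written, your choice of $L_{\rm hard}^{\mathcal{X}_1}$ leaves $\calA_1$ with no way to establish completeness.
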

\begin{proof}
	Let $L$ be the language consisting of $(a, \Enc(\calX))$, where $a\in\{0, 1\}^n$ is the explicit input, $\calX$ is a $\MissingString(n, h(n))$ instance and $\Enc(\calX)$ is the implicit input, and $(a, \Enc(\calX))\in L$ if and only if $a$ is the lexicographically smallest missing string of $\calX$. Let $\delta$ be the distance of $\Enc$, and $V$ be the PCPP verifier for $L$ with robustness parameter $\delta$ as specified in \autoref{thm: Mie PCPP}.

	Let $a^*$ be the lexicographically smallest missing string of $\calX_1$, and $\pi$ be the PCPP proof for $(a^*, \Enc(\calX_1))\in L$. If $C$ is a large enough constant, then $|\pi| \le h'(n)$. The algorithm $\calA_2$ simply outputs $\pi$ (padded with zeros if $|\pi| < h'(n)$), which can be computed in deterministic $\poly(h(n))$ time. Clearly, if $\pi$ does not appear in $\calX_2$, then $\calA_2(\calX_1, \calX_2)$ solves the instance $\calX_2$. %

	The $\MA^\dt$ algorithm $\calA_1$ receives a proof from Merlin, which contains a string $a\in\{0, 1\}^n$ and an index $j \in [2^n]$. The intended meaning is that $a$ is the lexicographically smallest missing string of $\calX_1$, and that $\calX_2[j]$ (the $j$-th string in $\calX_2$) is a valid PCPP proof that $(a, \Enc(\calX_1)) \in L$. The $\MA^\dt$ algorithm runs the PCPP verifier $V^{\Enc(\calX_1), \calX_2[j]}(a)$ and accepts if and only if the PCPP verifier accepts. Clearly, if $\pi$ appears in $\calX_2$, then $\calA_1(\calX_1, \calX_2)$ solves the instance $\calX_1$. Moreover, the soundness of the PCPP verifier implies that no malicious Merlin can convince $\calA_1$ to output a wrong answer except with a small probability: if $a$ is not the lexicographically smallest missing string of $\calX_1$, then $\Enc(\calX_1)$ is $\delta$-far from $\{\Enc(\calX): (a, \Enc(\calX)) \in L\}$, hence the verifier rejects with high probability.
\end{proof}

\listoffixmes

\end{document}